\documentclass{llncs}

\usepackage[usenames,dvipsnames]{color}

\usepackage{latexsym}
\usepackage{amsxtra} 
\usepackage{amssymb}
\usepackage{amsmath}
\usepackage{pslatex}
\usepackage{epsfig}
\usepackage{wrapfig}

\usepackage{algorithm}
\usepackage{algorithmicx}
\usepackage[noend]{algpseudocode}

\usepackage[draft]{commenting}

\pagestyle{plain}

%\newtheorem{definition}{Definition}
%\newtheorem{lemma}{Lemma}
%\newtheorem{theorem}{Theorem}
%\newtheorem{proposition}{Proposition}
%\newtheorem{corollary}{Corollary}
%\newtheorem{conjecture}{Conjecture}
%\newtheorem{claim}{Claim}
%\newtheorem{property}{Property}

% Marius commands

% semantic stuff

\renewcommand{\vec}[1]{{\bf {#1}}}

% alias

% definitions

\newcommand{\defequiv}{\stackrel{\scriptscriptstyle{def}}{\equiv}}

% algorithms
\newcommand{\sig}{\mathtt{sig}}

% separation logic
\newcommand{\true}{\mathbf{true}}
\newcommand{\false}{\mathbf{false}}
\newcommand{\emp}{\mathbf{emp}}
\newcommand{\nil}{\mathbf{nil}}
\newcommand{\head}{\mathbf{head}}
\newcommand{\tail}{\mathbf{tail}}
\newcommand{\form}{\mathbf{form}}
\newcommand{\port}{\mathbf{port}}

\newcommand{\prm}{\mathbf{par}}
\newcommand{\alloc}{\mathbf{alloc}}
\newcommand{\subtree}[2]{{{#1}}_{|_{{#2}}}}

% recursive structures
\newcommand{\DLL}{\mathtt{DLL}}
\newcommand{\TLL}{\mathtt{TLL}}
\newcommand{\TREE}{\mathtt{TREE}}
\newcommand{\ax}{\mathtt{a}}
\newcommand{\bx}{\mathtt{b}}
\newcommand{\cx}{\mathtt{c}}
\newcommand{\dx}{\mathtt{d}}

% cardinal, degree, ...

%\newcommand{\gcd}{\mbox{gcd}}

\newcommand{\len}[1]{{|{#1}|}}
\newcommand{\card}[1]{{|\!|{#1}|\!|}}

% automata and languages

\newcommand{\lang}[1]{{\mathcal L}({#1})}

\newcommand{\arrow}[2]{\xrightarrow{{\scriptstyle #1}}_{{\scriptstyle #2}}}

% numbers
\newcommand{\nat}{{\bf \mathbb{N}}}

% powerset

% by definition

% projection, lifting

% forward, backward

% equation figures

% transitions and SOS rules

%\newcommand{\trans}[3]{{#1} \stackrel{#2}{\longrightarrow} {#3}}

% comments, paragraphs

\renewcommand{\paragraph}[1]{\noindent\emph{#1}}

% theories

% proofs
\newif\ifLongVersion\LongVersiontrue

\definecolor{darkgreen}{rgb}{0,0.6,0}
\declareauthor{ri}{Radu}{blue}
\declareauthor{ar}{Adam}{red}
\declareauthor{tv}{Tomas}{darkgreen}
\declareauthor{tv2}{Tomas}{magenta}

%%%%%%%%%%%%%%%%%%%%%%%%%%%%%%%%%%%%%%%%%%%%%%%%%%%%%%%%%%%%%%%%%%%%%%%%%%%%%%%
\begin{document}
%%%%%%%%%%%%%%%%%%%%%%%%%%%%%%%%%%%%%%%%%%%%%%%%%%%%%%%%%%%%%%%%%%%%%%%%%%%%%%%

\title{Deciding Entailments in Inductive Separation Logic with Tree
Automata\vspace*{-4.5mm}}

\author{Radu Iosif \and Adam Rogalewicz \and Tom\'{a}\v{s}~Vojnar\vspace*{-1.5mm}}
  
\institute{CNRS/Verimag, France and FIT BUT, Czech Republic\\
\texttt{radu.iosif@imag.fr},
$\{$\texttt{rogalew,vojnar}$\}$\texttt{@fit.vutbr.cz}\vspace*{-5mm}} \maketitle

\begin{abstract}Separation Logic (SL) with inductive definitions is a natural
formalism for specifying complex recursive data structures, used in
compositional verification of programs manipulating such structures. The key
ingredient of any automated verification procedure based on SL is the
decidability of the entailment problem. In this work, we reduce the entailment
problem for a non-trivial subset of SL describing trees (and beyond) to the
language inclusion of tree automata (TA). Our reduction provides tight
complexity bounds for the problem and shows that entailment in our fragment is
EXPTIME-complete. For practical purposes, we leverage from recent advances in
automata theory, such as inclusion checking for non-deterministic TA avoiding
explicit determinization. We implemented our method and present promising
preliminary experimental results.\end{abstract}

%%%%%%%%%%%%%%%%%%%%%%%%%%%%%%%%%%%%%%%%%%%%%%%%%%%%%%%%%%%%%%%%%%%%%%%%%%%%%%%
\vspace*{-8mm}\section{Introduction}\vspace*{-2mm}
%%%%%%%%%%%%%%%%%%%%%%%%%%%%%%%%%%%%%%%%%%%%%%%%%%%%%%%%%%%%%%%%%%%%%%%%%%%%%%%

Separation Logic (SL) \cite{reynolds02} is a logical framework for describing
recursive mutable data structures. The attractiveness of SL as a specification
formalism comes from the possibility of writing higher-order {\em inductive
definitions} that are natural for describing the most common recursive data
structures, such as singly- or doubly-linked lists (SLLs/DLLs), trees, hash maps
(lists of lists), and more complex variations thereof, such as nested and
overlaid structures (e.g.\ lists with head and tail pointers, skip-lists, trees
with linked leaves, etc.). In addition to being an appealing specification tool,
SL is particularly suited for compositional reasoning about programs. Indeed,
the principle of {\em local reasoning} allows one to verify different elements
(functions, threads) of a program, operating on disjoint parts of the memory,
and to combine the results a-posteriori, into succinct verification conditions.

However, the expressive power of SL comes at the price of
undecidability \cite{Brotherston:2010}. To avoid this problem, most SL
dialects used by various tools (e.g.\ \textsc{Space Invader}
\cite{spaceinvader}, \textsc{Predator} \cite{predator}, or
\textsc{Infer} \cite{infer}) use hard-coded predicates, describing
SLLs and DLLs, for which entailments are, in general, tractable
\cite{cook-haase-ouaknine-parkinson-worell11}. For graph structures of
bounded tree width, a general decidability result was presented in
\cite{Iosif13}.  Entailment in this fragment is EXPTIME-hard as proved
in \cite{fossacs14}.

%!!!!!!!!!!!!!!!!!!!!!!!!!!!!!!!!!!!!!!!!!!!!!!!!
\enlargethispage{5mm}
%!!!!!!!!!!!!!!!!!!!!!!!!!!!!!!!!!!!!!!!!!!!!!!!!

In this paper, we present a novel decision procedure for a restriction of the
decidable fragment of \cite{Iosif13} that describes recursive structures in which
all edges are {\em local} with respect to a spanning tree.  Examples of such
structures include SLLs, DLLs, trees and trees with parent pointers, etc. For
structures outside of this class (e.g.\ skip-lists or trees with linked leaves),
our procedure is sound (the answer is positive if the entailment holds), but not
complete (the answer might be negative and the entailment could still hold). In
terms of program verification, such a lack of completeness in the entailment
prover can cause non-termination of the fixpoint, but will not provide unsound
results.

The method described in the paper belongs to the class of {\em
automata-theoretic} decision techniques: we translate an entailment problem
$\varphi \models \psi$ into a language inclusion problem $\lang{A_\varphi}
\subseteq \lang{A_\psi}$ for tree automata (TA)  $A_\varphi$ and $A_\psi$ that
(roughly speaking) encode the sets of models of $\varphi$ and $\psi$,
respectively. Yet, a na\"{\i}ve translation of the inductive definitions of SL
into TA encounters a {\em polymorphic representation} problem: the same
structure can be defined in several different ways, and TA simply mirroring the
definition will not report the entailment. For example, DLLs with selectors
$\mathtt{next}$ and $\mathtt{prev}$ for the next and previous nodes,
respectively, can be described by a~forward unfolding of the inductive
definition $\DLL(head,prev,tail,next) \equiv \exists x.~ head \mapsto (x,prev) *
\DLL(x,head,tail,next)$ as well as by a backward unfolding of the definition
$\DLL_{rev}(head, prev,tail,next) \equiv \exists x.~ tail \mapsto (next,x) *
\DLL_{rev}(head, prev,x,tail)$. Also, one can define a DLL starting with a node
in the middle and unfolding backward to the left of this node and forward to the
right: $\DLL_{mid}(head,prev,tail,next) \equiv \exists x,y,z ~.~ x \mapsto (y,z)
* \DLL(y,x,tail,next) * \DLL_{rev}(head, prev, z, x)$.  The entailments
$\DLL(\ax, \bx, \cx, \dx) \models \DLL_{rev}(\ax, \bx, \cx, \dx)$ and
$\DLL_{mid}(\ax, \bx, \cx, \dx) \models \DLL(\ax, \bx, \cx, \dx)$ hold, but a
na\"{\i}ve structural translation to TA might not detect this fact.  To bridge
this gap, we define a closure operation on TA, called {\em canonical rotation},
which automatically adds all possible representations of a given inductive
definition, encoded as a tree automaton.

Our reduction from SL to TA provides tight complexity bounds showing that
entailment in our fragment is EXPTIME-complete. Moreover, from a practical point
view, we implemented our method using the \textsc{Vata} \cite{libvata} tree
automata library, which leverages from recent advances in non-deterministic
language inclusion for TA \cite{anti-chains}, and obtained quite encouraging
experimental results.

%------------------------------------------------------------------------------

\vspace*{2mm}\paragraph{Related work.} Given the large body of literature on
decidable logics for describing mutable data structures, we need to restrict
this section to the related work that focuses on SL \cite{reynolds02}. The first
(proof-theoretic) decidability result for SL on a restricted fragment defining
only SLLs was reported in \cite{berdine-calcagno-ohearn04}, which describe a
co-NP algorithm.  The full basic SL without recursive definitions, but with the
magic wand operator was found to be undecidable when interpreted {\em in any
memory model} \cite{Brotherston:2010}. A PTIME entailment procedure for SL with
list predicates is given in \cite{cook-haase-ouaknine-parkinson-worell11}. Their
method was extended to reason about nested and overlaid lists in \cite{enea13}.
More recently, entailments in an important SL fragment with hardcoded SLL/DLL
predicates were reduced to Satisfiability Modulo Theories (SMT) problems,
leveraging from recent advances in SMT technology \cite{Wies13}.

%!!!!!!!!!!!!!!!!!!!!!!!!!!!!!!!!!!!!!!!!!!!!!!!!
\enlargethispage{4mm}
%!!!!!!!!!!!!!!!!!!!!!!!!!!!!!!!!!!!!!!!!!!!!!!!!

Closer to our work on SL with user-provided {\em inductive
  definitions} is the fragment used in the tool \textsc{Sleek}, which
implements a semi-algorithmic entailment check, based on unfoldings
and unifications \cite{sleek}. Our previous work \cite{Iosif13} gives
a general decidability result for SL with inductive definitions
interpreted over graph-like structures, under several necessary
restrictions. The work \cite{fossacs14} provides a rather complete
picture of complexity for the entailment in various SL fragments with
inductive definitions, including EXPTIME-hardness of the decidable
fragment of \cite{Iosif13}, but provides no upper bound. The
EXPTIME-completness result in this paper provides an upper bound for
the fragment of {\em local} definitions, and strengthens the
EXPTIME-hard lower bound as well.

%%%%%%%%%%%%%%%%%%%%%%%%%%%%%%%%%%%%%%%%%%%%%%%%%%%%%%%%%%%%%%%%%%%%%%%%%%%%%%%
\vspace*{-2mm}\section{Definitions}\vspace*{-1.5mm}
%%%%%%%%%%%%%%%%%%%%%%%%%%%%%%%%%%%%%%%%%%%%%%%%%%%%%%%%%%%%%%%%%%%%%%%%%%%%%%%

The set of natural numbers is denoted by $\nat$. For a finite set $S$, we denote
by $\card{S}$ its cardinality. If $\vec{x} = \langle x_1, \ldots, x_n \rangle$
and $\vec{y} = \langle y_1, \ldots, y_m \rangle$ are tuples, $\vec{x} \cdot
\vec{y} = \langle x_1, \ldots, x_n, y_1, \ldots, y_m \rangle$ denotes their
concatenation, and $(\vec{x})_i = x_i$ denotes the $i$-th element of $\vec{x}$.
If $x$ is an element, we denote by $\vec{x}_{\neg x}$ the result of projecting
out all occurrences of $x$ from $\vec{x}$, and if $S$ is a set, we denote by
$\vec{x}_{\neg S}$ the result of projecting out all occurrences of some element
$s \in S$ from $\vec{x}$. The intersection $\vec{x} \cap S$ denotes the tuple in
which only elements of $S$ are maintained. The union (intersection) $\vec{x}
\cup \vec{y}$ ($\vec{x} \cap \vec{y}$) denotes the set of elements from either
(both) $\vec{x}$ and $\vec{y}$. For two tuples of variables $\vec{x} = \langle
x_1,\ldots,x_k \rangle$ and $\vec{y} = \langle y_1,\ldots,y_k \rangle$, of equal
length, we write $\vec{x}=\vec{y}$ for $\bigwedge_{i=1}^k x_i = y_i$. Moreover,
for a single variable $z$, we write $\vec{x}=z$ for $\bigwedge_{i=1}^k x_i = z$.

For a partial function $f : A \rightharpoonup B$, and $\bot\notin B$,
we denote $f(x)=\bot$ the fact that $f$ is undefined at some point $x
\in A$. The domain of $f$ is denoted $dom(f) = \{x \in A \mid f(x)
\neq \bot\}$, and the image of $f$ is denoted as $img(f) = \{ y\in B
\mid \exists x\in A ~.~ f(x)=y\}$. By $f : A \rightharpoonup_{fin} B$,
we denote any partial function whose domain is finite. Given two
partial functions $f,g$ defined on disjoint domains, we denote by $f
\oplus g$ their union.

%==============================================================================
\vspace*{-2mm}\subsection{Stores, Heaps, and States}\vspace*{-1mm}
%==============================================================================

We consider $Var = \{x,y,z,\ldots\}$ to be a countably infinite set of
{\em variables} and $\nil \in Var$ be a designated variable.
\ifLongVersion
We assume a total lexicographical ordering on the set
of variables $Var$, and for any set $S \subseteq Var$, we denote by
$minlex(S)$ the unique minimal element with respect to this ordering.
\fi
Let $Loc$ be a countably infinite set of locations, $null \in Loc$ be
a designated location, and $Sel = \{1,\ldots,\mathcal{S}\}$, for some
given $\mathcal{S} > 0$, be a finite set of natural numbers, called
{\em selectors} in the following.

\begin{definition}\label{state} A \emph{state} is a pair $\langle s, h \rangle$
where $s : Var \rightharpoonup Loc$ is a~ partial function mapping pointer
variables into locations such that $s(\nil)=null$, and $h : Loc
\rightharpoonup_{fin} Sel \rightharpoonup_{fin} Loc$ is a finite partial
function such that (i) $null \not\in dom(h)$ and (ii) for all $\ell\in dom(h)$
there exists $k \in Sel$ such that $(h(\ell))(k) \neq \bot$.\end{definition}

%!!!!!!!!!!!!!!!!!!!!!!!!!!!!!!!!!!!!!!!!!!!!!!!!
\enlargethispage{2mm}
%!!!!!!!!!!!!!!!!!!!!!!!!!!!!!!!!!!!!!!!!!!!!!!!!

Given a state $S=\langle s, h \rangle$, $s$ is called the \emph{store}
and $h$ the \emph{heap}. For any $k \in Sel$, we write $\ell
\arrow{k}{S} \ell'$ for $(h(\ell))(\sigma) = \ell'$. We call a triple
$\ell \arrow{k}{S} \ell'$ an {\em edge} of $S$. Sometimes we omit the
subscript when it is obvious from the context. Let $Img(h) =
\bigcup_{\ell \in Loc} img(h(\ell))$ be the set of locations which are
destinations of some edge in $h$. A location $\ell \in Loc$ is said to
be \emph{allocated} in $\langle s, h \rangle$ if $\ell \in dom(h)$
(i.e.\ it is the source of an edge), and \emph{dangling} in $\langle
s, h \rangle$ if $\ell \in [img(s) \cup Img(h)] \setminus dom(h)$,
i.e.\ it is referenced by a store variable, or reachable from an
allocated location in the heap, but it is not allocated in the heap
itself. The set $loc(S) = img(s) \cup dom(h) \cup Img(h)$ is the set
of all locations either allocated or referenced in the state $S$.

For any two states $S_1 = \langle s_1,h_1 \rangle$ and $S_2 = \langle
s_2,h_2 \rangle$, such that (i) $s_1$ and $s_2$ agree on the
evaluation of common variables ($\forall x \in dom(s_1) ~\cap~
dom(s_2) ~.~ s_1(x) = s_2(x)$), and (ii)~$h_1$ and $h_2$ have disjoint
domains ($dom(h_1) ~\cap~ dom(h_2) = \emptyset$), we denote by $S_1
\uplus S_2 = \langle s_1 \cup s_2, h_1 \oplus h_2 \rangle$ the {\em
  disjoint union} of $S_1$ and $S_2$. The disjoint union is undefined
if one of the above conditions does not hold.

\ifLongVersion
For a state $S = \langle s, h \rangle$ and a location $\ell \in
dom(h)$, the {\em neighbourhood of $\ell$ in $S$} is a state denoted
as $S_{\langle\ell\rangle} = \langle s_\ell, h_\ell \rangle$, where:
\begin{itemize}
\item $h_\ell = \{\langle \ell, \lambda k ~.~ \mbox{if}~ \ell
  \arrow{k}{S} \ell' ~\mbox{then}~ \ell' ~\mbox{else}~ \bot \rangle\}$
\item $s_\ell(x) = ~\mbox{if}~ s(x) \in dom(h_\ell) \cup img(h_\ell)
  ~\mbox{then}~ s(x) ~\mbox{else}~ \bot$
\end{itemize}
Intuitively, the neighbourhood of an allocated location $\ell$ is the
state in which only $\ell$ is allocated and all other locations
$\ell'$ for which there is an edge $\ell \arrow{k}{S} \ell'$ are
dangling.
\fi

%==============================================================================
\vspace*{-2mm}\subsection{Trees}\vspace*{-1mm}
%==============================================================================

Let $\Sigma$ be a countable alphabet, and $\nat^*$ be the set of
sequences of natural numbers. Let $\epsilon \in \nat^*$ denote the
empty sequence, and $p.q$ denote the concatenation of two sequences
$p,q\in \nat^*$. We say that $p$ is a {\em prefix} of $q$ if $q =
p.q'$, for some $q'\in\nat^*$. 

\ifLongVersion
The total order on $\nat$ extends to a total lexicographical order on
$\nat^*$, and for any set $S \subseteq \nat^*$, we denote by
$minlex(S)$\footnote{The distinction with the lexicographical order on
  $Var$ is clear from the type of elements in $S$.} the
lexicographically minimal element of $S$.
\fi

A \emph{tree} $t$ over $\Sigma$ is a~ finite partial function $t :
\nat^* \rightharpoonup_{fin} \Sigma$ such that $dom(t)$ is a finite
prefix-closed subset of $\nat^*$ and, for each $p \in dom(t)$ and $i
\in \nat$, we have $t(p.i)\neq\bot$ only if $t(p.j) \neq \bot$, for
all $0 \leq j < i$. The sequences $p \in dom(t)$ are called {\em
  positions} in the following. Given two positions $p, q \in dom(t)$,
we say that $q$ is the $i$-th successor (child) of $p$ if $q = p.i$,
for $i\in\nat$.
%
%% Also $q$ is a successor of $p$, or equivalently, $p$ is the parent of % $q$,
%denoted $p= parent(q)$ if $q = p.i$, for some $i \in \nat$.
%

We denote by $\mathcal{D}(t) = \{-1,0,\ldots,N\}$ the {\em direction
  alphabet} of $t$, where $N = \max\{i \in \nat ~|~ \exists p \in
\nat^* ~.~ p.i \in dom(t)\}$, and we let $\mathcal{D}_+(t) =
\mathcal{D}(t) \setminus \{-1\}$. By convention, we have $(p.i).(-1) =
p$, for all $p \in \nat^*$ and $i \in \mathcal{D}_+(t)$. Given a tree
$t$ and a position $p\in dom(t)$, we define the {\em arity} of the
position $p$ as $\#_t(p)=\max\{d \in \mathcal{D}_+(t) \mid p.d \in
dom(t)\}+1$, and the {\em subtree of $t$ rooted at $p$} as
$\subtree{t}{p}(q) = t(p.q)$, for all $q \in \nat^*$.

%!!!!!!!!!!!!!!!!!!!!!!!!!!!!!!!!!!!!!!!!!!!!!!!!!!!!!!!!!!!!!!!!!!!!!!!!!!!!!!
\ifLongVersion 
%!!!!!!!!!!!!!!!!!!!!!!!!!!!!!!!!!!!!!!!!!!!!!!!!!!!!!!!!!!!!!!!!!!!!!!!!!!!!!!

A {\em path} in $t$, from $p_1$ to $p_k$, is a sequence $p_1,p_2,\dots,p_k\in
dom(t)$ of pairwise distinct positions, such that, for all $1 \leq i < k$, there
exist $d_i \in \mathcal{D}(t)$ such that $p_{i+1}=p_i.d_i$. Notice that a path
in the tree can also link sibling nodes, not just ancestors to their descendants
or vice versa.  However, a path may not visit the same tree position twice.

Given a position $p\in dom(t)$, we denote by $\subtree{t}{p}$ the subtree of $t$
starting at position $p$, i.e.\ $t_{\mid p}(q) = t(p.q)$, for all $q \in
\nat^*$.  

%!!!!!!!!!!!!!!!!!!!!!!!!!!!!!!!!!!!!!!!!!!!!!!!!!!!!!!!!!!!!!!!!!!!!!!!!!!!!!!
\fi
%!!!!!!!!!!!!!!!!!!!!!!!!!!!!!!!!!!!!!!!!!!!!!!!!!!!!!!!!!!!!!!!!!!!!!!!!!!!!!!

\begin{definition}\label{Rotation}
Given two trees $t_1, t_2 : \nat^* \rightharpoonup_{fin} \Sigma$, we
say that $t_2$ is a {\em rotation} of $t_1$ denoted by $t_1 \sim_r
t_2$ if and only if $r: dom(t_1) \rightarrow dom(t_2)$ is a bijective
function such that: $\forall p \in dom(t_1) \forall d \in
\mathcal{D}_+(t_1) ~:~ p.d \in dom(t_1) \Rightarrow \exists e \in
\mathcal{D}(t_2) ~.~ r(p.d) = r(p).e$. We write $t_1 \sim t_2$ if
there exists a function $r : dom(t_1) \rightarrow dom(t_2)$ such that
$t_1 \sim_r t_2$.
\end{definition}

\begin{wrapfigure}[6]{r}{60mm}
%\begin{figure}[t]
%\begin{center}

  \vspace*{-9mm}

  \epsfig{file=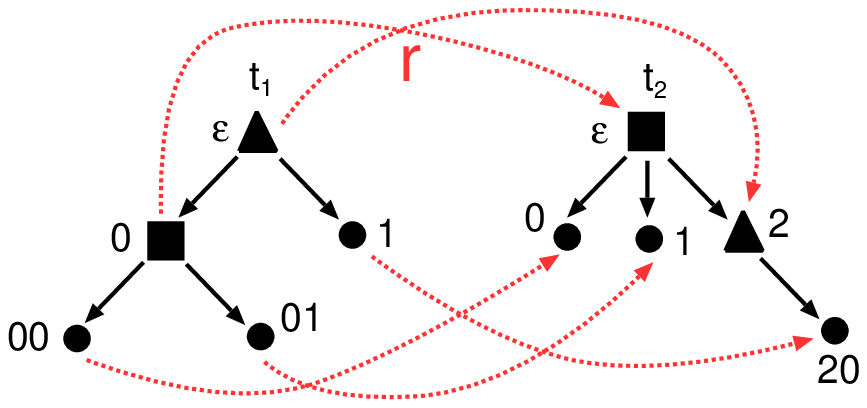, width=50mm}

  \vspace*{-4mm}
 
  \caption{An example of a rotation.}

  \vspace*{-2mm}
 
  \label{fig:exRotation}
 
%\end{center}
%\end{figure}
\end{wrapfigure}

%!!!!!!!!!!!!!!!!!!!!!!!!!!!!!!!!!!!!!!!!!!!!!!!!
\enlargethispage{4mm}
%!!!!!!!!!!!!!!!!!!!!!!!!!!!!!!!!!!!!!!!!!!!!!!!!

An example of a rotation $r$ of a tree $t_1$ to a tree $t_2$ such that
$r(\varepsilon) = 2$, $r(0)=\varepsilon$, $r(1)=20$, $r(00)=0$, and $r(01)=1$ is
shown in Fig.~\ref{fig:exRotation}. Note
that, e.g., for $p = \varepsilon \in dom(t_1)$ and $d = 0 \in
\mathcal{D}_+(t_1)$ where $p.d = \varepsilon.0 \in dom(t_1)$, we get $e=-1 \in
\mathcal{D}(t_2)$ and $r(\varepsilon.0) = 2.(-1) = \varepsilon$.

%!!!!!!!!!!!!!!!!!!!!!!!!!!!!!!!!!!!!!!!!!!!!!!!!!!!!!!!!!!!!!!!!!!!!!!!!!!!!!!
\ifLongVersion
%!!!!!!!!!!!!!!!!!!!!!!!!!!!!!!!!!!!!!!!!!!!!!!!!!!!!!!!!!!!!!!!!!!!!!!!!!!!!!!

\begin{proposition}\label{Rotation:equivalence} The relation $\sim$ is an
equivalence relation.\end{proposition}

\begin{proof} The relation $\sim$ is clearly reflexive as one can choose $r$ as
the identity function. To prove that $\sim$ is transitive, let $t_1, t_2, t_3 :
\nat^* \rightharpoonup_{fin} \Sigma$ be trees such that $t_1 \sim_{r_1} t_2$ and
$t_2 \sim_{r_2} t_3$, and let $r_1 : dom(t_1) \rightarrow dom(t_2)$ and $r_2 :
dom(t_2) \rightarrow dom(t_3)$ be the bijective functions from Def.
\ref{Rotation}, respectively. Then, for all $p \in dom(t_1)$ and $d \in
\mathcal{D}_+(t_1)$ such that $p.d \in dom(t_1)$, there exists $e_1 \in
\mathcal{D}(t_2)$ such that $r_1(p.d)=r_1(p).e_1$. We distinguish two cases:
\begin{enumerate}

  \item If $e_1 \in \mathcal{D}_+(t_2)$, then there exists $e_2 \in
  \mathcal{D}(t_3)$ such that $(r_2 \circ r_1)(p.d) = r_2(r_1(p).e_1)
  = (r_2 \circ r_1)(p).e_2$.  

  \item Otherwise, if $e_1 = -1$, then $r_1(p) = r_1(p.d).e'_1$, for some $e'_1
  \in \mathcal{D}_+(t_2)$. Then there exists $e_2 \in \mathcal{D}(t_3)$ such
  that $(r_2 \circ r_1)(p) = r_2(r_1(p.d).e'_1) = (r_2 \circ r_1)(p.d).e_2$, and
  consequently, $(r_2 \circ r_1)(p.d) = (r_2 \circ r_1)(p).(-1)$.

\end{enumerate}

To show that $\sim$ is symmetric, let $t_1 \sim_r t_2$ be two trees, and let $p
\in dom(t_2)$ and $d \in \mathcal{D}_+(t_2)$ such that $p.d \in dom(t_2)$. We
prove that there exists $e \in \mathcal{D}(t_1)$ such that $r^{-1}(p.d) =
r^{-1}(p).e$, where $r$ is the bijective function from Def. \ref{Rotation}. By
contradiction, suppose that there is no such direction, and since $r^{-1}(p),
r^{-1}(p.d) \in dom(t_1)$, there exists a path $r^{-1}(p) = p_1, \ldots, p_i,
\ldots, p_k = r^{-1}(p.d)$ of length $k > 2$ in $t_1$ such that:\begin{itemize}

  \item $p_{j+1} = p_j.(-1)$, for all $1 \leq j < i$,

  \item $p_{j+1} = p_j.d_j$, for all $i \leq j < k$ and $d_i,\ldots,d_{k-1} \in
  \mathcal{D}_+(t_1)$

\end{itemize} where $p_1, \ldots, p_k$ are distinct positions in $dom(t_1)$.
Then $r(p_1), \ldots, r(p_k)$ are distinct positions in $dom(t_2)$ such
that:\begin{itemize}

  \item $r(p_j) = r(p_{j+1}).e_j$, for all $1 \leq j < i$ and some $e_j \in
  \mathcal{D}(t_2)$,

  \item $r(p_{j+1}) = r(p_j).e_j$, for all $i \leq j < k$ and some $e_j \in 
  \mathcal{D}(t_2)$.

\end{itemize} So there exists a path $p=r(p_1), \ldots, r(p_k)=p.d$ of length $k
> 2$ in $t_2$, which contradicts with the fact that $t_2$ is a tree.
\qed\end{proof}

A rotation function $r$ between two trees $t \sim_r u$ is said to \emph{revert}
two positions $p, q \in dom(t)$, if $q$ is a prefix of $p$ and $r(p)$ is a
prefix of $r(q)$. We show next that the only reversions in $dom(t)$ due to $r$
appear on the (unique) path from the root $\epsilon$ to $r^{-1}(\epsilon)$.

\begin{lemma}\label{rotation-reversion} Let $t,u : \nat^* \rightharpoonup_{fin}
\Sigma$ be two trees, $r : dom(t) \rightarrow dom(u)$ be a bijective function
such that $t \sim_r u$, and $p \in dom(t)$ be a position such that $r(p) =
\epsilon$ is the root of $u$. Then, for all $q \in dom(t)$ and all $0 \leq d <
\#_t(q)$, $r(q.d) = r(q).(-1)$ iff $q.d$ is a prefix of $p$.\end{lemma}

\begin{proof}

``$\Rightarrow$'' Let $q \in dom(t)$ be an arbitrary position such that $r(q.d)
= r(q).(-1)$, for some $0 \leq d < \#_t(q)$. Suppose, by contradiction, that
$q.d$ is not a prefix of $p$. There are two cases:\begin{enumerate}

  \item $p$ is a strict prefix of $q.d$, i.e. there exists a sequence
    $p = p_0, \ldots, p_k = q$, for some $k > 0$, such that, for all
    $0 \leq i < k$, there exists $0 \leq j_i < \#_t(p_i)$ such that
    $p_{i+1} = p_i.j_i$. Then, there exists $\ell \in \mathcal{D}(u)$
    such that $r(q) = r(p_{k-1}.j_{k-1}) = r(p_{k-1}).\ell$. If $\ell
    \geq 0$, we have $r(p_{k-1}) = r(q).(-1) = r(q.d) =
    r(p_{k-1}.j_{k-1}.d)$. Since both $j_{k-1}, d \geq 0$, $p_{k-1}$
    and $p_{k-1}.j_{k-1}.d$ are distinct nodes from $dom(t)$, and we
    reach a contradiction with the fact that $r$ is bijective. Hence,
    $\ell = -1$ is the only possibility. Applying the same argument
    inductively on $p_0, \ldots, p_k$, we find that $r(p)$ is either
    equal or is a descendant of $r(q)$, which contradicts with the
    fact that $r(p)=\epsilon$ is the root of $u$.
    
  \item $p$ is not a prefix of $q.d$, and there exists a common prefix
    $p_0$ of both $p$ and $q.d$. Let $p_0$ be the maximal such prefix.
    Since $p_0$ is a prefix of $q.d$, there exists a sequence $p_0,
    \ldots, p_k = q$, for some $k > 0$, such that, for all $0 \leq i <
    k$ there exists $0 \leq j_i < \#_t(p_i)$ such that $p_{i+1} =
    p_i.j_i$. By the argument of the previous case, we have that $r(q)
    = r(p_k), r(p_{k-1}), \ldots, r(p_0)$ is a strictly descending
    path in $u$, i.e.\ $r(p_i)$ is a child of $r(p_{i+1})$, for all $0
    \leq i < k$. Since $p_0$ is a prefix of $p$, there exists another
    non-trivial sequence $p_0 = p'_0, \ldots, p'_m = p$, for some $m >
    0$, such that, for all $0 \leq i < m$ there exists $0 \leq j'_i <
    \#_t(p'_i)$ and $p'_{i+1} = p'_i.j'_i$. Moreover, since $p_0$ is
    the maximal prefix of $p$ and $q.d$, we have $\{p_1, \ldots, p_k\}
    \cap \{p'_1, \ldots, p'_m \} = \emptyset$. Then $r(p'_1) =
    r(p_0.j'_0) = r(p_0).\ell$, for some $\ell \in \mathcal{D}(u)$. If
    $\ell = -1$, then necessarily $r(p'_1) = r(p_{k-1})$, which
    contradicts with the fact that $r$ is a bijection, since $p'_1
    \neq p_{k-1}$. Then the only possibility is $\ell \geq 0$. By
    applying the same argument inductively on $p'_0, \ldots, p'_m =
    p$, we obtain that $r(p)$ is a strict descendant of $r(p_0)$,
    which contradicts with $r(p)=\epsilon$ being the root of $u$.

\end{enumerate} Since both cases above lead to contradictions, the only
possibility is that $q.d$ is a prefix of $p$. 

\noindent ``$\Leftarrow$'' If $q.d$ is a prefix of $p$, there exists a
non-trivial sequence $q=q_0, q.d=q_1, \ldots, q_k=p$, for some $k >
0$, such that, for all $0 \leq i < k$, there exists $0 \leq j_i <
\#_t(q_i)$ such that $q_{i+1} = q_i.j_i$. Then, $\epsilon = r(p) =
r(q_k) = r(q_{k-1}.j_{k-1}) = r(q_{k-1}).(-1)$. Reasoning inductively,
we obtain that, for all $0 < i \leq k$, $r(q_i) = r(q_{i-1}).(-1)$,
hence $r(q.d) = r(q_1) = r(q_0).(-1) = r(q).(-1)$.  \qed\end{proof}

A rotation function $r$ between two trees $t \sim_r u$ is said to
\emph{revert} two positions $p, q \in dom(t)$, if $q$ is a prefix of
$p$ and $r(p)$ is a prefix of $r(q)$. One can see
\ifLongVersion\else(cf.  Appendix~\ref{app:trees})\fi that the only
reversions in $dom(t)$ due to $r$ appear on the (unique) path from the
root $\epsilon$ to $r^{-1}(\epsilon)$.

%!!!!!!!!!!!!!!!!!!!!!!!!!!!!!!!!!!!!!!!!!!!!!!!!!!!!!!!!!!!!!!!!!!!!!!!!!!!!!!
\fi
%!!!!!!!!!!!!!!!!!!!!!!!!!!!!!!!!!!!!!!!!!!!!!!!!!!!!!!!!!!!!!!!!!!!!!!!!!!!!!!

\begin{definition}\label{SpanningTree} Given a state $S = \langle s, h \rangle$,
a {\em spanning tree} of $S$ is a bijective tree $t: \nat^*
\rightarrow dom(h)$ such that $\forall p \in dom(t) \forall d \in
\mathcal{D}_+(t) ~.~ p.d\in dom(t) \Rightarrow \exists k \in Sel ~.~
t(p) \arrow{k}{S} t(p.d)$. An edge $\ell \arrow{k}{S} \ell'$ is said
to be {\em local} with respect to a spanning tree $t$ iff there exist
$p \in dom(t)$ and $d \in \mathcal{D}(t) \cup \{\epsilon\}$ such that
$t(p)=\ell$ and $t(p.d)=\ell'$. Moreover, $t$ is a~\emph{local
  spanning tree} of $S$ if $t$ is a spanning tree of $S$, and $S$ has
only local edges wrt. $t$.
\end{definition}

\begin{wrapfigure}[11]{r}{40mm}
%\begin{figure}[t]
%\begin{center}

  \vspace*{-10mm}

  \epsfig{file=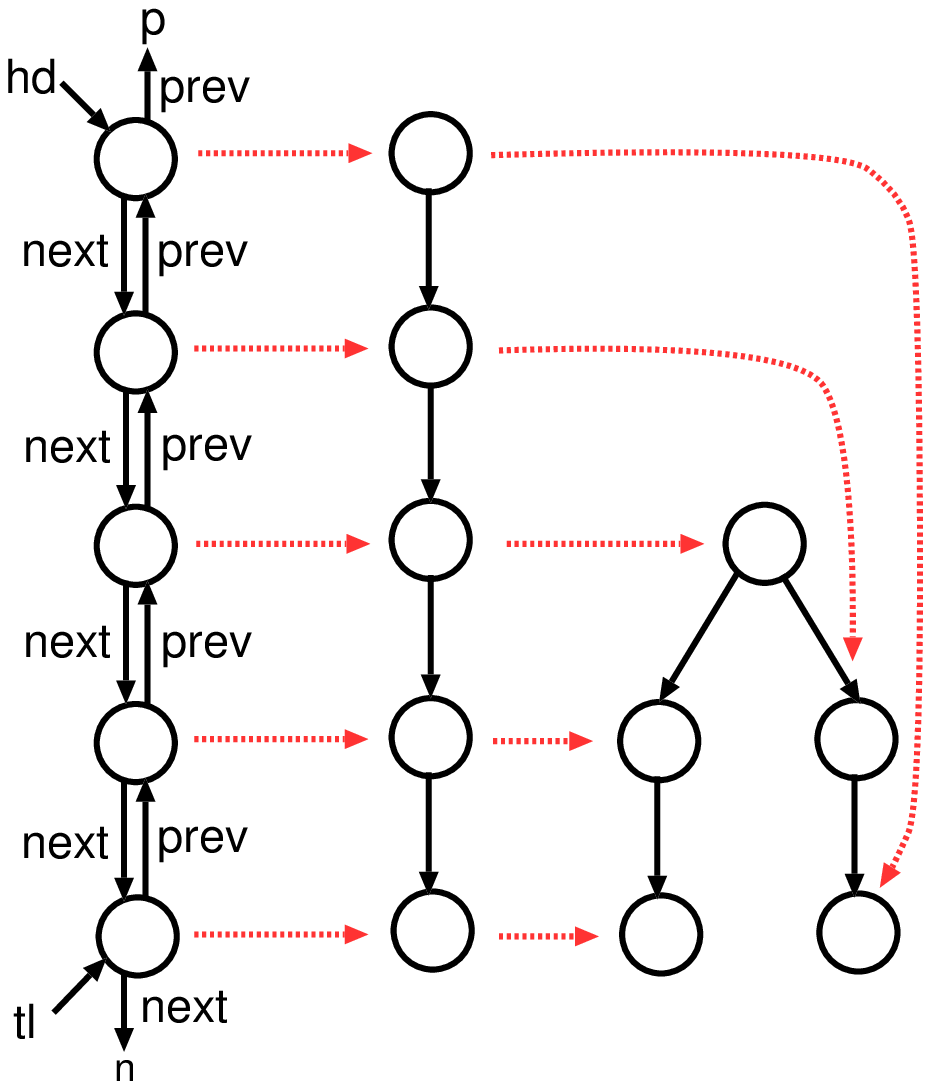, width=35mm}

  \vspace*{-4mm}
 
  \caption{Spanning trees.}

  \vspace*{-2mm}
 
  \label{fig:exSpanningTrees}
 
%\end{center}
%\end{figure}
\end{wrapfigure}

An example of a doubly-linked list and two of its spanning trees is shown in
Fig.~\ref{fig:exSpanningTrees}. Note that both the $\mathtt{next}$ and
$\mathtt{prev}$ edges are local in both cases (they get mapped either to the $0$
or $-1$ direction) and so both of the spanning trees are local.

Notice that a spanning tree covers exactly the set of allocated locations in a
state. Moreover, if a state $S$ has a spanning tree $t$, then every location
$\ell \in loc(S)$, not necessarily allocated, is reachable from $t(\epsilon)$ by
a chain of edges. In other words, the state does not contain garbage nodes.

\begin{lemma}\label{spanning-tree-rotation} Let $S=\langle s, h \rangle$ be a
state and $t$ its spanning tree. If all edges of $S$ are local with respect to
$t$, then for each spanning tree $t'$ of $S$ we have $t \sim_{t'^{-1} \circ t}
t'$, and, moreover, all edges of $S$ are local with respect to $t'$ as well.
\end{lemma}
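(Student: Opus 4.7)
The plan is to first establish the rotation claim $t \sim_r t'$ with $r = t'^{-1} \circ t$, and then use it to derive the locality claim for $t'$. Because the hypothesis supplies information about the edges of $S$ in terms of $t$, whereas the spanning-tree property of $t'$ is phrased in terms of its own parent-child pairs, I would find it cleaner to prove the symmetric statement $t' \sim_s t$ for $s = t^{-1} \circ t'$ and then invoke symmetry of $\sim$ (Proposition~\ref{Rotation:equivalence}), whose proof delivers exactly the inverse witness $s^{-1} = r$.

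To prove $t' \sim_s t$, I would fix any $q \in dom(t')$ and any $e \in \mathcal{D}_+(t')$ with $q.e \in dom(t')$. Since $t'$ is a spanning tree of $S$, there is some $k \in Sel$ with $t'(q) \arrow{k}{S} t'(q.e)$, and locality of this edge with respect to $t$ yields $p \in dom(t)$ and $d \in \mathcal{D}(t) \cup \{\epsilon\}$ such that $t(p) = t'(q)$ and $t(p.d) = t'(q.e)$. The subcase $d = \epsilon$ is impossible because it would force $q = q.e$ by injectivity of $t'$, contradicting $e \geq 0$; hence $d \in \mathcal{D}(t)$, and $s(q.e) = p.d = s(q).d$ has exactly the shape required by Definition~\ref{Rotation}.

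For the locality claim I would fix any edge $\ell \arrow{k}{S} \ell'$ of $S$. Locality with respect to $t$ gives $p \in dom(t)$ and $d \in \mathcal{D}(t) \cup \{\epsilon\}$ with $t(p) = \ell$ and $t(p.d) = \ell'$; set $q = r(p)$, so $t'(q) = \ell$. I then case-split on $d$: if $d = \epsilon$, the edge is a self-loop at $q$; if $d \in \mathcal{D}_+(t)$, the rotation clause applied at $(p, d)$ yields some $e \in \mathcal{D}(t')$ with $t'^{-1}(\ell') = r(p.d) = q.e$; and if $d = -1$, the position $p$ is non-root, so $p = p'.j$ for some $j \geq 0$, and applying the rotation clause at $(p', j)$ gives $r(p) = r(p').e'$ for some $e' \in \mathcal{D}(t')$, after which a secondary split on the sign of $e'$ recovers $r(p') = t'^{-1}(\ell')$ as either $q.(-1)$ or $q.j''$ with $j'' \in \mathcal{D}_+(t')$.

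The principal obstacle is this last case $d = -1$: Definition~\ref{Rotation} only controls how the rotation acts on moves from a position to its \emph{children} in $t$, so one must first step up to $p'$ via the sibling index $j$ that witnesses $p = p'.j$, and then decide whether the rotation itself reverses this sibling edge. Once the symmetric formulation is used for the rotation part, this nested case analysis is the only real piece of combinatorial work in the proof.
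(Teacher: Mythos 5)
Your proof is correct, but the first half takes a genuinely different route from the paper's. For the rotation claim, the paper proves $t \sim_{t'^{-1}\circ t} t'$ head-on: given a parent--child pair $(p,p.d)$ in $t$, it supposes the images under $t'^{-1}\circ t$ are not adjacent in $t'$, extracts the (length $>2$) connecting path in $t'$, converts each step of that path into an edge of $S$ via the spanning-tree property of $t'$, pulls those edges back into $t$ using locality, and derives a contradiction with the uniqueness of paths in a tree. You instead observe that the hypotheses are oriented the other way: adjacency in $t'$ produces an edge of $S$ (spanning tree), and locality with respect to $t$ immediately converts that edge into adjacency in $t$, so $t' \sim_{t^{-1}\circ t'} t$ follows in two lines (after ruling out $d=\epsilon$ by injectivity); you then invoke the symmetry part of Proposition~\ref{Rotation:equivalence} to flip the witness to $t'^{-1}\circ t$. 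This is cleaner locally, at the price of relying on the path-contradiction argument already buried in the symmetry proof of the proposition rather than redoing it; the total combinatorial content is the same, but your structuring avoids duplication and makes it clearer which hypothesis does what. The second half (locality with respect to $t'$) follows the paper's three-way split on $d\in\{\epsilon\}\cup\mathcal{D}_+(t)\cup\{-1\}$; your explicit secondary split on the sign of $e'$ in the $d=-1$ case is in fact slightly more careful than the paper, which leaves implicit the step from ``$r(p)$ and $r(p')$ are adjacent in $t'$'' to the correctly oriented witness $(q^*, e^*)$ required by the definition of a local edge.
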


%!!!!!!!!!!!!!!!!!!!!!!!!!!!!!!!!!!!!!!!!!!!!!!!!!!!!!!!!!!!!!!!!!!!!!!!!!!!!!!
\ifLongVersion
%!!!!!!!!!!!!!!!!!!!!!!!!!!!!!!!!!!!!!!!!!!!!!!!!!!!!!!!!!!!!!!!!!!!!!!!!!!!!!!

\begin{proof} Let $p \in dom(t)$ be an arbitrary position and $d \in
\mathcal{D}_+(t)$ be an arbitrary non-negative direction such that $p.d \in
dom(t)$. If $t$ is a spanning tree, then $t$ is bijective, and there exists a
selector $s \in Sel$ such that $t(p) \arrow{s}{} t(p.d)$ is an edge in $S$.
Since $t'$ is a spanning tree, it is also bijective, hence there exist $p',p''
\in dom(t')$ such that $t'(p')=t(p)$ and $t'(p'') = t(p.d)$. 

By contradiction, suppose that $p'' \neq p'.e$ for all $e \in \mathcal{D}(t')$.
Since $p',p'' \in dom(t')$, there exists a path $p' = p_1, \ldots, p_k$ such
that $k > 2$, and, for all $1 \leq i < k$, we have $p_{i+1} = p_i.e_i$, for some
$e_1, \ldots, e_{k-1} \in \mathcal{D}(t')$. Because $t'$ is a spanning tree of
$S$, there exist selectors $s_1, \ldots, s_{k-1} \in Sel$ such that one of the
following holds, for all $1 \leq i < k$:\begin{itemize}

  \item $t'(p_i) \arrow{s_i}{} t'(p_{i+1})$ or

  \item $t'(p_{i+1}) \arrow{s_i}{} t'(p_i)$.

\end{itemize} Since all the above edges of $S$ are local wrt. $t$, there exists
a path $p = (t^{-1} \circ t')(p_1), \ldots,$ $(t^{-1} \circ t')(p_k) = p.d$ of
pairwise distinct positions in $t$, for $k > 2$, which contradicts with the fact
that $t$ is a tree. In conclusion, for all $p \in dom(t)$ and $d \in
\mathcal{D}_+(t)$, such that $p.d \in dom(t)$ there exists $e \in
\mathcal{D}(t')$ such that $(t'^{-1} \circ t)(p.d) = (t'^{-1} \circ t)(p).e$,
and since $t'^{-1} \circ t$ is a bijective mapping, we have $t \sim_{t'^{-1}
\circ t} t'$.

Finally, we are left with proving that all edges are local with
respect to $t'$. Let $\ell \arrow{s}{} \ell'$ be an arbitrary edge of
$S$, for some $s \in Sel$. The edge is local with respect to $t$, thus
there exists $p \in dom(t)$ and $d \in \mathcal{D}(t) \cup
\{\epsilon\}$ such that $t(p) = \ell$ and $t(p.d) = \ell'$. We
distinguish three cases:\begin{enumerate}

  \item If $d = \epsilon$, then $p = p.d$, and trivially $\ell =
  t'((t'^{-1} \circ t)(p)) = t'((t'^{-1} \circ t)(p.d)) = t'((t'^{-1}
  \circ t)(p).d) = \ell'$. 

  \item Otherwise, if $d \in \mathcal{D}_+(t)$, taking into account that $t
  \sim_{t'^{-1} \circ t} t'$, by the first part of this lemma, we have
  $(t'^{-1}\circ t)(p.d) = (t'^{-1}\circ t)(p).e$, for some $e \in
  \mathcal{D}(t')$. But $t'((t'^{-1}\circ t)(p)) = t(p) = \ell$ and
  $t'((t'^{-1}\circ t)(p.d)) = t(p.d) = \ell'$, so $\ell \arrow{s}{} \ell'$ is
  local wrt. $t'$.

  \item Otherwise, if $d = -1$, there exists $q \in dom(t)$ and $d' \in
  \mathcal{D}_+(t)$ such that $p = q.d'$. Since $t \sim_{t'^{-1} \circ
    t} t'$, by the first part of this lemma, we have $(t'^{-1}\circ
  t)(q.d') = (t'^{-1}\circ t)(q).e$, for some $e \in
  \mathcal{D}(t')$. But $t'(t'^{-1}\circ t)(q.d')=t(p)=\ell$ and
  $t'((t'^{-1}\circ t)(q))=t(q)=t(p.d)=\ell'$, so $\ell \arrow{s}{}
  \ell'$ is local with respect to $t'$.

\end{enumerate}
\qed\end{proof} 

%!!!!!!!!!!!!!!!!!!!!!!!!!!!!!!!!!!!!!!!!!!!!!!!!!!!!!!!!!!!!!!!!!!!!!!!!!!!!!!
\fi
%!!!!!!!!!!!!!!!!!!!!!!!!!!!!!!!!!!!!!!!!!!!!!!!!!!!!!!!!!!!!!!!!!!!!!!!!!!!!!!

As a consequence, if all edges of a state are local with respect to
some spanning tree, they are also local with respect to any other
spanning tree, hence we will simply say that they are local. A state
is said to be {\em local} if it has only local edges.

%==============================================================================
\subsection{Separation Logic}\vspace*{-1mm}
%==============================================================================

The syntax of {\em basic formulae} of Separation Logic (SL) is given
below:\vspace*{-1mm}
\[\begin{array}{lcl}
  \alpha & \in & Var \setminus \{\nil\};~ x ~\in~ Var; \\  
  \Pi & ::= & \alpha = x ~|~ \Pi_1 \wedge \Pi_2 \\ 
  \Sigma & ::= & \emp ~|~ \alpha \mapsto (x_1,\ldots,x_n) ~|~ \Sigma_1 * \Sigma_2 
  ~\mbox{, for some}~ n > 0 \\ 
  \varphi & ::= & \Sigma \wedge \Pi ~|~ \exists x ~.~ \varphi_1\vspace*{-1mm}
\end{array}\]
A formula of the form \mbox{$\bigwedge_{i=1}^n \alpha_i = x_i$}
defined by the $\Pi$ nonterminal in the syntax above is said to be
\emph{pure}. A formula of the form $\bigstar_{i=1}^k \alpha_i \mapsto
(x_{i,1},\ldots,x_{i,n})$ defined by the $\Sigma$ nonterminal in the
syntax above is said to be \emph{spatial}. The atomic proposition
$\emp$ denotes the empty spatial conjunction. A variable $x$ is said
to be {\em free} in $\varphi$ if it does not occur under the scope of
any existential quantifier.  We denote by $FV(\varphi)$ the set of
free variables, and by $AP(\varphi)$ the set of atomic propositions of
$\varphi$.

In the following, we shall use two equality relations. The {\em syntactic
equality}, denoted $\alpha \equiv \beta$, means that $\alpha$ and $\beta$ are the
same syntactic object (formula, variable, tuples of variables, etc.). On the
other hand, by writing $x =_\Pi y$, for two variables $x,y \in Var$ and a pure
formula $\Pi$, we mean that the equality of the values of $x$ and $y$ is implied
by $\Pi$.

A {\em substitution} is an injective partial function $\sigma : Var
\rightharpoonup_{fin} Var$. Given a basic formula $\varphi$ and a
substitution $\sigma$, we denote by $\varphi[\sigma]$ the result of
simultaneously replacing each variable (not necessarily free) $x$ that
occurs in $\varphi$, by $\sigma(x)$. For instance, if $\sigma(x)=y$,
$\sigma(y)=z$ and $\sigma(z)=t$, then $(\exists x,y ~.~ x \mapsto
(y,z) \wedge z = x)[\sigma] \equiv \exists y,z ~.~ y \mapsto (z,t)
~\wedge~ t=y$.

The semantics of a basic formula $\varphi$ is given by the relation $S
\models \varphi$, where $S=\langle s, h \rangle$ is a state such that
$FV(\varphi) \subseteq dom(s)$, and $\varphi$ is a basic SL
formula. The definition of $\models$ is by induction on the structure
of $\varphi$:\vspace*{-1mm}
\[\begin{array}{lcl}
  S \models \emp & \iff & dom(h) = \emptyset \\
  S \models \alpha \mapsto (x_1,\ldots,x_n) & \iff & 
  s = \{(\alpha,\ell_0), (x_1, \ell_1), \ldots, (x_n, \ell_n)\} ~\mbox{and}~ \\ 
  && h = \{\langle \ell_0, \lambda i ~.~ 
  \mbox{if}~ 1 \leq i \leq n ~\mbox{then}~ \ell_i ~\mbox{else}~ \bot \rangle\} \\
  && ~\mbox{for some $\ell_0, \ell_1, \ldots, \ell_n \in Loc$} \\
  S \models \varphi_1 * \varphi_2 & \iff & 
  S_1 \models \varphi_1 ~\mbox{and}~ S_2 \models \varphi_2 
  ~\mbox{where $S_1 \uplus S_2 = S$} \\
  S \models \exists x ~.~ \varphi & \iff & 
  \langle s[x \leftarrow \ell], h \rangle \models \varphi ~\mbox{for some $\ell \in Loc$}
\end{array}\]

\vspace*{-1mm}The semantics of $=$ and $\wedge$ is classical in first order
logic. Note that we adopt here the \emph{strict semantics}, in which a
points-to relation $\alpha \mapsto (x_1,\ldots,x_n)$ holds in a state
consisting of a single cell pointed to by $\alpha$, with exactly $n$
outgoing edges $s(\alpha) \arrow{k}{S} s(x_k)$, $1 \leq k \leq n$,
towards either the single allocated location (if $s(x_k)=s(\alpha)$),
or dangling locations (if $s(x_k) \neq s(\alpha)$). The empty heap is
specified by $\emp$.

%!!!!!!!!!!!!!!!!!!!!!!!!!!!!!!!!!!!!!!!!!!!!!!!!!!!!!!!!!!!!!!!!!!!!!!!!!!!!!!
\ifLongVersion
%!!!!!!!!!!!!!!!!!!!!!!!!!!!!!!!!!!!!!!!!!!!!!!!!!!!!!!!!!!!!!!!!!!!!!!!!!!!!!!

A variable $x \in FV(\Sigma)$ is said to be {\em allocated} in a basic
spatial formula $\Sigma$ if it occurs on the right hand side of a
points-to atomic proposition $x \mapsto (y_0, \ldots, y_{k-1})$ of
$\Sigma$. If $\Sigma$ is satisfiable, then clearly each free variable
$x \in FV(\Sigma)$ is allocated at most once. For a basic
quantifier-free SL formula $\varphi \equiv \Sigma \wedge \Pi$ and two
variables $x,y \in FV(\varphi)$, we say that $y$ is $\varphi$-{\em
  reachable} from $x$ in iff there exists a sequence $x =_\Pi
\alpha_0, \ldots, \alpha_m =_\Pi y$, for some $m \geq 0$, such that,
for each $0 \leq i < m$, $\alpha_i \mapsto (\beta_{i,1}, \ldots,
\beta_{i,p_i})$ is a points-to proposition in $\Sigma$, and
$\beta_{i,s} =_\Pi \alpha_{i+1}$, for some $1 \leq s \leq p_i$. A
variable $x \in FV(\Sigma)$ is said to be a {\em root} of $\Sigma$ if
every variable $y \in FV(\Sigma)$ is reachable from $x$.

%------------------------------------------------------------------------------

\paragraph{Remark.} Notice that there is no explicit disequality between
variables in the basic fragment of SL. This is in part justified by the fact
that disequality can be partially defined using the following implication:
$\alpha \mapsto (x_1,\ldots,x_n) * \beta \mapsto(y_1,\ldots,y_m) \Rightarrow
\alpha \neq \beta$. This implication enforces disequality only between allocated
variables. However, in practice, most SL specifications do not impose
disequality constraints on dangling variables.

%!!!!!!!!!!!!!!!!!!!!!!!!!!!!!!!!!!!!!!!!!!!!!!!!!!!!!!!!!!!!!!!!!!!!!!!!!!!!!!
\fi
%!!!!!!!!!!!!!!!!!!!!!!!!!!!!!!!!!!!!!!!!!!!!!!!!!!!!!!!!!!!!!!!!!!!!!!!!!!!!!!

%!!!!!!!!!!!!!!!!!!!!!!!!!!!!!!!!!!!!!!!!!!!!!!!!
\enlargethispage{6mm}
%!!!!!!!!!!!!!!!!!!!!!!!!!!!!!!!!!!!!!!!!!!!!!!!!

%==============================================================================
\vspace*{-2mm}\subsection{Inductive Definitions}\vspace*{-1mm}
\label{sec:inductive-definitions}
%==============================================================================

A system of \emph{inductive definitions}\footnote{The name {\em inductive}
suggests that every structure described by the system is finite.} (inductive
system) $\mathcal{P}$ is a set of rules of the form:\vspace*{-2mm}
\begin{equation}\label{recursive-definitions}
\begin{array}{rcl}
P_1(x_{1,1},\ldots,x_{1,n_1}) & \equiv & \mid_{j=1}^{m_1}
R_{1,j}(x_{1,1},\ldots,x_{1,n_1}) 
\\ & \ldots & \\ 
P_k(x_{k,1},\ldots,x_{k,n_k}) & \equiv & \mid_{j=1}^{m_k}
R_{k,j}(x_{k,1},\ldots, x_{k,n_k}) \vspace*{-1mm}
\end{array}
\end{equation}
where $\{P_1, \ldots, P_k\}$ is a set of {\em predicates}, $x_{i,1},
\ldots, x_{i,n_i}$ are called {\em formal parameters}, and the
formulae $R_{i,j}$ are called the {\em rules} of $P_i$. Concretely,
each rule is of the form:\vspace*{-1mm} $$R_{i,j}(\vec{x}) \equiv \exists \vec{z}
~.~ \Sigma * P_{i_1}(\vec{y}_1) * \ldots * P_{i_m}(\vec{y}_m) ~\wedge~
\Pi$$ where $\vec{x} \cap \vec{z} = \emptyset$, and all of the
following hold:\vspace*{-1mm}\begin{enumerate}

  \item $\head(R_{i,j}) \defequiv \Sigma$ is a non-empty spatial
    formula (i.e.\ $\Sigma \not\equiv \emp$), and $FV(\Sigma)
    \subseteq \vec{x} \cup \vec{z}$,

  \item $\tail(R_{i,j}) \defequiv \langle P_{i_1}(\vec{y}_1), \ldots,
    P_{i_m}(\vec{y}_m) \rangle$ is an ordered sequence of {\em
      predicate occurrences}, where $\vec{y}_1 \cup \ldots \cup
    \vec{y}_m \subseteq \vec{x} \cup \vec{z}$,

  \item $\Pi$ is a pure formula, and $FV(\Pi) \subseteq \vec{z} \cup
    \vec{x}$.  In the following, we restrict the pure part of each
    rule such that, for all formal parameters $\beta \in \vec{x}$, we
    allow only equalities of the form $\alpha =_\Pi \beta$, where
    $\alpha$ is allocated in $\Sigma$. This restriction is of
    technical nature (see Section \ref{sec:canonization}). It is
    possible to lift it, but only at the expense of an exponential
    blowup in the size of the resulting tree automaton.

  \item For all $1 \leq r,s \leq m$, if $x_{i,k} \in \vec{y}_r$,
    $x_{i,l} \in \vec{y}_s$, and $x_{i,k} =_\Pi x_{i,l}$, for some $1
    \leq k,l \leq n_i$, then $r=s$. In other words, a formal parameter
    of a rule cannot be passed to two or more subsequent occurrences
    of predicates in that rule. This technical restriction can be
    lifted at the cost of introducing expensive tests for double
    allocation as was done in the translation of the inductive
    definitions to Monadic Second-Order Logic on graphs, reported in
    \cite{Iosif13}.
\end{enumerate}
The {\em size} of a rule $R$ is denoted by $\len{R}$ and is defined
inductively as follows:\vspace*{-1mm}
\[\begin{array}{ccccc}
\len{\alpha = x} = 1 & ~ & \len{\emp} = 1 & ~ & \len{\alpha \mapsto (x_1, \ldots,
  x_n)} = n + 1 \\
\len{\varphi \bullet \psi} = \len{\varphi} + \len{\psi} & ~ & \len{\exists x ~.~ 
  \varphi} = \len{\varphi} + 1 & ~ & 
\len{P(x_1, \ldots, x_n)} = n 
\end{array}\]
where $\alpha \in Var \setminus \{\nil\}$, $x, x_1, \ldots, x_n \in
Var$, and $\bullet \in \{*, \wedge\}$. The size of an inductive system
$\mathcal{P} = \{P_i \equiv |_{j=1}^{m_i} R_{i,j}\}_{i=1}^n$ is
defined as $\len{\mathcal{P}} = \sum_{i=1}^n \sum_{j=1}^{m_i}
\len{R_{i,j}}$.

%!!!!!!!!!!!!!!!!!!!!!!!!!!!!!!!!!!!!!!!!!!!!!!!!
\enlargethispage{6mm}
%!!!!!!!!!!!!!!!!!!!!!!!!!!!!!!!!!!!!!!!!!!!!!!!!

\begin{example}\label{ex:DLL} To illustrate the use inductive definitions (with
the above restrictions), we first show how to define a predicate
$\DLL(hd,p,tl,n)$ describing doubly-linked lists of length at least one. As
depicted on the left of Fig.~\ref{fig:exSpanningTrees}, the formal parameter
$hd$ points to the first allocated node of such a list, $p$ to the node pointed
by the $prev$ selector of $hd$, $tl$ to the last node of the list (possibly
equal to $hd$), and $n$ to the node pointed by the $next$ selector from $tl$.
This predicate can be defined as follows:\vspace*{-1mm} $$\DLL(hd,p,tl,n) \equiv
hd \mapsto (n,p) ~\wedge~ hd=tl ~\mid~ \exists x.~ hd \mapsto (x,p) *
\DLL(x,hd,tl,n)\vspace*{-1.5\baselineskip}$$ ~ %\qed

\end{example}

% \[\begin{array}{rcl}
% \DLL(hd,p,tl,n) & \equiv & hd \mapsto (n,p) ~\wedge~ hd=tl \\ && \mid~
% \exists x.~ hd \mapsto (x,p) * \DLL(x,hd,tl,n)
% \end{array}\] 

\ifLongVersion
\begin{figure}[htb]
\begin{center}
\else
\begin{wrapfigure}[7]{r}{50mm}
\fi

  \vspace*{-8mm}

  \epsfig{file=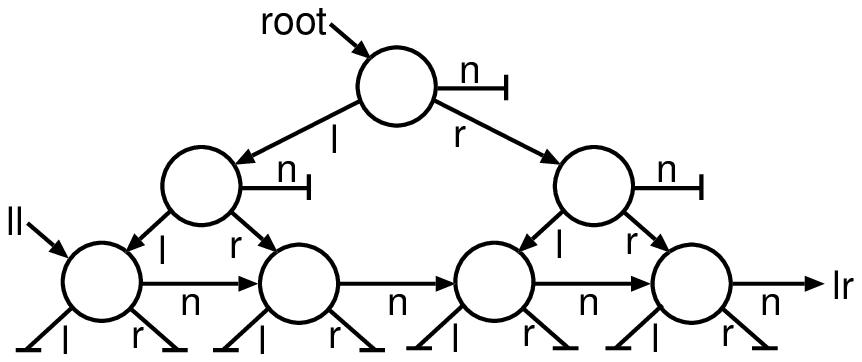, width=50mm}

  \vspace*{-5mm}
 
  \caption{A tree with linked leaves.}

  \vspace*{-2mm}
 
  \label{fig:exTLL}
 
%\end{center}
%\end{figure}
\ifLongVersion
\end{center}
\end{figure}
\else
\end{wrapfigure}
\fi
\begin{example}\label{ex:TLL} Another example is the predicate
$\TLL(root,ll,lr)$ describing binary trees with linked leaves, whose root is
pointed by the $root$ formal parameter, the left-most leaf is pointed to by
$ll$, and the right-most leaf points to $lr$ (cf.
Fig.~\ref{fig:exTLL}):\vspace{-1mm} \[\begin{array}{rcl} \TLL(root,ll,lr) &
\equiv & root \mapsto (\nil,\nil,lr) ~\wedge~ root=ll \\ && \mid~ \exists x, y,
z.~ root \mapsto (x,y,\nil) * \TLL(x,ll,z) * \TLL(y,z,lr)\end{array}\]

\vspace*{-1.3\baselineskip} \ %\qed
\end{example}

%------------------------------------------------------------------------------

%!!!!!!!!!!!!!!!!!!!!!!!!!!!!!!!!!!!!!!!!!!!!!!!!!!!!!!!!!!!!!!!!!!!!!!!!!!!!!!
\ifLongVersion
%!!!!!!!!!!!!!!!!!!!!!!!!!!!!!!!!!!!!!!!!!!!!!!!!!!!!!!!!!!!!!!!!!!!!!!!!!!!!!!

\paragraph{Remark.} The first restriction above ($\head(R_{i,j}) \not\equiv
\emp$) currently does not allow defining data structures of zero size (such as
empty lists, trees, etc.) and, more importantly, certain types of structure
concatenation. For instance, considering the $\DLL$ predicate from the above
example, the following predicate cannot be defined in our
framework\footnote{However, defining concatenation of trees with linked leaves
is possible when the root is allocated: $\mathtt{JOIN\_TLL}(root,ll,lr) \equiv
\exists x,y,z ~.~ root \mapsto (x,y,\nil) * \TLL(x,ll,z) * \TLL(y,z,lr)$.}:
\[\mathtt{JOIN\_DLL}(hd,p,tl,n) \equiv \exists x,y ~.~ \DLL(hd,p,x,y) *
\DLL(x,y,tl,n)\] An extension to handle such definitions is possible,
by introducing a special ``concatenation'' operation at the level of
tree automata, which is considered for further
investigation. \qed\vspace*{\baselineskip}

%!!!!!!!!!!!!!!!!!!!!!!!!!!!!!!!!!!!!!!!!!!!!!!!!!!!!!!!!!!!!!!!!!!!!!!!!!!!!!!
\fi
%!!!!!!!!!!!!!!!!!!!!!!!!!!!!!!!!!!!!!!!!!!!!!!!!!!!!!!!!!!!!!!!!!!!!!!!!!!!!!!

%------------------------------------------------------------------------------

\begin{definition}\label{connected-rule} Given a system $\mathcal{P} = \{P_i
\equiv |_{j=1}^{m_i} R_{i,j}\}_{i=1}^n$ of inductive definitions, a rule
$R_{i,j}(x_{i,1},\ldots,x_{i,k}) \equiv \exists \vec{z} ~.~ \Sigma *
P_{i_1}(\vec{y}_1) * \ldots * P_{i_m}(\vec{y}_m) \wedge \Pi$ is {\em connected}
if and only if there exists a formal parameter $x_{i,\ell}$, $1 \leq \ell \leq
k$ such that: \begin{itemize}
 
  \item $x_{i,\ell}$ is a root of $\Sigma$, and 
  
  \item for each $j = 1, \ldots, m$, there exists $0 \leq s < \len{\vec{y}_j}$
  such that $(\vec{y}_j)_s$ is $\Sigma \wedge \Pi$-reachable from $x_{i,\ell}$,
  and $x_{i_j,s}$ is a root of the head of each rule of $P_{i_j}$.

\end{itemize}
\end{definition}An inductive system is said to be {\em connected} if all its
rules are connected. In the rest of this section, we consider only connected
systems. This condition is necessary, since later on it is proved that
entailment between predicates of disconnected systems is undecidable, in general
(Thm.  \ref{not-connected-undecidable} in Sec. \ref{sec:complexity}). Notice
that the $\DLL$ and $\TLL$ systems from Examples~\ref{ex:DLL} and \ref{ex:TLL}
are both connected.

%==============================================================================
\subsection{Tree Automata}\vspace*{-1mm}
%==============================================================================

A (finite, non-deterministic, bottom-up) \emph{tree automaton}
(abbreviated as TA in the following) is a quadruple $A = \langle Q,
\Sigma,\Delta, F \rangle$, where $\Sigma$ is a finite alphabet, $Q$ is
a finite set of {\em states}, $F \subseteq Q$ is a set of {\em final
  states}, $\Sigma$ is an alphabet, and $\Delta$ is a set of
transition rules of the form $\sigma(q_1, \ldots, q_n) \rightarrow q$,
for $\sigma \in \Sigma$, and $q,q_1,\ldots,q_n \in Q$. Given a tree
automaton $A = \langle Q, \Sigma, \Delta, F \rangle$, for each rule
$\rho = (\sigma(q_1,\ldots,q_n) \arrow{}{} q)$, we define its size as
$\len{\rho} = n+1$. The size of the tree automaton is $\len{A} =
\sum_{\rho\in\Delta} \len{\rho}$. A \emph{run} of $A$ over a tree $t :
\nat^* \rightharpoonup_{fin} \Sigma$ is a function $\pi : dom(t)
\rightarrow Q$ such that, for each node $p \in dom(t)$, where $q =
\pi(p)$, if $q_i = \pi(p.i)$ for $1 \leq i \leq n$, then $\Delta$ has
a~rule $(t(p))(q_1, \ldots, q_n) \rightarrow q$. We write
$t\stackrel{\pi}{\Longrightarrow} q$ to denote that $\pi$ is a run of
$A$ over $t$ such that $\pi(\epsilon)=q$. We use $t\Longrightarrow q$
to denote that $t\stackrel{\pi}{\Longrightarrow} q$ for some run
$\pi$. The \emph{language} of a $A$ is defined as $\lang{A} = \{t \mid
\exists q \in F,~ t \Longrightarrow q\}$.

%%%%%%%%%%%%%%%%%%%%%%%%%%%%%%%%%%%%%%%%%%%%%%%%%%%%%%%%%%%%%%%%%%%%%%%%%%%%%%%
\vspace*{-2mm}\section{Tiles and Semantics of Inductive
Definitions}\vspace*{-1mm}
%%%%%%%%%%%%%%%%%%%%%%%%%%%%%%%%%%%%%%%%%%%%%%%%%%%%%%%%%%%%%%%%%%%%%%%%%%%%%%%

A {\em tile} is a tuple $T=\langle \varphi, \vec{x}_{-1}, \vec{x}_0, \ldots,
\vec{x}_{d-1} \rangle$, for some $d \geq 0$, where $\varphi$ is a basic SL
formula and each $\vec{x}_i$ is a tuple of pairwise distinct variables, called a
{\em port}, such that $\vec{x}_i \cap \vec{x}_j = \emptyset$, for all $-1 \leq i
< j < d$, and $\vec{x}_{-1} \cup \vec{x}_0 \cup \ldots \cup \vec{x}_{d-1}
\subseteq FV(\varphi)$. The set of all tiles is denoted by $\mathcal{T}$. The
variables from $\vec{x}_{-1}$ are said to be {\em incoming}, the ones from
$\vec{x}_0, \ldots, \vec{x}_{d-1}$, are said to be {\em outgoing}, and the ones
from $\prm(T) = FV(\varphi) \setminus (\vec{x}_{-1} \cup \vec{x}_0 \cup \ldots
\cup \vec{x}_{d-1})$ are called {\em parameters}. The {\em arity} of a tile
$T=\langle \varphi, \vec{x}_{-1}, \vec{x}_0, \ldots, \vec{x}_{d-1} \rangle$ is
the number of outgoing ports, denoted by $\#(T)=d$. In the following, we denote
$\form(T) \equiv \varphi$ and $\port_i(T) \equiv \vec{x}_i$, for all $i = -1, 0,
\ldots, d-1$. 

Given two tiles $T_1 = \langle \varphi_1, \vec{x}^1_{-1}, \vec{x}^1_0,
\ldots, \vec{x}^1_{d-1} \rangle$ and $T_2 = \langle \varphi_2,
\vec{x}^2_{-1}, \vec{x}^2_0, \ldots, \vec{x}^2_{e-1} \rangle$ such
that $FV(\varphi_1) \cap FV(\varphi_2) = \emptyset$, for some $0 \leq i
\leq d-1$ such that $\len{\vec{x}^1_i}=\len{\vec{x}^2_{-1}}$, we
define their {\em $i$-composition}:
%!!!!!!!!!!!!!!!!!!!!!!!!!!!!!!!!!!!!!!!!!!!!!!!!!!!!!!!!!!!!!!!!!!!!!!!!!!!!!!
\ifLongVersion
%!!!!!!!!!!!!!!!!!!!!!!!!!!!!!!!!!!!!!!!!!!!!!!!!!!!!!!!!!!!!!!!!!!!!!!!!!!!!!!
\[\begin{array}{rcl}
T_1 \circledast_i T_2 & = & \langle \psi, \vec{x}^1_{-1}, 
\vec{x}^1_0, \ldots \vec{x}^1_{i-1}, \vec{x}^2_0, \ldots, \vec{x}^2_{e-1}, 
\vec{x}^1_{i+1}, \ldots, \vec{x}^1_{d-1} \rangle \\
\mbox{where}~ \psi & \equiv & \exists \vec{x}^1_i \exists \vec{x}^2_{-1} ~.~ 
\varphi_1 * \varphi_2 \wedge \vec{x}^1_i = \vec{x}^2_{-1}.
\end{array}\]
%!!!!!!!!!!!!!!!!!!!!!!!!!!!!!!!!!!!!!!!!!!!!!!!!!!!!!!!!!!!!!!!!!!!!!!!!!!!!!!
\else
%!!!!!!!!!!!!!!!!!!!!!!!!!!!!!!!!!!!!!!!!!!!!!!!!!!!!!!!!!!!!!!!!!!!!!!!!!!!!!!
$T_1 \circledast_i T_2 = \langle \psi, \vec{x}^1_{-1}, 
\vec{x}^1_0, \ldots \vec{x}^1_{i-1}, \vec{x}^2_0, \ldots, \vec{x}^2_{e-1}, 
\vec{x}^1_{i+1}, \ldots, \vec{x}^1_{d-1} \rangle$ 
where $\psi \equiv \exists \vec{x}^1_i \exists \vec{x}^2_{-1} ~.~ 
\varphi_1 * \varphi_2 \wedge \vec{x}^1_i = \vec{x}^2_{-1}$.
%!!!!!!!!!!!!!!!!!!!!!!!!!!!!!!!!!!!!!!!!!!!!!!!!!!!!!!!!!!!!!!!!!!!!!!!!!!!!!!
\fi
%!!!!!!!!!!!!!!!!!!!!!!!!!!!!!!!!!!!!!!!!!!!!!!!!!!!!!!!!!!!!!!!!!!!!!!!!!!!!!!
For a tree position $q \in \nat^*$ and a tile $T = \langle \varphi,
\vec{x}_{-1}, \vec{x}_0, \ldots,$ $\vec{x}_{d-1} \rangle$, we denote by
$T^q$ the tile obtained by renaming every free variable $x \in
\vec{x}_{-1} \cup \vec{x}_0 \cup \ldots \cup \vec{x}_{d-1}$ by
$x^q$. Note that the parameters $x \in \prm(T)$ are not changed by
this renaming.

%!!!!!!!!!!!!!!!!!!!!!!!!!!!!!!!!!!!!!!!!!!!!!!!!
\enlargethispage{4mm}
%!!!!!!!!!!!!!!!!!!!!!!!!!!!!!!!!!!!!!!!!!!!!!!!!

\begin{definition}\label{tiled-tree}
A {\em tiled tree} is a tree $t : \nat^* \rightharpoonup_{fin}
\mathcal{T}$ such that, for all positions $p \in dom(t)$, the
following holds:\vspace*{-1mm}
\begin{itemize}
\item $\#_t(p) = \#(t(p))$, i.e.\ the arity of $p$ equals the arity of
  its label in $t$, and
\item for all $0 \leq i < \#_t(p)$, the $d$-composition $t(p)^p
  ~\circledast_i~ t(p.i)^{p.i}$ is defined.
\end{itemize}
\end{definition}
\vspace{-1mm}A tiled tree $t$ corresponds to a tile defined inductively, for any $p
\in dom(t)$, as:\vspace{-1mm}
\[\Phi(t,p) = t(p)^p \circledast_0 \Phi(t,p.0) \circledast_1 \Phi(t,p.1) 
~\ldots~ \circledast_{\#(p)-1} \Phi(t,p.(\#_t(p)-1)).\vspace*{-1mm}\]
The tile $\Phi(t,\epsilon)$ is said to be the {\em characteristic tile} of $t$
and it is denoted in the following as $\Phi(t)$. It can be easily shown, by
induction on the structure of the tiled tree $t$, that $\Phi(t) = \langle \psi,
\vec{x}_{-1} \rangle$, for an SL formula $\psi$ and an incoming port
$\vec{x}_{-1}$. In other words, $\Phi(t)$ has no outgoing ports. In this case,
we write $S \models \Phi(t)$ for $S \models \psi$.

Given an inductive system $\mathcal{P} = \{P_i \equiv |_{j=1}^{m_i}
R_{i,j}\}_{i=1}^n$ of the form (\ref{recursive-definitions}), for each
rule $R_{i,j}(\vec{x}) \equiv \exists \vec{u} ~.~ \Sigma *
P_{i_0}(\vec{y}_0) * \ldots * P_{i_{n-1}}(\vec{y}_{d-1}) ~\wedge~
\Pi$, where $\vec{x} \cap \vec{u} = \emptyset$, we define a tile
$T_{i,j} = \langle \varphi, \vec{x}, \vec{z}_0, \ldots, \vec{z}_{d-1}
\rangle$, where $\vec{z}_i$ are disjoint from $\vec{y}_j$, for all $0
\leq i,j < d$, and $\varphi = \exists \vec{u} ~.~ \Sigma \wedge \Pi
~\wedge~ \bigwedge_{i=0}^{d-1} \vec{z}_i = \vec{y}_i$.
 
%%%%%%%%%%%%%%%%%%%%%%%%%%%%%%%%%%%%%%%%%%%%%%%%%%%%%%%%%%%%%%%%%%%%%%%%
%%
%% Radu:
%% Initially we defined $T_{i,j}$ and $T^\epsilon_{i,j}$. It became unclear
%% to me why we need the latter one. It seems like the definition works
%% only with $T_{i,j}$. To be checked at some later point. 
%% 
%%%%%%%%%%%%%%%%%%%%%%%%%%%%%%%%%%%%%%%%%%%%%%%%%%%%%%%%%%%%%%%%%%%%%%%%
%% The only difference between $T_{i,j}$ and $T^\epsilon_{i,j}$ is that
%% the formal parameters $\vec{x}$ of the rule $R_{i,j}(\vec{x})$ are
%% declared as incoming variables in the former and as parameters (free
%% variables) in the latter case.
%%%%%%%%%%%%%%%%%%%%%%%%%%%%%%%%%%%%%%%%%%%%%%%%%%%%%%%%%%%%%%%%%%%%%%%%

\begin{definition}\label{unfolding-tree}
  Let $\mathcal{P} = \big\{P_i ~\equiv~ \mid_{j=1}^{m_i}
  R_{i,j}\big\}_{i=1}^{n}$ be an inductive system
  (\ref{recursive-definitions}). An \emph{unfolding tree} of
  $\mathcal{P}$ is a tiled tree $t : \nat^* \rightharpoonup_{fin}
  \mathcal{T}$ such that:\vspace*{-2mm}
  \begin{itemize}
    \item $t(\epsilon) = T_{i,j}$, for some $1 \leq j \leq m_i$, and
    \item for every position $p \in dom(t)$, if $t(p)= T_{i,j}$ and
      $\tail(R_{i,j}) = \langle P_{i_0}, \ldots, P_{i_{d-1}} \rangle$,
      where $d = \#(T_{i,j})$, then for all $0 \leq j < d$, we have
      $t(p.j) = T_{i_j,k}$, for some $1 \leq k \leq m_{i_j}$.
  \end{itemize}
  \vspace*{-2mm}If $t(\epsilon) = T_{i,j}$, for some $1 \leq j \leq m_i$, we say
  that $t$ is $i$-rooted.
\end{definition}
We denote by $\mathcal{T}_i(\mathcal{P})$ the set of $i$-rooted
unfolding trees of $\mathcal{P}$. The semantics of a predicate
$P_i(\vec{x}) \in \mathcal{P}$ is defined as follows:
$S \models P_i(\vec{x}) \iff S \models \Phi(t)$, for some $t \in
  \mathcal{T}_i(\mathcal{P})$.
%% Note that if $P_i(\vec{x})$ is a predicate of $\mathcal{P}$, for all $t \in
%% \mathcal{T}_i(\mathcal{P})$, we have $\Phi(t) = \langle \psi, \emptyset \rangle$
%% and $FV(\psi)=\vec{x}$. A proof can be given, by induction on the structure of
%% $t$.
Given an inductive system $\mathcal{P}$ and two predicates $P_i(x_1, \ldots,
x_n)$ and $P_j(y_1, \ldots, y_n)$ of $\mathcal{P}$, with the same number of
formal parameters $n$, and a tuple of variables $\vec{x}$, where $\len{\vec{x}}
= n$, the {\em entailment problem} is defined as follows:
$P_i(\vec{x}) \models_{\mathcal{P}} P_j(\vec{x}) ~:~ \forall S ~.~ S \models
P_i(\vec{x}) \Rightarrow S \models P_j(\vec{x}).$ A {\em rooted}
inductive system $\langle \mathcal{P}, P_i \rangle$ is an inductive system
$\mathcal{P}$ with a designated predicate $P_i \in \mathcal{P}$. Two rooted
systems $\langle \mathcal{P}, P_i \rangle$ and $\langle \mathcal{Q}, Q_j
\rangle$ are said to be {\em equivalent} if and only if $P_i
\models_{\mathcal{P} \cup \mathcal{Q}} Q_j$ and $Q_j \models_{\mathcal{P} \cup
\mathcal{Q}} P_i$.

\begin{definition}\label{local-recursive-system}
A rooted inductive system $\langle \{P_1, \ldots, P_n\}, P_i \rangle$
is said to be {\em local} if and only if for each unfolding tree $t
\in \mathcal{T}_i(\mathcal{P})$ and each state $S \models \Phi(t)$,
$S$ is local.
\end{definition}
The {\em locality problem} asks, given an inductive system
$\mathcal{P} = \{P_1, \ldots, P_n\}$, and an index $i = 1,\ldots,n$,
whether the rooted system $\langle \mathcal{P}, P_i \rangle$ is
local. Our method for deciding entailment problems of the form $P_i
\models_{\mathcal{P}} P_j$ is shown to be sound and complete provided
that both $\langle \mathcal{P}, P_i \rangle$ and $\langle \mathcal{P},
P_j \rangle$ are local. Otherwise, if one system is not local, our
method is sound, i.e.\ the algorithm returns ``yes'' only if the
entailment holds. We describe further a canonization procedure
(Sec. \ref{sec:canonization}) which performs a sufficient locality
test on the system, prior to the encoding of the inductive system as a
tree automaton.

%!!!!!!!!!!!!!!!!!!!!!!!!!!!!!!!!!!!!!!!!!!!!!!!!
\enlargethispage{5mm}
%!!!!!!!!!!!!!!!!!!!!!!!!!!!!!!!!!!!!!!!!!!!!!!!!

%==============================================================================
\vspace*{-2mm}\subsection{Canonical Tiles}\vspace*{-1mm}\label{sec:canonical-tiles}
%==============================================================================

This section defines a class of \emph{canonically tiled trees} (or,
to be short, \emph{canonical trees}) intended to reduce the number of ways in
which a given local state can be encoded. Moreover, it is shown that despite a
local state can still be described by several different canonical trees, such
trees must be in a rotation relation called a \emph{canonical rotation}. This
fact is the basis of the completeness argument of our method. By defining a
closure of tree automata (accepting canonical trees of states described by a
given inductive predicate) under canonical rotations, we ensure that any
entailment between two local inductive predicates can be reduced to a language
inclusion problem. The case of non-local states is subsequently dealt with in a
sound (yet, in general, not complete) way in the next subsection (Sec.
\ref{sec:quasi-canonical-tiles}).

A tile $T = \langle \varphi, \vec{x}_{-1}, \vec{x}_0, \ldots, \vec{x}_{d-1}
\rangle$ is said to be a {\em singleton} if $\varphi$ is of one of the
forms:\vspace*{-2mm}
\begin{enumerate}
\item $\exists z ~.~ z \mapsto (y_0,\ldots,y_{m-1}) \wedge \Pi$ or
\item $z \mapsto (y_0,\ldots,y_{m-1}) \wedge \Pi$ and $z \in \prm(T)$
\end{enumerate}
\vspace*{-2mm}and the following holds:\vspace*{-2mm}
\begin{itemize}

  \item For all $-1 \leq i < d$, $\vec{x}_i \cap (\{z,\nil\} \cup \prm(T)) =
  \emptyset$, i.e.\ neither the incoming nor the outgoing tuples of variables
  contain $z$, $\nil$, or parameters.

  \item For each $0 \leq j < m$, exactly one of the following holds: either (i)
  $y_j \equiv z$, (ii) $y_j \equiv \nil$, (iii) $y_j \in \prm(T)$, or (iv) there
  exists a unique tuple of variables $\vec{x}_i$, $-1 \leq i < d$, such that
  $y_j$ occurs in $\vec{x}_i$.

  \item For each outgoing tuple of variables $\vec{x}_i$, $0 \leq i < d$, there
  exists a variable $y_j \not\in \{z,\nil\} \cup \prm(T)$, for some $0 \leq j <
  m$, such that $y_j$ occurs in $\vec{x}_i$.

\end{itemize} 
We denote the spatial formula of a singleton tile as $(\exists z)~ z
\mapsto (y_0, \ldots, y_{m-1})$ in order to account for the optional
quantification of the allocated variable $z$. A singleton tile $T =
\langle (\exists z)~ z \mapsto (y_0,\ldots,y_{m-1}) \wedge \Pi,~
\vec{x}_{-1}, \vec{x}_0, \ldots, \vec{x}_{d-1} \rangle$ is said to be
    {\em canonical} if, moreover, for all $-1 \leq i < d$, $\vec{x}_i$
    can be factorized as $\vec{x}_i \equiv \vec{x}^{fw}_i \cdot
    \vec{x}^{bw}_i$ such that: 
\begin{enumerate}

\item $\vec{x}^{bw}_{-1} = \langle y_{h_0}, \ldots, y_{h_k} \rangle$,
  for some ordered sequence $0 \leq h_0 < \ldots < h_k < m$, i.e.\ the
  incoming tuple is ordered by the selectors referencing its elements.

\item For all $0 \leq i < d$, $\vec{x}^{fw}_i \equiv \langle y_{j_0},
  \ldots, y_{j_{k_i}} \rangle$, for some ordered sequence $0 \leq j_0
  < \ldots < j_k < m$, i.e.\ each outgoing tuple is ordered by the
  selectors referencing its elements.

\item For all $0 \leq i, j < d$, if $(\vec{x}_i^{fw})_0 \equiv y_p$
  and $(\vec{x}_j^{fw})_0 \equiv y_q$, for some $0 \leq p < q < m$,
  then $i < j$, i.e.\ outgoing tuples are ordered by the selector
  referencing the first element.

\item $(\vec{x}_{-1}^{fw} \cup \vec{x}_0^{bw} \cup \ldots \cup
  \vec{x}_{d-1}^{bw}) \cap \{y_0, \ldots, y_{m-1}\} = \emptyset$ and
  $\Pi \equiv \vec{x}_{-1}^{fw} = z ~\wedge~ \bigwedge_{i=0}^{d-1}
  \vec{x}_i^{bw} = z$.
\end{enumerate}

Given a canonical tile $T = \langle (\exists z) ~ z \mapsto (y_0,\ldots,y_{m-1})
\wedge \Pi,~ \vec{x}_{-1}, \vec{x}_0, \ldots, \vec{x}_{d-1} \rangle$, the
decomposition of each tuple $\vec{x}_i$ into $\vec{x}^{fw}_i$ and
$\vec{x}^{bw}_i$ is unique. This is because $\vec{x}^{fw}_i$ contains all
referenced variables $y_j$ that occur in an output port $\vec{x}_i$ (since
$\vec{x}^{bw}_i \cap \{y_0, \ldots, y_{m-1}\} = \emptyset$) in the same order in
which they are referenced in $z \mapsto (y_0, \ldots, y_{m-1})$. The same holds
for the input port $\vec{x}_{-1}$, with the roles of $\vec{x}_{-1}^{fw}$ and
$\vec{x}_{-1}^{bw}$ swapped. As each outgoing port must contain at least one
such variable, by the definition of singleton tiles, we also have that
$\vec{x}^{fw}_i \neq \emptyset$, for all $0 \leq i < d$. We denote further by
$\port^{fw}_i(T)$ and $\port^{bw}_i(T)$ the tuples $\vec{x}^{fw}_i$ and
$\vec{x}^{bw}_i$, resp., for all $i = -1, \ldots, d-1$. The set of canonical
tiles is denoted as $\mathcal{T}^c$.

\begin{definition}\label{Canonically-tiled}
A tiled tree $t : \nat^* \rightharpoonup_{fin} \mathcal{T}^c$ is said
to be {\em canonical} if and only if, for any $p \in dom(t)$ and each
$0 \leq i < \#_t(p)$, we have $\len{\port^{fw}_i(t(p))} =
\len{\port^{fw}_{-1}(t(p.i))}$ and $\len{\port^{bw}_i(t(p))} =
\len{\port^{bw}_{-1}(t(p.i))}$.
\end{definition}

\begin{example}[cont. of Ex.~\ref{ex:DLL}]\label{ex:DLLtree}To
illustrate the notion of canonical trees, Fig.~\ref{fig:DLLtiles}
shows two canonical trees for a~given DLL. The tiles are depicted as
big rectangles containing the appropriate basic formula as well the
input and output ports. In all ports, the first variable is in the
forward and the second in the backward part.%\qed
\end{example}

\begin{figure}[t]
  \begin{center}
 
    \epsfig{file=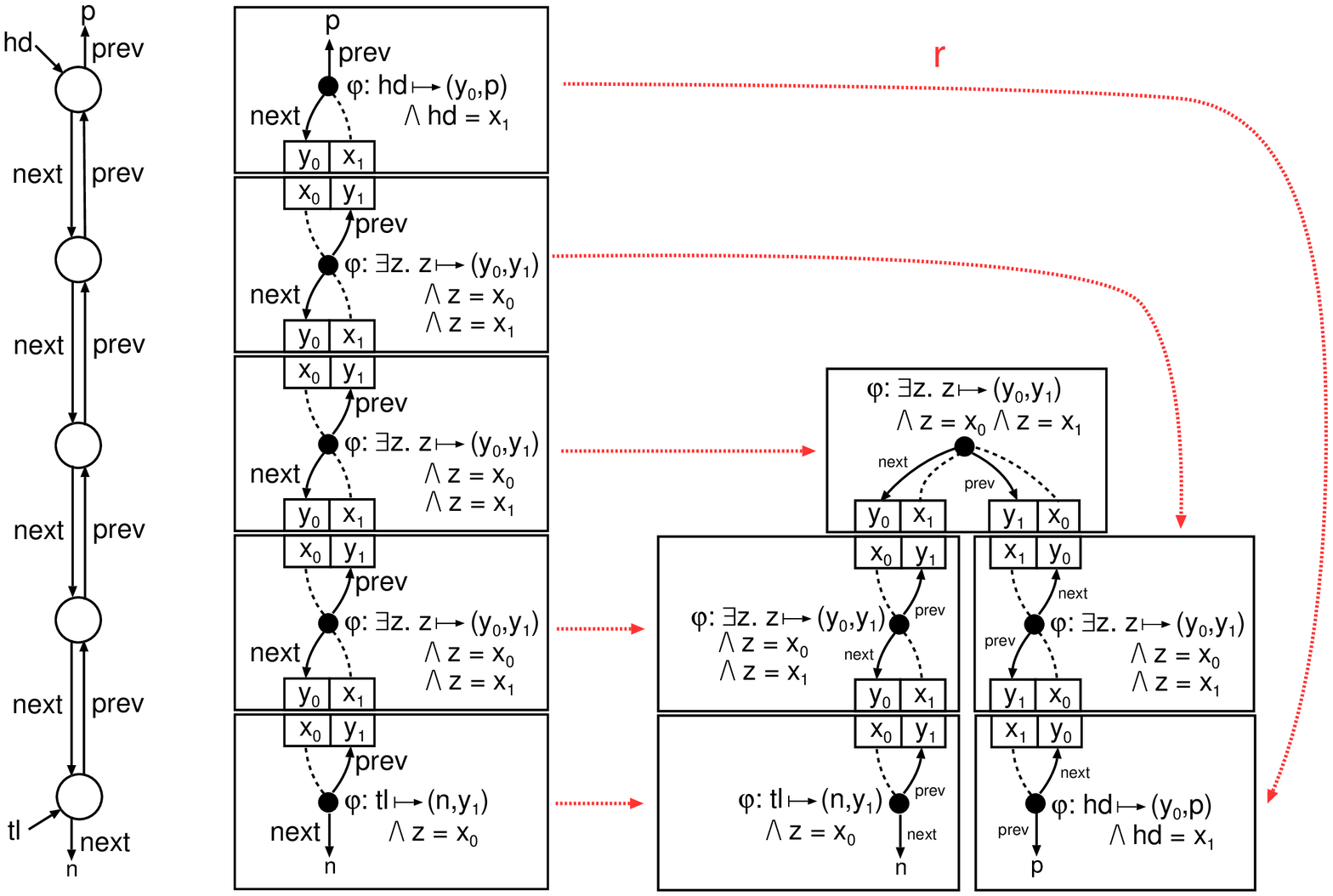, width=90mm}

    \vspace{-2mm}
 
    \caption{An example of a DLL with two of its canonical trees (related by a
    canonical rotation $r$).}

    \vspace{-6mm}
 
    \label{fig:DLLtiles}
 
 \end{center}
\end{figure}

\ifLongVersion
We will now work towards the notion of canonical rotations. For that, we first
give one technical definition. Given a canonical tree $t : \nat^*
\rightharpoonup_{fin} \mathcal{T}^c$, and a state $S = \langle s, h \rangle$,
let $u : dom(t) \rightarrow dom(h)$ be an arbitrary tree labeled with allocated
locations from $S$. For each position $p \in dom(t)$ its {\em explicit
neighbourhood} with respect to $t$ and $u$ is the state $S_{\langle t, u, p
\rangle} = \langle s_p, h_p \rangle$ defined as follows:\begin{itemize}

  \item $h_p = \overline{h}$, and
  
  \item \begin{minipage}{11.5cm}
  \vspace*{-3mm}
  \[s_p(x) = \left\{\begin{array}{ll}
  u(p) & \mbox{if $x \in \port_{-1}^{fw}(t(p)) \cup
    \port_0^{bw}(t(p)) \cup \ldots \cup \port_{\#_t(p)-1}^{bw}(t(p))$} \\ 
  u(p.i) & \mbox{if $x \in \port_i^{fw}(t(p))$, for some $0 \leq
    i < \#_t(p)$} \\ 
  u(p.(-1)) & \mbox{if $p \neq \epsilon$ and $x \in
    \port_{-1}^{bw}(t(p))$} \\ 
  \overline{s}(x) & \mbox{otherwise}
  \end{array}\right.\]
  \end{minipage} \\
  where $S_{\langle u(p) \rangle} = \langle \overline{s}, \overline{h}
  \rangle$ is the neighbourhood of the location $u(p)$ in $S$.

\end{itemize}
\fi

%!!!!!!!!!!!!!!!!!!!!!!!!!!!!!!!!!!!!!!!!!!!!!!!!
\enlargethispage{6mm}
%!!!!!!!!!!!!!!!!!!!!!!!!!!!!!!!!!!!!!!!!!!!!!!!!

An important property of canonical trees is that each state that is a
model of the characteristic tile $\Phi(t)$ of a canonical tree $t$
(i.e.\ $S \models \Phi(t)$) can be uniquely described by a local
spanning tree $u : dom(t) \rightarrow Loc$, which has the same
structure as $t$, i.e.\ $dom(u) = dom(t)$.  An immediate consequence
is that any two models of $\Phi(t)$ differ only by a~renaming of the
allocated locations, i.e.\ they are identical up to isomorphism.
\ifLongVersion
\begin{lemma}\label{canonical-model}
Let $t : \nat^* \rightharpoonup_{fin} \mathcal{T}^c$ be a canonical
tree, and let $S = \langle s, h \rangle$ be a state. Then $S \models
\Phi(t)$ iff there exists a local spanning tree $u : dom(t)
\rightarrow dom(h)$ such that, for all $p \in dom(t)$:
\begin{enumerate}
\item\label{cm:1} $\len{\port_i^{fw}(t(p))} = \card{\{k \in Sel ~|~ u(p) 
  \arrow{k}{S} u(p.i)\}}$, for all $0 \leq i < \#_t(p)$.
\item\label{cm:2} $\len{\port_{-1}^{bw}(t(p))} = \card{\{k \in Sel ~|~
  u(p) \arrow{k}{S} u(p.(-1))\}}$ if $p \neq \epsilon$.
\item\label{cm:3} $S_{\langle t, u, p \rangle} \models \form(t(p))$.
\end{enumerate}
\end{lemma}
\begin{proof}``$\Rightarrow$'' By induction on the structure of $t$. For
  the base case $\#_t(\epsilon)=0$, i.e.\ $dom(t)=\{\epsilon\}$, we
  have $t(\epsilon) = \langle \form(t(\epsilon),
  \port_{-1}(t(\epsilon)) \rangle$ and therefore $S \models \Phi(t)$
  if and only if $S \models \form(t(\epsilon))$. Since $t(\epsilon)$
  is a canonical tile, only one location is allocated in $dom(h)$,
  i.e.\ $dom(h) = \{\ell\}$, for some $\ell \in Loc$. We define $u =
  \{(\epsilon, \ell)\}$.  It is immediate to check that $u$ is a local
  spanning tree of $S$, and that $S_{\langle t, u, \epsilon \rangle} =
  S \models \form(t(\epsilon))$ -- points (\ref{cm:1}) and
  (\ref{cm:2}) are vacuously true. 

  For the induction step $\#_t(\epsilon) = d > 0$, we have:
  $$\Phi(t,\epsilon) = t(\epsilon)^\epsilon \circledast_0 \Phi(t,0)
  \ldots \circledast_{d-1} \Phi(t,d-1)$$ Since $S \models
  \Phi(t,\epsilon)$, there exist states $S_i = \langle s_i, h_i
  \rangle$, for each $-1 \leq i < d$, such that $S_{-1} \models
  t(\epsilon)^\epsilon$ and $S_i \models \Phi(t,i)$, for all $0 \leq i
  < d$, and moreover, $S = S_{-1} \uplus S_0 \uplus \ldots \uplus
  S_{d-1}$. By the induction hypothesis, for each $0 \leq i < d$ there
  exists a local spanning tree $u_i : dom(\subtree{t}{i}) \rightarrow
  dom(h_i)$, meeting conditions (\ref{cm:1}), (\ref{cm:2}) and
  (\ref{cm:3}). Since $S_{-1} \models t(\epsilon)^\epsilon$, it must
  be that $dom(h_{-1}) = \{\ell\}$ for some location $\ell \in
  Loc$. We define $u$ as follows:
  \[\begin{array}{rcl}
  u(\epsilon) & = & \ell \\
  u(i.q) & = & u_i(q) ~\mbox{for all $q \in dom(u_i)$}
  \end{array}\]
  To prove that $u$ is a spanning tree of $S$, let $p \in dom(u)$ be a
  position, and $0 \leq i < \#_u(p)$ be a direction. We distinguish
  two cases:
  \begin{itemize}
  \item if $p = \epsilon$, then $\port_i^{fw}(t(p)) \neq \emptyset$,
    hence there exists an edge $\ell \arrow{k}{S_{-1}}
    u_i(\epsilon)$. Then $u(p) \arrow{k}{S} u(p.i)$ as well.
  \item if $p = j.q$, and $q \in dom(u_j)$, for some $0 \leq j < d$,
    by the induction hypothesis, there exists $k \in Sel$, such that
    $u(p) = u_j(q) \arrow{k}{S_j} u_j(q.i) = u(p.i)$, i.e.\ $u(p)
    \arrow{k}{S} u(p.i)$.
  \end{itemize}
  To prove that $u$ is a local spanning tree of $S$, let $\kappa
  \arrow{k}{S} \kappa'$ be an edge, for some $\kappa, \kappa' \in
  dom(h)$. Since $dom(h)$ is a disjoint union of $dom(h_{-1}) =
  \{\ell\}, dom(h_0), \ldots, dom(h_{d-1})$, we distinguish several
  cases:
  \begin{itemize}
  \item if $\kappa \in dom(h_{-1})$ and $\kappa' \in dom(h_i)$, for
    some $0 \leq i < d$, then $u(p)=\kappa=\ell$ and $u(i) =
    u_i(\epsilon) = \kappa'$ is the only possibility -- due to the
    strict semantics of SL, it is not possible to define an edge
    between $\ell$ and an location $\kappa' = u_i(q)$, unless $q =
    \epsilon$.
  \item if $\kappa, \kappa' \in dom(h_i)$, for some $0 \leq i < d$,
    then, by the induction hypothesis, there exists $q \in dom(u_i)$
    and $0 \leq j < \#_{u_i}(q)$ such that $\kappa = u_i(q) = u(i.q)$
    and $\kappa' = u_i(q.j) = u(i.q.j)$.
  \item if $\kappa \in dom(h_i)$ and $\kappa' \in dom(h_j)$, for some
    $0 \leq i,j < d$, then we reach a contradiction with the strict
    semantics of SL -- since there is no equality between output ports
    in $t(\epsilon)$, it is not possible to define an edge between a
    location from $dom(h_i)$ and $dom(h_j)$,
  \end{itemize}
  The proofs of points (\ref{cm:1}), (\ref{cm:2}) and (\ref{cm:3}) are
  by the strict semantics of SL. 

  \noindent ``$\Leftarrow$'' By induction on the structure of $t$. In
  the base case $\#_t(\epsilon) = 0$, i.e.\ $dom(t) = \{\epsilon\}$,
  then $t(\epsilon) = \langle \form(t(\epsilon),
  \port_{-1}(t(\epsilon)) \rangle$. We have $S = S_{\langle
    t,u,\epsilon \rangle} \models \form(t(\epsilon))$, by point
  (\ref{cm:3}), hence $S \models \Phi(t)$.

  For the induction step $\#_t(\epsilon) = d > 0$, let $u_i =
  \subtree{u}{i}$, for all $0 \leq i < d$. Since $u$ is a bijective
  function, we have $img(u_i) \cap img(u_j) = \emptyset$, for all $0
  \leq i < j < d$. For all $-1 \leq i < d$, we define $S_i = \langle
  s_i, h_i \rangle$, where:
  \[\begin{array}{rclr}
  h_i(\ell) & = & \left\{\begin{array}{ll}
  h(\ell) & \mbox{if $\ell \in img(u_i)$} \\
  \bot & \mbox{otherwise}
  \end{array}\right. & ~\mbox{for all $\ell \in Loc$, if $i \geq 0$} \\
  h_{-1}(\ell) & = & \left\{\begin{array}{ll}
  h(\ell) & \mbox{if $u(\epsilon) = \ell$} \\
  \bot & \mbox{otherwise}
  \end{array}\right. & ~\mbox{for all $\ell \in Loc$} \\
  s_i(x) & = & \left\{\begin{array}{ll}
  s(x) & \mbox{if $s(x) \in Img(h_i)$} \\
  \bot & \mbox{otherwise}
  \end{array}\right. & ~\mbox{for all $x \in Var$}
  \end{array}\]
  It is not hard to show that $S = S_{-1} \uplus S_0 \uplus \ldots
  S_{d-1}$. Since $u$ is a local spanning tree of $S$, $u_i$ is a
  local spanning tree for $S_i$, for all $0 \leq i < d$, and
  conditions (\ref{cm:1}), (\ref{cm:2}) and (\ref{cm:3}) hold for
  $S_i$ and $u_i$, respectively. By the induction hypothesis, we have
  $S_i \models \Phi(\subtree{t}{i}, \epsilon)$. By points
  (\ref{cm:1}), (\ref{cm:2}) and (\ref{cm:3}) moreover, we have that
  $S_{-1} \models \form(t(\epsilon))$. Hence $S \models \Phi(t)$. 
\qed\end{proof}
\fi
The following definition is a refinement of Def. \ref{Rotation}. The change in
the structure of the tree is mirrored by a change in the structure of the
canonical tiles labeling the tree.

\begin{definition}\label{Canonical-rotation}
  Given two canonical trees $t,u : \nat^* \rightharpoonup_{fin}
  \mathcal{T}^c$, we say that $u$ is a {\em canonical rotation} of
  $t$, denoted $t \sim_r^c u$, if and only if $r : dom(t) \rightarrow
  dom(u)$ is a bijective function, and for all $p \in dom(t)$, there
  exists a substitution $\sigma_p : Var \rightharpoonup_{fin} Var$
  such that $\form(t(p))[\sigma_p] \equiv \form(u(r(p)))$, and for all
  $0 \leq i < \#_t(p)$, there exists $j \in \mathcal{D}(u)$ such that
  $r(p.i) = r(p).j$, and:\eject  %!!!!!!!!!!!!!!!!!!!!!!!!!!!!!!!!!!!!!!!!!
  \vspace*{-10mm}\[\begin{array}{rcl}
  \port_i^{fw}(t(p))[\sigma_p] & \equiv & \mathbf{if}~ j \geq 0 ~\mathbf{then}~
  \port_j^{fw}(u(r(p))) ~\mathbf{else}~ \port_{-1}^{bw}(u(r(p))) \\
  \port_i^{bw}(t(p))[\sigma_p] & \equiv & \mathbf{if}~ j \geq 0 ~\mathbf{then}~
  \port_j^{bw}(u(r(p))) ~\mathbf{else}~ \port_{-1}^{fw}(u(r(p))) 
  \end{array}\]

  %% \port_i^{fw}(t(p))[\sigma_p] \equiv \left\{\begin{array}{ll}
  %%   \port_j^{fw}(u(r(p))) & \mbox{if $j \geq 0$} \\
  %%   \port_{-1}^{bw}(u(r(p))) & \mbox{if $j = -1$}
  %%   \end{array}\right. 

  %% \port_i^{bw}(t(p))[\sigma_p] \equiv \left\{\begin{array}{ll}
  %%   \port_j^{bw}(u(r(p))) & \mbox{if $j \geq 0$} \\
  %%   \port_{-1}^{fw}(u(r(p))) & \mbox{if $j = -1$}
  %%   \end{array}\right.

\end{definition}

%!!!!!!!!!!!!!!!!!!!!!!!!!!!!!!!!!!!!!!!!!!!!!!!!!!!!!!!!!!!!!!!!!!!!!!!!!!!!!!
%
% The above contains a slight ERROR in that Point 3 should be outside of the
% enumeration since it not dependent on i and, moreover, it should hold even
% if there are no successors (and hence the quantified enumeration is vacuously
% true. This move was not done yet since it incurs a lot of changes in the
% subsequent definitions and proofs referring to the numbers of the points.
%
%!!!!!!!!!!!!!!!!!!!!!!!!!!!!!!!!!!!!!!!!!!!!!!!!!!!!!!!!!!!!!!!!!!!!!!!!!!!!!!

\begin{example}[cont. of Ex.~\ref{ex:DLLtree}]\label{ex:DLLrot}The
notion of canonical rotation is illustrated by the canonical rotation $r$
relating the two canonical trees of a DLL shown in Fig.~\ref{fig:DLLtiles}. In
its case, the variable substitutions are simply the identity in each node. Note,
in particular, that when the tile $0$ of the left tree (i.e., the second one
from the top) gets rotated to the tile $1$ of the right tree (i.e., the right
successor of the root), the input and output ports get swapped and so do their
forward and backward parts.%\qed
\end{example}

The following lemma is the key for proving completeness of our
entailment checking for local inductive systems: if a (local) state is
a model of the characteristic tiles of two different canonical tree,
then these trees must be related by canonical rotation.

\begin{lemma}\label{canonical-rotation-lemma}
  Let $t : \nat^* \rightharpoonup_{fin} \mathcal{T}^c$ be a canonical
  tree and $S = \langle s, h \rangle$ be a state such that $S \models
  \Phi(t)$. Then, for any canonical tree $u : \nat^*
  \rightharpoonup_{fin} \mathcal{T}^c$, we have $S \models \Phi(u)$ iff
  $t \sim^c u$.
\end{lemma}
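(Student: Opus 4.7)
The plan is to prove both directions by combining Lemma~\ref{canonical-model} (which gives a canonical correspondence between models of $\Phi(t)$ and local spanning trees with the same skeleton as $t$) with Lemma~\ref{spanning-tree-rotation} (which says that any two spanning trees of a local state are related by rotation).

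For the ``$\Rightarrow$'' direction, assume $S \models \Phi(t)$ and $S \models \Phi(u)$. First I would invoke Lemma~\ref{canonical-model} twice, obtaining local spanning trees $v_t : dom(t) \rightarrow dom(h)$ and $v_u : dom(u) \rightarrow dom(h)$ that mirror the structures of $t$ and $u$ respectively, with port-length constraints matching the selector counts of the incident edges in $S$. Since both are local spanning trees of the same state $S$, Lemma~\ref{spanning-tree-rotation} yields $v_t \sim_r v_u$ for $r = v_u^{-1} \circ v_t$, and this $r$ is the candidate bijection witnessing $t \sim^c_r u$. For each position $p \in dom(t)$, I would define the substitution $\sigma_p$ by sending each port variable of $t(p)$ to the variable of $u(r(p))$ that labels the same dangling or allocated neighbour of the location $v_t(p) = v_u(r(p))$; the canonicity of both tiles (ordering of ports by selector index, and factorisation into forward/backward parts) guarantees that this correspondence is well-defined and uniquely determined. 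The swap between forward and backward port components when $r$ reverses the edge between $p$ and $p.i$ (i.e.\ when $r(p.i) = r(p).(-1)$) is exactly the case analysis already present in Def.~\ref{Canonical-rotation}, and it reflects precisely that an edge counted as outgoing from $v_t(p)$ in $S_{\langle t, v_t, p \rangle}$ is counted as incoming to $v_u(r(p.i))$ in $S_{\langle u, v_u, r(p.i) \rangle}$.

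For the ``$\Leftarrow$'' direction, assume $S \models \Phi(t)$ and $t \sim^c_r u$. Applying Lemma~\ref{canonical-model} to $t$ and $S$ produces a local spanning tree $v_t : dom(t) \rightarrow dom(h)$ satisfying the three port and neighbourhood conditions. I would then define $v_u = v_t \circ r^{-1} : dom(u) \rightarrow dom(h)$ and verify that $v_u$ is still a local spanning tree of $S$: the fact that $r$ witnesses a rotation of the underlying unlabelled trees (Def.~\ref{Rotation}) ensures the tree-structural condition of Def.~\ref{SpanningTree} is preserved, while locality transfers along rotations by Lemma~\ref{spanning-tree-rotation}. Next, I would check conditions~(\ref{cm:1})--(\ref{cm:3}) of Lemma~\ref{canonical-model} for $u$ and $v_u$, using the port length and substitution equalities supplied by Def.~\ref{Canonical-rotation}: the forward/backward factorisation ensures that the number of selectors along each edge of the spanning tree matches the length of the forward or backward port in $u(r(p))$, with the roles swapped precisely when $r$ reverses that edge. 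Finally, $\form(u(r(p))) \equiv \form(t(p))[\sigma_p]$ together with the fact that $\sigma_p$ realises the same correspondence between port variables and neighbouring locations implies $S_{\langle u, v_u, r(p)\rangle} \models \form(u(r(p)))$, so Lemma~\ref{canonical-model} gives $S \models \Phi(u)$.

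The main obstacle is the bookkeeping in the ``$\Rightarrow$'' direction, and specifically the verification that the substitutions $\sigma_p$ are well-defined at each position: one must argue that the unique decomposition $\port_i = \port_i^{fw} \cdot \port_i^{bw}$, together with the canonicity requirements that order the forward (resp.\ backward) entries by selector index and that $\Pi$ equates the ``dual'' components to the allocated location $z$, forces the variable-level matching to be forced by the selector-level matching between the two neighbourhoods. Once this identification is fixed, compatibility of $\sigma_p$ across the composition $\circledast_i$ follows from the canonical tree condition (Def.~\ref{Canonically-tiled}) equating forward port lengths of parent outputs and child inputs, and from the uniqueness of the factorisation into forward/backward parts in canonical tiles.
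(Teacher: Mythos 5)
Your proposal follows essentially the same route as the paper's own proof: both directions rest on Lemma~\ref{canonical-model} to pass between models of characteristic tiles and local spanning trees, Lemma~\ref{spanning-tree-rotation} to relate the two spanning trees via $r = v_u^{-1}\circ v_t$ (resp.\ to transfer $v_u = v_t\circ r^{-1}$ back), and the same case split on whether $r$ reverses an edge to justify the forward/backward port swap. The remaining bookkeeping you flag (well-definedness of $\sigma_p$ and the syntactic identity $\form(t(p))[\sigma_p]\equiv\form(u(r(p)))$) is resolved in the paper exactly as you anticipate, via the explicit-neighbourhood models and the strict semantics, so the outline is sound.
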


\ifLongVersion
\begin{proof}If $S \models \Phi(t)$, there exists a local spanning tree $t_s :
dom(t) \rightarrow dom(h)$ of $S$, that meets the three conditions of Lemma
\ref{canonical-model}. 

\noindent``$\Rightarrow$'' If $S \models \Phi(u)$ there exists a local
spanning tree $u_s : dom(u) \rightarrow dom(h)$ of $S$, that meets the
three conditions of Lemma \ref{canonical-model}. Since $t_s$ and $u_s$
are spanning trees of $S$, by Lemma \ref{spanning-tree-rotation}, we
have $t_s \sim_{u_s^{-1} \circ t_s} u_s$. Let $r = u_s^{-1} \circ t_s$
from now on. Clearly, for any $p \in dom(t)$, we have $t_s(p) =
u_s(r(p))$. Since $dom(t)=dom(t_s)$ and $dom(u)=dom(u_s)$, we have $t
\sim_r u$ as well. 

We further need to prove the three points of
Def. \ref{Canonical-rotation} in order to show that $t \sim_r^c
u$. Let $p \in dom(t)$ be an arbitrary location. For the first two
points, let $0 \leq i < \#_t(p)$ be a direction. We have $t_s(p.i) =
u_s(r(p.i)) = u_s(r(p).j)$, for some $j \in \mathcal{D}(u)$. We
distinguish two cases:
\begin{itemize}
\item if $j \geq 0$, we compute:
  \[\begin{array}{rcl}
  \len{\port_i^{fw}(t(p))} & = & \card{\{k \in Sel ~|~ t_s(p) \arrow{k}{S} t_s(p.i)\}} \\
  & = & \card{\{k \in Sel ~|~ u_s(r(p)) \arrow{k}{S} u_s(r(p).j)\}} \\
  & = & \len{\port_j^{fw}(u(r(p)))} \\
  \\
  \len{\port_i^{bw}(t(p))} & = & \len{\port_{-1}^{bw}(t(p.i))} ~\mbox{(by Def. \ref{Canonically-tiled})} \\
    & = & \card{\{k \in Sel ~|~ t_s(p.i) \arrow{k}{S} t_s(p)\}} \\
    & = & \card{\{k \in Sel ~|~ u_s(r(p).j) \arrow{k}{S} u_s(r(p))\}} \\
    & = & \len{\port_j^{bw}(u(r(p)))}
  \end{array}\]
\item if $j = -1$, we compute:
  \[\begin{array}{rcl}
  \len{\port_i^{fw}(t(p))} & = & \card{\{k \in Sel ~|~ u_s(r(p)) \arrow{k}{S} u_s(r(p).(-1))\}} \\
  & = & \len{\port_{-1}^{bw}(u(r(p)))} \\
  \\
  \len{\port_i^{bw}(t(p))} & = & \card{\{k \in Sel ~|~ u_s(r(p).(-1)) \arrow{k}{S} u_s(r(p))\}} \\
  & = & \card{\{k \in Sel ~|~ u_s(r(p).(-1)) \arrow{k}{S} u_s(r(p).(-1).j')\}} \\
  && ~\mbox{for some $j' \geq 0$ such that $r(p)=r(p).(-1).j'$} \\
  & = & \len{\port_{j'}^{fw}(u(r(p).(-1)))} \\
  & = & \len{\port_{-1}^{fw}(u(r(p)))} ~\mbox{(by Def. \ref{Canonically-tiled})}
  \end{array}\]
\end{itemize}
Since all variables are pairwise distinct in $\port_i(t(p))$ and
$\port_j(u(r(p)))$, respectively, one can define a substitution
$\sigma_p$ meeting the conditions of the first two points of
Def. \ref{Canonical-rotation}.

For the third point of Def. \ref{Canonical-rotation}, by Lemma
\ref{canonical-model}, we have that $S_{\langle t, t_s, p \rangle}
\models \form(t(p))$ and $S_{\langle u, u_s, r(p) \rangle} \models
\form(u(r(p)))$, where $S_{\langle t, t_s, p \rangle} = \langle s_t,
h_t \rangle$ is the explicit neighbourhood of $p$ w.r.t $t$ and $t_s$,
and $S_{\langle u, u_s, p \rangle} = \langle s_u, h_u \rangle$ is the
explicit neighbourhood of $r(p)$ w.r.t $u$ and $u_s$. Since $t(p)$ and
$u(r(p))$ are canonical tiles, we have:
\[\begin{array}{rcl}
\form(t(p)) & \equiv & (\exists z)~ z \mapsto (y_0, \ldots, y_{m-1}) \wedge \Pi_t \\
\form(u(r(p))) & \equiv & (\exists w)~ w \mapsto (v_0, \ldots, v_{n-1}) \wedge \Pi_u
\end{array}\]
where 
\[\begin{array}{rcl}
\Pi_t & \equiv & \port_{-1}^{fw}(t(p)) = z ~\wedge~ 
\bigwedge_{i=0}^{\#_t(p)-1}\port_i^{bw}(t(p)) = z \\ 
\Pi_u & \equiv & \port_{-1}^{fw}(u(r(p))) = w ~\wedge~ 
\bigwedge_{i=0}^{\#_u(r(p))-1}\port_i^{bw}(u(r(p))) = w
\end{array}\]
We extend the substitution $\sigma_p$ to $\sigma_p[z \leftarrow w]$ in
the following. It is not hard to check that $s_t \circ \sigma_p = s_u$
and $h_t = h_u$. Hence $S_{\langle u, u_s, p \rangle} \models
\form(t(p))[\sigma_p]$. Since both $\form(t(p))[\sigma_p]$ and
$\form(u(r(p)))$ have the same model, by the definition of the
(strict) semantics of SL, it follows that the numbers of edges are the
same, i.e.\ $m=n$, and either both $z$ and $w$ are quantified, or they
are free. Thus, we obtain that $\form(t(p))[\sigma_p] \equiv
\form(u(r(p)))$.

\noindent''$\Leftarrow$'' It $t \sim^c u$, there exists a bijective
function $r : dom(t) \rightarrow dom(u)$, meeting the conditions of
Def. \ref{Canonical-rotation}. Let $u_s = t_s \circ r^{-1}$ be a
tree. Since $t_s$ and $r$ are bijective, then also $u_s$ is bijective,
and $dom(u_s) = dom(u)$. To prove that $S \models \Phi(u)$, it is
enough to show that $u_s$ is a local spanning tree of $S$, meeting the
three conditions of Lemma \ref{canonical-model}.

To show that $u_s$ is a local spanning tree of $S$, let $p \in
dom(u_s)$ and $i \in \mathcal{D}_+(u_s)$ such that $p.i \in
dom(u_s)$. Since $t_s \sim_{r} u_s$, by
Prop. \ref{Rotation:equivalence}, we have $u_s \sim_{r^{-1}} t_s$,
hence there exists $j \in \mathcal{D}(t_s)$ such that $r^{-1}(p.i) =
r^{-1}(p).j$. Since $t_s$ is a spanning tree of $S$, there exists a
selector $k \in Sel$ such that: $$u_s(p) = t_s(r^{-1}(p)) \arrow{k}{S}
t_s(r^{-1}(p).j) = t_s(r^{-1}(p.i)) = u_s(p.i)$$ Hence $u_s$ is a
spanning tree of $S$. In order to prove its locality, let $\ell
\arrow{k}{} \ell'$ be an edge of $S$, for some $\ell, \ell \in loc(S)$
and $k \in Sel$. Since $t_s$ is a local spanning tree of $S$, there
exist $p \in dom(t_s)$ and $i \in \mathcal{D}_+(t_s)$ such that $\ell
= t_s(p) = u_s(r(p))$ and $\ell' = t_s(p.i) = u_s(r(p.i))$. Since $t_s
\sim_{r} u_s$, there exists $j \in \mathcal{D}(u_s)$ such that $r(p.i)
= r(p).j$. Hence we have $u_s(r(p)) = \ell \arrow{k}{} \ell' =
u_s(r(p).j)$, i.e.\ the edge is local w.r.t. $u_s$ as well.

To prove point (\ref{cm:1}) of Lemma \ref{canonical-model}, let $p \in
dom(u)$ be an arbitrary position, and $0 \leq i < \#_u(p)$ be a
direction. Since $u_s \sim_{r^{-1}} t_s$, we have $r^{-1}(p.i) =
r^{-1}(p).j$, for some $j \in \mathcal{D}(t)$. We compute:
\[\begin{array}{rcl}
\card{\{k \in Sel ~|~ u_s(p) \arrow{k}{S} u_s(p.i)\}} & = & 
\card{\{k \in Sel ~|~ t_s(r^{-1}(p)) \arrow{k}{S} t_s(r^{-1}(p.i) \}} \\
& = & \card{\{k \in Sel ~|~ t_s(r^{-1}(p)) \arrow{k}{S} t_s(r^{-1}(p).j) \}}
\end{array}\]
We distinguish two cases:
\begin{itemize}
\item if $j \geq 0$, we have, by Lemma \ref{canonical-model} applied
  to $t$ and $t_s$:
\[\begin{array}{rcl}
\card{\{k \in Sel ~|~ t_s(r^{-1}(p)) \arrow{k}{S} t_s(r^{-1}(p).j) \}}
& = & \len{\port_j^{fw}(t(r^{-1}(p)))} \\ 
& = & \len{\port_i^{fw}(u(p))}
\end{array}\]
The last equality is because $r(r^{-1}(p).j) = r(r^{-1}(p)).i$. 
\item if $j = -1$, there exists $d \in \mathcal{D}_+(t)$ such that
  $r^{-1}(p) = (r^{-1}(p).(-1)).d$. But then we have:
\[\begin{array}{rcl}
r((r^{-1}(p).(-1)).d) & = & r(r^{-1}(p)) = p = (p.i).(-1) \\
& = & r(r^{-1}(p).(-1)).(-1)
\end{array}\]
We compute, further:
\[\begin{array}{rcl}
\card{\{k \in Sel ~|~ t_s(r^{-1}(p)) \arrow{k}{S} t_s(r^{-1}(p).j) \}}
& = & \len{\port_{-1}^{bw}(t(r^{-1}(p)))} \\ & = &
\len{\port_d^{bw}(t(r^{-1}(p).(-1)))} ~\mbox{(by Def. \ref{Canonically-tiled})} \\ 
& = & \len{\port_{-1}^{fw}(u(p.i))} ~\mbox{(by Def. \ref{Canonical-rotation})} \\
& = & \len{\port_i^{fw}(u(p))} ~\mbox{(by Def. \ref{Canonically-tiled})}
\end{array}\]
\end{itemize}
For point (\ref{cm:2}) of Lemma \ref{canonical-model}, one applies a
symmetrical argument. To prove point (\ref{cm:3}) of Lemma
\ref{canonical-model}, let $S_{\langle t, t_s, r^{-1}(p) \rangle} =
\langle s_t, h_t \rangle$ be the explicit neighbourhood of $r^{-1}(p)$
w.r.t $t$ and $t_s$. It is not hard to check that $S_{\langle u, u_s,
  p \rangle} = \langle s_t \circ \sigma_p, h_t \rangle$, hence:
$$S_{\langle u, u_s, p \rangle} = \langle s_t \circ \sigma_p, h_t
\rangle \models \form(t(r^{-1}(p)))[\sigma_p] \equiv \form(u(p))$$ The
last equivalence is by Def. \ref{Canonical-rotation}.\qed\end{proof}
\fi

%!!!!!!!!!!!!!!!!!!!!!!!!!!!!!!!!!!!!!!!!!!!!!!!!
\enlargethispage{6mm}
%!!!!!!!!!!!!!!!!!!!!!!!!!!!!!!!!!!!!!!!!!!!!!!!!

%==============================================================================
\vspace*{-3mm}\subsection{Quasi-Canonical Tiles}\vspace*{-1mm}
\label{sec:quasi-canonical-tiles}
%==============================================================================

This section introduces a generalization of canonical trees, denoted
as \emph{quasi-canonical trees}, that are used to define states with
non-local edges. A singleton tile $T = \langle \varphi \wedge \Pi,
\vec{x}_{-1}, \vec{x}_0, \ldots, \vec{x}_{d-1} \rangle$, for some $d
\geq 0$, is said to be {\em quasi-canonical} if and only if $\vec{x}_i
\equiv \vec{x}_i^{fw} \cdot \vec{x}_i^{bw} \cdot \vec{x}_i^{eq}$, for
each $-1 \leq i < d$, $\Pi$ is pure formula, and all of the following
holds:\vspace*{-1mm}\begin{enumerate}

  \item $\langle \varphi,~ \vec{x}_{-1}^{fw} \cdot \vec{x}_{-1}^{bw},~
  \vec{x}_0^{fw} \cdot \vec{x}_0^{bw},~ \ldots,~ \vec{x}_{d-1}^{fw} \cdot
  \vec{x}_{d-1}^{bw} \rangle$ is a canonical tile.

  \item For each $0 \leq i < \len{\vec{x}_{-1}^{eq}}$, either
  $(\vec{x}_{-1}^{eq})_i \in FV(\varphi)$ or $(\vec{x}_{-1}^{eq})_i =_\Pi
  (\vec{x}^{eq}_k)_j$ for some unique indices $0 \leq k < d$ and $0 \leq j <
  \len{\vec{x}_k^{fw}}$.

  \item For each $0 \leq k < d$ and each $0 \leq j < \len{\vec{x}_k^{eq}}$,
  either $(\vec{x}^{eq}_k)_j \in FV(\varphi)$ or exactly one of the following
  holds: (i) $(\vec{x}^{eq}_k)_j =_\Pi (\vec{x}_{-1}^{eq})_i$ for some unique
  index $0 \leq i < \len{\vec{x}_{-1}^{eq}}$. (ii)~$(\vec{x}^{eq}_k)_j =_\Pi
  (\vec{x}_r^{eq})_s$ for some unique indices $0 \leq r < d$ and $0 \leq s <
  \len{\vec{x}_r^{eq}}$.

  \item For any two variables $x,y \in \vec{x}_{-1}^{eq} ~\cup~
  \bigcup_{i=0}^{d-1} \vec{x}_i^{eq}$, $x =_\Pi y$ only in one of the cases
  above.

\end{enumerate}

\eject %!!!!!!!!!!!!!!!!!!!!!!!!!!!!!!!!!!!!!!!!!!!!!!!!!!!!!!!!!!!!!!!!!!!!

\ifLongVersion
Notice that in a quasi-canonical tile $T = \langle (\exists z)~ z
\mapsto (y_0, \ldots, y_{m-1}) \wedge \Pi, \vec{x}_{-1},~ \vec{x}_0,~
\ldots, ~\vec{x}_{d-1} \rangle$, the free variables from the set
$\{y_0,\ldots,y_{m-1}\} \setminus \{z,\nil\}$ are allowed to occur in
some tuple $\vec{x}_i^{eq}$, for $-1 \leq i < d$, as well. These
variables can be used to describe non-local edges of states because
equality constraints between $\vec{x}_{-1}^{eq}$ and $\vec{x}_i^{eq}$,
for $0 \leq i < d$, may span several tiles in a tree. In this case,
the factorization of each port $\vec{x}_i$, $-1 \leq i < d$, into
$\vec{x}_i^{fw}$, $\vec{x}_i^{bw}$, and $\vec{x}_i^{eq}$ is not
uniquely determined by the distribution of the referenced variables
$y_0, \ldots, y_{m-1}$. In practice (Sec.\ref{sec:canonization}) we
use a conservative heuristic to define this factorization. 
\fi
We denote in the following $\port_i^{eq}(T) \equiv \vec{x}_i^{eq}$,
for all $-1 \leq i < d$. The set of quasi-canonical tiles is denoted
by $\mathcal{T}^{qc}$.  The next definition of quasi-canonical trees
extends Def.  \ref{Canonically-tiled} to the case of quasi-canonical
tiles.

\begin{definition}\label{Quasi-canonically-tiled}
A tiled tree $t : \nat^* \rightharpoonup_{fin} \mathcal{T}^{qc}$ is {\em
  quasi-canonical} if and only if, for any $p \in dom(t)$ and each $0
\leq i < \#_t(p)$, such that $t(p) = \langle \varphi, \vec{x}_{-1},
\vec{x}_0, \ldots, \vec{x}_{d-1} \rangle$, $t(p.i) = \langle \psi,
\vec{y}_{-1}, \vec{y}_0, \ldots, \vec{y}_{e-1} \rangle$, $\vec{x}_i
\equiv \vec{x}^{fw}_i \cdot \vec{x}^{bw}_i \cdot \vec{x}^{eq}_i$ and
$\vec{y}_{-1} \equiv \vec{y}_{-1}^{fw} \cdot \vec{y}_{-1}^{bw} \cdot
\vec{y}_{-1}^{eq}$, we have $\len{\vec{x}^{fw}_i} =
\len{\vec{y}_{-1}^{fw}}$, $\len{\vec{x}^{bw}_i} =
\len{\vec{y}_{-1}^{bw}}$, and $\len{\vec{x}^{eq}_i} =
\len{\vec{y}_{-1}^{eq}}$.
\end{definition}

\begin{example}[cont. of Ex.~\ref{ex:TLL}]\label{ex:TLLtree}For an
illustration of the notion of canonical trees, see Fig.~\ref{fig:TLLtiles},
which shows a canonical tree for the TLL from Fig.~\ref{fig:exTLL}. The figure
uses the same notation as Fig.~\ref{fig:DLLtiles}. In all the ports, the first
variable is in the forward part, the backward part is empty, and the rest is the
equality part.%\qed
\end{example}

\begin{figure}[t]
  \begin{center}
 
    \epsfig{file=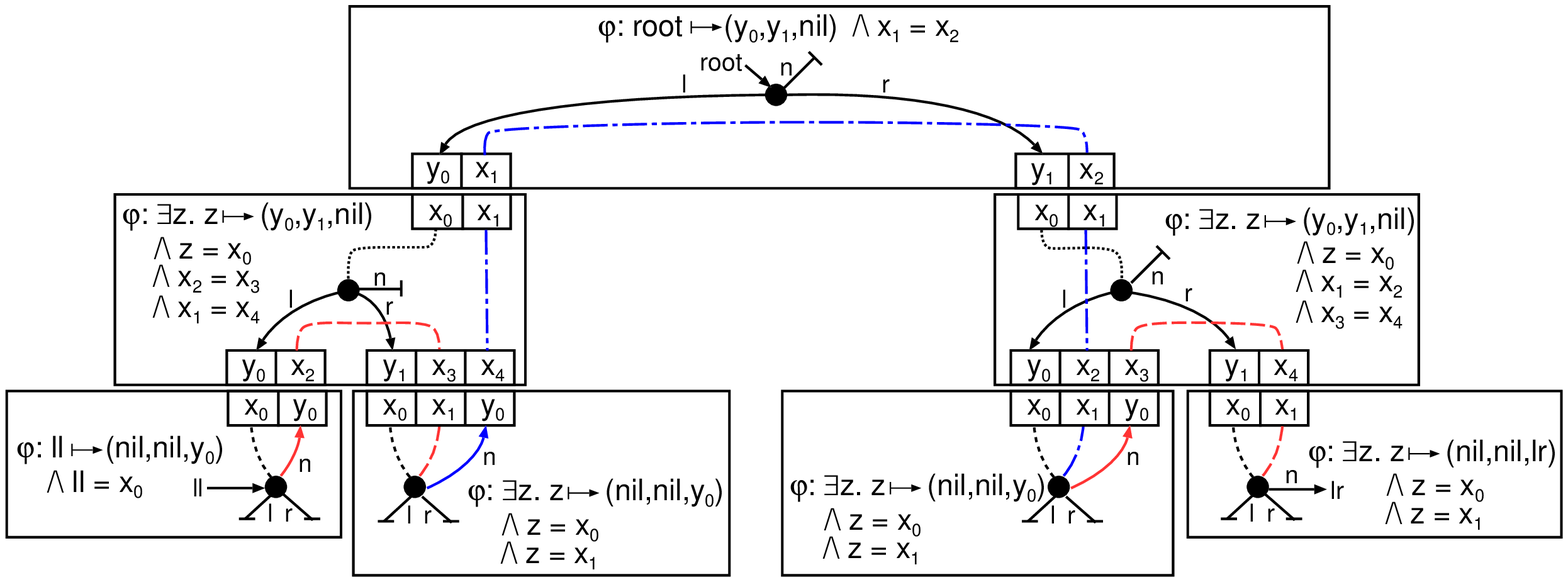, width=120mm}

    \vspace{-2mm}
 
    \caption{An example of a quasi-canonically tiled tree for the tree with
    linked leaves from Fig.~\ref{fig:exTLL}.}

    \vspace{-6mm}
 
    \label{fig:TLLtiles}
 
 \end{center}
\end{figure}

\ifLongVersion
For a quasi-canonical tree $t : \nat^* \rightharpoonup_{fin}
\mathcal{T}^{qc}$, we define its {\em canonical projection} as $t^c :
dom(t) \rightarrow \mathcal{T}^c$, such that, for all $p \in dom(t)$,
we have:
\[\begin{array}{rcl}
t^c(p) & = & \langle \psi, \port_{-1}^{fw}(t(p)) \cdot
\port_{-1}^{bw}(t(p)), \ldots, \port_{\#_t(p)-1}^{fw}(t(p)) \cdot
\port_{\#_t(p)-1}^{bw}(t(p)) \rangle \\
&& \mbox{where}~ \psi \iff \exists \port_{-1}^{eq}(t(p)) \ldots \exists
\port_{\#_t(p)-1}^{eq}(t(p)) ~.~ \form(t(p))
\end{array}\]
  Intuitively, the canonical projection of a tree removes all equality variables
  from every tile in the tree. Since these variables are involved only in
  equality constraints, the above quantifiers $\exists \port_i^{eq}(t(p))$ can
  be eliminated. We can now define the notion of quasi-canonical rotation for
  quasi-canonical trees as an extension of Def. \ref{Canonical-rotation}.
\else
The {\em canonical projection} $t^c$ of a quasi-canonical tree is the
canonical tree obtained by removing the $\port_i^{eq}(t(p))$ tuples
from each tile, for all $p \in dom(t)$, and existentially quantifying
the eliminated variables from the pure formulae in the tiles.
\fi

\begin{definition}\label{Quasi-canonical-rotation} Given two quasi-canonical
trees $t,u : \nat^* \rightharpoonup_{fin} \mathcal{T}^{qc}$, we say that $u$ is
a~\emph{quasi-canonical rotation} of $t$, denoted $t \sim_r^{qc} u$, if and only
if $r : dom(t) \rightarrow dom(u)$ is a~bijective function such that $t^c
\sim^c_r u^c$ and, for all $p \in dom(t)$, there exists a substitution $\sigma_p
: Var \rightharpoonup_{fin} Var$ such that, for all $0 \leq i < \#_t(p)$ and all
$-1 \leq j < \#_t(p)$, where $r(p.i) = r(p).j$, we have
$\port_i^{eq}(t(p))[\sigma] \equiv \port_j^{eq}(u(r(p)))$.\end{definition}

The increase in expressivity (i.e.\ the possibility of defining non-local edges)
comes at the cost of a loss of completeness. The following lemma generalizes the
necessity direction ($\Leftarrow$) of Lemma \ref{canonical-rotation-lemma} for
quasi-canonical tiles. Notice that the sufficiency ($\Rightarrow$) direction
does not hold in general.

\begin{lemma}\label{quasi-canonical-rotation-lemma} Let $t,u : \nat^*
\rightharpoonup_{fin} \mathcal{T}^{qc}$ be quasi-canonical trees such that $t
\sim^{qc} u$. For all states $S$, if $S \models \Phi(t)$, then $S \models
\Phi(u)$.\end{lemma}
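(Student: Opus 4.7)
The plan is to factor through the canonical projections and lift the conclusion of Lemma~\ref{canonical-rotation-lemma} to account for the additional equality-port constraints present in quasi-canonical tiles.

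For any quasi-canonical tree $v$, I would first observe that the characteristic tile $\Phi(v^c)$ of its canonical projection is obtained from $\Phi(v)$ by existentially quantifying away the equality-port variables, so $\Phi(v) \models \Phi(v^c)$ as SL formulas. Hence the hypothesis $S \models \Phi(t)$ gives $S \models \Phi(t^c)$, and since $t \sim^{qc} u$ entails $t^c \sim^c u^c$ by Def.~\ref{Quasi-canonical-rotation}, an application of Lemma~\ref{canonical-rotation-lemma} yields $S \models \Phi(u^c)$. This already establishes the spatial shape together with the forward/backward port identifications that $\Phi(u)$ imposes on $S$.

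To strengthen $S \models \Phi(u^c)$ to $S \models \Phi(u)$, I would construct a valuation for the extra equality-port variables of $\Phi(u)$ by transporting a valuation $v_t$ that witnesses $S \models \Phi(t)$. Concretely, for each $p \in dom(t)$, each $0 \leq i < \#_t(p)$, and the unique $j \in \mathcal{D}(u)$ with $r(p.i) = r(p).j$, Def.~\ref{Quasi-canonical-rotation} guarantees that $\sigma_p$ bijectively maps $\port_i^{eq}(t(p))$ onto $\port_j^{eq}(u(r(p)))$; I would set $v_u(\sigma_p(x)) = v_t(x)$ for every $x$ in such an equality port. Outside the equality parts, the valuation $v_u$ is supplied by the canonical projection witness obtained in the previous paragraph.

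The main technical hurdle is to verify that $v_u$ is \emph{consistent at composition interfaces} in $u$: each equality variable appearing on both sides of a composition in $u$ must receive a single value. This will follow by matching each composition interface in $u$ with its preimage interface in $t$ under the rotation (with a short case analysis on whether the edge in $u$ is rotated, i.e.\ whether $j=-1$), and invoking the fact that $v_t$ already agrees on the two sides of the preimage interface in $\Phi(t)$. Once consistency is established, each equality conjunct in the pure part of $\Phi(u)$ is the $\sigma_p$-image of a corresponding conjunct in $\Phi(t)$ satisfied by $v_t$ --- since $\form(t(p))[\sigma_p] \equiv \form(u(r(p)))$ is inherited from Def.~\ref{Canonical-rotation} via $t^c \sim^c u^c$ --- so $v_u$ satisfies all of them, and combined with $S \models \Phi(u^c)$ this delivers $S \models \Phi(u)$.
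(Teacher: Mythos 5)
Your proposal is correct and its first half coincides exactly with the paper's proof: both pass to the canonical projections, use $t^c \sim^c_r u^c$ from Def.~\ref{Quasi-canonical-rotation}, and invoke Lemma~\ref{canonical-rotation-lemma} to get $S \models \Phi(u^c)$. Where you diverge is in how the equality-port constraints of $\Phi(u)$ are discharged. You argue directly: transport the witnessing valuation of the equality variables from $\Phi(t)$ along the substitutions $\sigma_p$ and the rotation $r$, and check consistency at the composition interfaces of $u$ by pulling each interface back to its preimage interface in $t$ (using that the equality ports correspond bijectively and with matching lengths by Defs.~\ref{Quasi-canonical-rotation} and~\ref{Quasi-canonically-tiled}). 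The paper instead argues by contradiction: if $S \not\models \Phi(u)$ while $S \models \Phi(u^c)$, the failure must be witnessed by a chain of $=_{\Pi_u}$-equalities linking two allocated variables across a path $p_0,\ldots,p_k$ in $u$; applying $r^{-1}$ and the $\sigma_p^{-1}$ transports this chain to $t$ and contradicts $S \models \Phi(t)$. The two arguments are dual and rest on the same correspondence of equality ports; yours has the advantage of exhibiting an explicit satisfying valuation (and so makes visible exactly which consistency conditions must be checked, including the anchoring of equality variables that also occur in $FV(\varphi)$ to locations fixed by the heap), whereas the paper's is shorter but leans on the unargued claim that any failure of $\Phi(u)$ relative to $\Phi(u^c)$ is necessarily an allocated-to-allocated equality chain. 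Your "main technical hurdle" is real but does go through as you sketch it, so the proposal is acceptable.
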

\ifLongVersion
\begin{proof} Let $S$ be a state such that $S \models \Phi(t)$, and $r : dom(t)
\rightarrow dom(u)$ be a bijective function such that $t \sim^{qc}_r u$. Then $S
\models \Phi(t^c)$, and since we have $t^c \sim^c_r u^c$ by Def.
\ref{Quasi-canonical-rotation}, then also $S \models \Phi(u^c)$, by Lemma
\ref{canonical-rotation-lemma}. Also, for every $p \in dom(t)$, let $\sigma_p :
Var \rightharpoonup_{fin} Var$ be the substitution from Def.
\ref{Canonical-rotation}.

Assume that $S \not\models \Phi(u)$. Then there exists a non-trivial
path $p_0, \ldots, p_k \in dom(u)$, for some $k > 0$, and some
variables $x_0 \in FV(\form(u(p_0))), \ldots, x_k \in
FV(\form(u(p_k)))$, such that $x_0$ and $x_k$ are allocated in
$\form(u(p_0))$ and $\form(u(p_k))$, respectively, and for all $0 \leq
i < k$, we have $x_i =_{\Pi_u} x_{i+1}$, where $\Pi_u$ is the pure
part of $\form(\Phi(u))$. But then $r^{-1}(p_0), \ldots, r^{-1}(p_k)
\in dom(t)$ is a path in $t$, and the variables
$\sigma_{p_0}^{-1}(x_0)$ and $\sigma_{p_k}^{-1}(x_k)$ are allocated in
$t(r^{-1}(p_0))$ and $t(r^{-1}(p_k))$, respectively. Moreover, we have
$\sigma_{p_i}^{-1}(x_i) =_{\Pi_t} \sigma_{p_{i+1}}^{-1}(x_{i+1})$,
where $\Pi_t$ is the pure part of $\form(\Phi(t))$. Hence $S
\not\models \Phi(t)$, contradiction.\qed\end{proof}
\fi

%!!!!!!!!!!!!!!!!!!!!!!!!!!!!!!!!!!!!!!!!!!!!!!!!
\enlargethispage{3mm}
%!!!!!!!!!!!!!!!!!!!!!!!!!!!!!!!!!!!!!!!!!!!!!!!!

%%%%%%%%%%%%%%%%%%%%%%%%%%%%%%%%%%%%%%%%%%%%%%%%%%%%%%%%%%%%%%%%%%%%%%%%%%%%%%%
\vspace*{-2.5mm}\section{From Inductive Definitions to Tree
Automata}\vspace*{-2mm}\label{sec:FromDefToTA}
%%%%%%%%%%%%%%%%%%%%%%%%%%%%%%%%%%%%%%%%%%%%%%%%%%%%%%%%%%%%%%%%%%%%%%%%%%%%%%%

This section describes the reduction of an entailment problem, for
given inductive systems, to a language inclusion problem between two
tree automata.  This reduction is the basis of our method for
entailments checking. The complexity of the reduction algorithm is
polynomial in the size of the input system.

%==============================================================================
\vspace*{-2.5mm}\subsection{Canonization}\vspace*{-1mm}\label{sec:canonization}
%==============================================================================

Given a rooted inductive system $\langle \mathcal{P}, P_i \rangle$, the
canonization procedure described in this section will produce an equivalent
rooted inductive system $\langle \mathcal{Q}, Q_j \rangle$, such that each
unfolding tree $t \in \mathcal{T}_j(\mathcal{Q})$ is canonical. For the latter
inductive system, we define a~TA recognizing its set of unfolding trees. For the
rest of this section, let $\mathcal{P} = \{P_i \equiv |_{j=1}^{m_i}
R_{i,j}\}_{i=1}^n$ be an inductive system of the form
(\ref{recursive-definitions}), where each rule is of the form $R_{i,j}(\vec{x})
\equiv \exists \vec{z} ~.~ \Sigma * P_{i_1}(\vec{y}_1) * \ldots *
P_{i_m}(\vec{y}_m) \wedge \Pi$, $\vec{y}_1 \cup \ldots \cup \vec{y}_m \cup
FV(\Pi) \subseteq \vec{x} \cup \vec{z}$. We recall that $\Sigma$ is a non-empty
spatial formula, and the pure formula $\Pi$ can only specify equalities between
(i) two existentially quantified variables from $\vec{z}$, (ii) an existentially
quantified variable from $\vec{z}$ and a formal parameter from $\vec{x}$, or
(iii) between an allocated parameter from $\vec{x}$ and some other formal
parameter in $\vec{x}$. Notice that equality between two different allocated
variables leads to unsatisfiable rules, therefore it will cause a pre-processing
error.

%------------------------------------------------------------------------------
\vspace*{-4mm}\subsubsection{Elimination of equalities.}
%------------------------------------------------------------------------------

The first pre-processing step consists in eliminating equalities
involving existentially quantified variables from the pure part $\Pi$
of a rule. We use the equivalence relation $=_\Pi$ induced by $\Pi$ on
the variables in the rule. For each variable $x \in \vec{x} \cup
\vec{z}$, its equivalence class $[x]_{\Pi}$ either (i) contains at
most one formal parameter, or (ii) contains one or more formal
parameters, out of which exactly one is allocated. We consider this
parameter to be the representative of the equivalence class, and
replace each occurrence of a variable in a rule by its
representative. The result is an equivalent system of the same size as
the original. 

\ifLongVersion
Let $\alloc(\Sigma)$ denote the set of allocated variables in
$\Sigma$. For each equivalence class $[x]_{\Pi}$, where $x \in \vec{x}
\cup \vec{z}$, we define its {\em representative} to be either one of
the following:
\begin{itemize}

\item the unique formal parameter $x \in [x]_{\Pi}$, if $[x]_{\Pi}
  \cap \vec{x} \neq \emptyset$ and $[x]_{\Pi} \cap \vec{x} \cap
  \alloc(\Sigma) = \emptyset$,

\item the unique allocated parameter $x \in [x]_{\Pi} \cap \vec{x}
  \cap \alloc(\Sigma)$, if $[x]_{\Pi} \cap \vec{x} \cap \alloc(\Sigma)
  \neq \emptyset$,

\item the lexicographically minimal element $minlex([x]_{\Pi})$ of
  $[x]_{\Pi}$, if $[x]_{\Pi} \cap \vec{x} = \emptyset$.

\end{itemize} 
For a tuple of variables $\vec{y} = \langle y_1, \ldots, y_\ell
\rangle$, let $[\vec{y}]_\Pi$ denote the tuple of representatives of
the equivalence classes $[y_1]_\Pi, \ldots, [y_\ell]_\Pi$,
respectively. Then we define the rule: $$R^=_{i,j}(\vec{x}) \equiv
\Sigma^= * P_{i_1}([\vec{y}_1]_\Pi) * \ldots *
P_{i_m}([\vec{y}_m]_\Pi) \wedge \Pi^=$$ where $\Sigma^=$ is obtained
from $\Sigma$ by replacing each free variable $x \in FV(\Sigma)$ by
the representative of $[x]_\Pi$, and $\Pi^= \equiv \bigwedge \{ x = y
~|~ x,y \in \vec{x},~ x \in [y]_{\Pi} \cap \alloc(\Sigma)\}$ keeps
only the equalities between formal parameters, one of which is
allocated. It is not hard to check that any rooted system $\langle
\mathcal{P}[R^=_{i,j}/R_{i,j}], P_k \rangle$, obtained by replacing
$R_{i,j}$ with $R^=_{i,j}$, is equivalent to the original rooted
system $\langle \mathcal{P}, P_k \rangle$, for all $k = 1, \ldots, n$.
Moreover, the two systems have the same size.
\fi

%------------------------------------------------------------------------------
\vspace*{-4mm}\subsubsection{Reduction to one points-to proposition per rule.}
%------------------------------------------------------------------------------

The second step of the canonization procedure builds an inductive
system in which the head of each rule consists of exactly one
points-to atomic proposition of the form $x \mapsto (y_0, \ldots,
y_\ell)$. This step succeeds under the assumption that $\mathcal{P}$
is a connected system, and no rule has an empty head (otherwise the
system breaks the restrictions that we have introduced, and the
computation is aborted). The result is an equivalent connected system
$\mathcal{Q}$, whose size is increased by at most a linear factor,
i.e.\ $\len{\mathcal{Q}} = \mathcal{O}(\len{\mathcal{P}})$. Algorithm
\ref{alg:split} \ifLongVersion\else(Appendix \ref{app:algorithms})\fi 
describes the splitting of rules into sets of rules with exactly one points-to
proposition.

\ifLongVersion
\begin{algorithm}[t]
\begin{algorithmic}[0]
  \State {\bf input} An inductive system $\mathcal{P} = \{P_i \equiv 
    |_{j=1}^{m_i} R_{i,j}\}_{i=1}^n$ 
  \State {\bf output} An inductive system $\mathcal{Q}$ with one points-to proposition 
    per rule
\end{algorithmic}
\begin{algorithmic}[1]
  \Function{splitSystem}{$\mathcal{P}$}
  \State{$\mathcal{Q} \leftarrow \emptyset$}
  \ForAll{$i = 1, \ldots, n$} \Comment{iterate over all predicates}
  \State{$\overline{P}_i \leftarrow \mathbf{empty\_predicate}$}
  \ForAll{$j = 1, \ldots, m_i$} \Comment{iterate over all rules of $P_i$}
  \State {\bf assume} $R_{i,j}(\vec{x}) \equiv \exists \vec{z} ~.~ \Sigma 
  * P_{i_1}(\vec{y}_1)   * \ldots * P_{i_k}(\vec{y}_k)$
  \State {\bf choose} $root \in \vec{x} \cap FV(\Sigma)$
  \State{$t \leftarrow \Call{DepthFirstTraverse}{\Sigma,root}$}\label{line:dfs}
  \If{$dom(t) = \emptyset$}
  \State {\bf error}(``empty rule $R_{i,j}$'')
  \EndIf
  \If{$\bigcup_{p \in dom(t)}t(p) \neq AP(\Sigma)$}
  \State {\bf error}(``disconnected rule $R_{i,j}$'')
  \EndIf
  \ForAll{$s = 1, \ldots, k$}
  \State{$X_s \leftarrow \{p \in dom(t) ~|~ 
    t(p) \equiv y \mapsto (z_1, \ldots, z_\ell),~ 
    \vec{y}_s \cap \{z_1, \ldots, z_\ell\} \neq \emptyset\}$}
  \If{$X_s = \emptyset$}
  \State {\bf error}(``disconnected rule $R_{i,j}$'')
  \EndIf
  \EndFor
  \ForAll{$p \in dom(t)$} \Comment{create fresh predicates}
  \State\label{line:minlex}{$P^p_{i,j} = \{R^p_{i,j} \equiv ~[p=\epsilon ~?~ \exists 
    \vec{z}]~ t(p) ~*~ \bigstar_{\scriptstyle{0 \leq d < \#t(p)}} P^{p.d}_{i,j}(\vec{x},
    \vec{z}) ~*~ 
    \bigstar_{\hspace*{-2mm}
      \begin{array}{c}
        \vspace*{-1mm}
        \scriptstyle{s = 1,\ldots,k} \\
        \scriptstyle{minlex(X_s) = p} 
        \vspace*{-2mm}
    \end{array}
    \hspace*{-2mm}} P_{i_s}(\vec{y}_s)\}$}
  \EndFor
  \State{$\overline{P}_i \leftarrow \overline{P}_i ~|~ R^\epsilon_{i,j}$} \Comment{create 
    root-level predicates}
  \State{$\mathcal{Q} \leftarrow \mathcal{Q} \cup \{P^p_{i,j} ~|~ p \in dom(t) \setminus 
    \{\epsilon\}\}$}
  \EndFor
  \State{$\mathcal{Q} \leftarrow \mathcal{Q} \cup \{\overline{P}_i\}$} 
    \label{line:initial-q}
  \EndFor
  \State {\bf return} $\Call{Cleanup}{\mathcal{Q}}$ \label{line:cleanup}
  \EndFunction
\end{algorithmic}
\caption{Reduction to one points-to proposition per
  rule} \label{alg:split}
\end{algorithm}

For each rule $R_{i,j}$, with $\head(R_{i,j}) \equiv \Sigma$, the procedure
\textsc{DepthFirstTraverse}$(\Sigma,x)$ (line \ref{line:dfs}) performs a
depth-first traversal of the spatial formula $\Sigma \equiv \bigstar_{i=1}^sx_i
\mapsto (y_{i,1}, \ldots,$ $y_{i,m_i})$ starting with a randomly
chosen free variable $x \in FV(\Sigma)$, and builds an injective depth-first
spanning tree $t : \nat^* \rightharpoonup_{fin} AP(\Sigma)$ of the
formula\footnote{It must be possible to build the spanning tree from any free
variable $x \in FV(\Sigma)$ unless the system is disconnected (in which case an
error is announced and the computation aborted).}.
Each position $p \in dom(t)$ is labeled by one
points-to atomic proposition from $\Sigma$, and each atomic proposition from
$\Sigma$ is found in the tree, i.e.\ $AP(\Sigma) = \bigcup_{p \in dom(t)} t(p)$.
Formally, for all positions $p \in dom(t)$, such that $t(p) \equiv x_i \mapsto
(y_{i,1}, \ldots, y_{i,m_i})$, we have:\begin{itemize}

  \item For all $0 \leq d < \#_t(p)$, $\alloc(t(p.d)) = \{y_{i,j}\}$, for some
  $1 \leq j \leq m_i$, i.e.\ the children of each position correspond to
  points-to formulae that allocate variables pointed to by the proposition of
  that position.

  \item For all $0 \leq d < e < \#_t(p)$, if $\alloc(t(p.d)) = \{y_{i,j}\}$ and
  $\alloc(t(p.e)) = \{y_{i,k}\}$, then $j < k$, i.e. the children of each node
  are ordered with respect to the selectors via which they are pointed to.

\end{itemize} The spanning tree $t$ is used to create a set of fresh predicates
$P^p_{i,j} \equiv R_{i,j}^p$, one for each position $p \in dom(t) \setminus
\{\epsilon\}$, and top rules $\overline{P}_i \equiv |_{j=1}^{m_i}
R_{i,j}^\epsilon$, which are the only rules in which existential
quantification is allowed. For each predicate occurrence $P_{i_1}(\vec{y}_1),
\ldots, P_{i_k}(\vec{y}_k)$, the sets $X_1, \ldots, X_k \subseteq dom(t)$
correspond to the positions $p$ where the actual parameters are referred to by
$t(p) \equiv y \mapsto (z_1, \ldots, z_\ell)$. We chose the lexicographically
minimal position from each set $X_1,\ldots,X_k$ (line \ref{line:minlex}) to
place the occurrences of $P_{i_1}, \ldots, P_{i_k}$, respectively.

Finally, the new inductive system $\mathcal{Q}$ is cleaned (line
\ref{line:cleanup}) by removing (i) all unused variables from the rules and from
the calls to the predicates in which they are declared, and (ii) moving
existential quantifiers inside the rules where they are used. For instance, in
the example below, the existential quantifier $\exists z$ has been moved from
$R_1$ (left) to $R_2$ (right), because $z$ is used in the points-to formula of
$R_2$:
\[\begin{array}{ccc}
\left[\begin{array}{rcl}
R_1(x) & \equiv & \exists y,z ~.~ x \mapsto y * R_2(y,z) \\
R_2(y,z) & \equiv & y \mapsto z
\end{array}\right]
& ~\arrow{\Call{Cleanup}{}}{}~ &
\left[\begin{array}{rcl}
R_1(x) & \equiv & \exists y ~.~ x \mapsto y * R_2(y) \\
R_2(y) & \equiv & \exists z ~.~ y \mapsto z
\end{array}\right]
\end{array}\]
The elimination of useless variables is done in reversed topological order,
always processing a predicate $P_i$ before $P_j$ only if $P_i$ occurs in a rule
of $P_j$, whereas the elimination of existential quantifiers is performed in
topological order, i.e.\ we process $P_i$ before $P_j$ only if $P_j$ occurs in a
rule of $P_i$. 

It can be easily checked that $\Call{DepthFirstTraverse}{\Sigma,root}$
takes time $\mathcal{O}(\len{\Sigma})$, and
$\Call{Cleanup}{\mathcal{Q}}$ takes time
$\mathcal{O}(\len{Q})$. Moreover, since the inductive system
$\mathcal{Q}$ is obtained (line \ref{line:initial-q}) in time
$\mathcal{O}(\len{\mathcal{P}})$, it must be the case that
$\len{\mathcal{Q}}=\mathcal{O}(\len{\mathcal{P}})$. Thus, the entire
Algorithm \ref{alg:split} takes time $\mathcal{O}(\len{\mathcal{P}})$.
It is not hard to check that the result of Algorithm \ref{alg:split}
is an inductive system which is equivalent to the input,
i.e.\ $\langle \mathcal{P}, P_i \rangle$ and $\langle
\Call{splitSystem}{\mathcal{P}}, \overline{P}_i \rangle$ are
equivalent, for all $i = 1, \ldots, n$.
\fi

%!!!!!!!!!!!!!!!!!!!!!!!!!!!!!!!!!!!!!!!!!!!!!!!!
\enlargethispage{4mm}
%!!!!!!!!!!!!!!!!!!!!!!!!!!!!!!!!!!!!!!!!!!!!!!!!

%------------------------------------------------------------------------------
\vspace*{-4mm}\subsubsection{Parameter elimination.} 
%------------------------------------------------------------------------------

The final pre-processing step, before conversion of a rooted inductive
system $\langle \mathcal{P}, P_i \rangle$ to a tree automaton, is the
specialisation of $\mathcal{P}$ with respect to the predicate
$P_i(x_{i,1}, \ldots, x_{i,n_i})$, and a tuple of actual parameters
$\overline{\alpha} = \langle \alpha_1, \ldots, \alpha_{n_i} \rangle$,
not occurring in the system. We say that a formal parameter $x_{i,k}$
of a rule $R_{i,j}(x_{i,1}, \ldots, x_{i,n_i}) \equiv \exists \vec{z}
~.~ \Sigma * P_{i_1}(\vec{y}_1) * \ldots * P_{i_m}(\vec{y}_m) \wedge
\Pi$ is {\em directly propagated} to some (unique) parameter of a
predicate occurrence $P_{i_j}$, for some $1 \leq j \leq m$, if and
only if $x_{i,k} \not\in FV(\Sigma)$ and $x_{i,k} \equiv
(\vec{y}_{i_j})_\ell$, for some $0 \leq \ell < \len{\vec{y}_{i_j}}$,
i.e.\ $x_{i,k}$ is neither allocated nor pointed to by the head of the
rule, before being passed on to $P_{i_j}$. We denote direct
propagation of parameters by the relation $x_{i,k} \leadsto
x_{i_j,\ell}$, where $x_{i_j,\ell}$ is the formal parameter of
$P_{i_j}$ which is mapped to the occurrence of
$(\vec{y}_{i_j})_\ell$. We say that $x_{i,k}$ is {\em propagated} to
$x_{r,s}$ if $X_{i,k} \leadsto^* x_{r,s}$, where $\leadsto^*$ denotes
the reflexive and transitive closure of the $\leadsto$
relation. Algorithm \ref{alg:eliminate} \ifLongVersion\else(Appendix
\ref{app:algorithms})\fi describes the elimination of all variables
that are propagated from the formal parameter tuple of a given
predicate.
\ifLongVersion
Notice that all eliminated variables are collected in a global set
$\mathtt{Parameters}$, which will be used later, by Algorithm
\ref{alg:sl2ta} to produce a tree automaton.

The tracking/elimination of a given formal parameter is implemented by
a recursive function $\Call{trackEliminate}{P_r,x_r,\mathtt{del}}$,
where $P_r$ is the current predicate, $x_{r,s}$ is the currently
tracked formal parameter of $P_r$ (i.e.\ $\mathtt{Tracked} \leadsto^*
x_{r,s}$ is an invariant for every call to $\Call{trackEliminate}{}$),
and $\mathtt{del}=\true$ if and only if $x_{r,s}$ is to be removed
from the definition of $P_r$. If the currently tracked parameter
$x_{r,s}$ is either allocated in a rule of $P_r$ or it is not
propagated further, then every occurrence of $x_{r,s}$ is replaced
with $\mathtt{Tracked}$ (line \ref{line:tracked}). Otherwise, if the
parameter is propagated to $P_j$ as $x_{j,\ell}$ (line
\ref{line:prop}) and it is referenced by the current rule (line
\ref{line:ref}), then it will not be removed any longer from the
system (line \ref{line:del:false}). In this case, we keep tracking it
only to place the global variable $\mathtt{Tracked}$ in the right
place (line \ref{line:tracked}). Otherwise, if the parameter is not
referenced, it will be removed completely from the rule (line
\ref{line:noref}). Finally, the old predicates in $\mathcal{Q} \cap
\mathcal{P}$, which have become unreachable from the $P_i$, due to the
insertion of the new ones ($Q_r$), are removed from the system by a
call to the $\Call{Trim}{}$ procedure.
\fi
The running time of the algorithm is linear in the size of
$\mathcal{P}$ (i.e.\ each rule is visited at most once) and the size
of the output system is reduced with respect to the input,
i.e.\ $\len{\mathcal{Q}} \leq \len{\mathcal{P}}$.

\begin{example}[cont. of Ex.~\ref{ex:DLL}]\label{ex:DLLelim}As an example of
parameter elimination, let us take the predicate $\DLL$ introduced in
Sect.~\ref{sec:inductive-definitions}, called as $\DLL(\ax, \bx, \cx,
\dx)$. After the parameter elimination and renaming the newly created
predicates, we have a call $Q_1$ (without parameters) of the following
inductive system:\vspace*{-2mm}
\[\begin{array}{rcl}
Q_1() & \equiv & \ax \mapsto (\dx,\bx) ~\wedge~ \ax = \cx \mid
\exists x.~ \ax \mapsto (x, \bx) * Q_2(x,\ax) \\
Q_2(hd, p) & \equiv & hd \mapsto (\dx, p) ~\wedge~ hd= \cx  \mid
\exists x.~ hd \mapsto (x,p) * Q_2(x,hd)
\end{array}\]

\vspace*{-\baselineskip}%\qed
\end{example}

\ifLongVersion
\begin{algorithm}[t]
\begin{algorithmic}[0]
  \State {\bf input} A rooted system $\langle \mathcal{P}, P_i
  \rangle$, where $\mathcal{P} = \{P_i(x_{i,1}, \ldots, x_{i,n_i})
  \equiv |_{j=1}^{m_i} R_{i,j}\}_{i=1}^n$, and $P_i(\vec{x})$

  \State {\bf output} A rooted system $\langle \mathcal{Q}, Q_i
  \rangle$, where $Q_i$ has empty formal parameter tuple

  \State {\bf global} $\mathcal{Q} \leftarrow \mathcal{P}$,
  $\mathtt{Visited} \leftarrow \emptyset$, $\mathtt{Parameters}
  \leftarrow \emptyset$, $\mathtt{Tracked}$
\end{algorithmic}
\begin{algorithmic}[1]
  \Function{eliminateParameters}{$P_i,\overline{\alpha}$} \Comment{$\overline{\alpha}$ is the tuple of 
    actual parameters}

  \ForAll{$k = 1, \ldots, n_i$} \Comment{iterate through all parameters of $P_i$}
  \State{$\mathtt{Tracked} \leftarrow \overline{\alpha}_k$}
  \State{$\Call{trackEliminate}{P_i, x_{i,k}, \true}$}
  \State{$\mathtt{Parameters} \leftarrow \mathtt{Parameters} \cup \{\mathtt{Tracked}\}$}
  \EndFor
  \State{\bf return} $\Call{Trim}{\mathcal{Q},P_i}$
  \EndFunction
\end{algorithmic}
\begin{algorithmic}[1]
  \Function{trackEliminate}{$P_r, x_{r,s}, \mathtt{del}$} 
  \Comment{parameter $x_{r,s}$ of $P_r$, boolean $\mathtt{del}$}

  \State{$\vec{x}_{new} \leftarrow \mathbf{if}~ \mathtt{del}
    ~\mathbf{then}~ \vec{x}_{\neg x_{r,s}} ~\mathbf{else}~ \vec{x}$}
  \Comment{initialize a new formal parameter tuple}

  \State{$Q_r(\vec{x}_{new}) \leftarrow \mathbf{empty\_predicate}$}
  \State{$\mathcal{Q} \leftarrow \mathcal{Q} \cup \{Q_r\}$}
  \Comment{create a new predicate name}

  \ForAll{$q = 1, \ldots, m_r$} \Comment{iterate through the rules of $P_r$}
  \State{{\bf assume} $R_{r,q}(\vec{x}) \equiv \exists \vec{z} ~.~ 
    \alpha \mapsto (\vec{y}) * P_{i_1}(\vec{u}_1) * \ldots * P_{i_m}(\vec{u}_m)$}

  \If{$x_{r,s} \equiv \alpha$ {\bf or} $x_{r,s} \not\in \bigcup_{j=1}^{m}\vec{u}_j$} 
  \Comment{$x_{r,s}$ is allocated in $R_{r,q}$ or not propagated}

  \State{$R_{new}(\vec{x}_{new}) \leftarrow R_{r,q}[\mathtt{Tracked}/x_{r,s}]$} 
  \label{line:tracked}
  \Comment{replace $x_{r,s}$ in $R_{r,q}$ by the global $\mathtt{Tracked}$}

  \Else \Comment{check if the current tracked parameter $x_{r,s}$ is passed to $P_j$ on position $\ell$}

  \If{$\exists j,\ell ~.~ 1 \leq j \leq m ~\wedge~ 0 \leq \ell <
    \len{\vec{y}_j} ~\wedge~ x_{r,s} \equiv (\vec{u}_j)_\ell$}
  \label{line:prop}
  
  \If{the choice of $j$ and $\ell$ is not unique} 
  \State {\bf error} (``branching propagation for parameter $x_{r,s}$'') 
  \EndIf

  \If{$x_{r,s} \in \vec{y}$} \Comment{the tracked parameter is referenced before being passed}\label{line:ref}

  \State $R_{new}(\vec{x}_{new}) \leftarrow ~\mathbf{if}~ \mathtt{del}
  ~\mathbf{then}~ \exists x_{r,s} ~.~ R_{r,q} ~\mathbf{else}~ R_{r,q}$

  \State{$\mathtt{del} \leftarrow \false$}\label{line:del:false}
  \Else \Comment{the tracked parameter is passed without being referenced}
  \If{$\mathtt{del}$}
  \State{$R_{new}(\vec{x}_{new}) \leftarrow ~\exists \vec{z} ~.~ 
    \alpha \mapsto (\vec{y}) * P_{i_1}(\vec{u}_1) * \ldots * Q_{i_j}(\vec{u}_{\neg x_{r,s}}) * 
    \ldots * P_{i_m}(\vec{u}_m)$}
  \label{line:noref}
  \Else
  \State $R_{new}(\vec{x}_{new}) \leftarrow R_{r,q}$
  \EndIf
  \EndIf  

  \If{$x_{j,\ell} \not\in \mathtt{Visited}$} \Comment{continue tracking parameter $x_{j,\ell}$ of $P_j$}
  \State $\mathtt{Visited} \leftarrow \mathtt{Visited} \cup \{x_{j,\ell}\}$
  \State \Call{trackEliminate}{$P_j,x_{j,\ell},\mathtt{del}$} 
  \EndIf

  \EndIf
  \EndIf
  \State{$Q_r \equiv Q_r ~|~ R_{new}$}
  \EndFor
  \EndFunction
\end{algorithmic}
\caption{Elimination of propagated formal parameters} \label{alg:eliminate}
\end{algorithm}
\fi

%==============================================================================
\vspace*{-2mm}\subsection{Conversion to tree automata}\vspace*{-1mm}
%==============================================================================

The following algorithm takes as input a rooted inductive system $\langle
\mathcal{P}, P_i \rangle$ such that $P_i$ has no formal parameters, the head of
each rule in $\mathcal{P}$ is of the form $\exists\vec{z} ~.~ \alpha \mapsto
(y_0,\ldots,y_{m-1}) \wedge \Pi$, and for each $x \in FV(\Pi)$, we have (i)
$\alpha \in [x]_{\Pi}$ and (ii) $[x]_{\Pi} \cap \{y_0, \ldots, y_{m-1}\} =
\emptyset$. It is easy to see that these conditions are ensured by the first two
steps of pre-processing (i.e.\ the elimination of equalities and the reduction
to one points-to proposition per rule). The first step of the conversion
computes several sets of parameters that are called {\em signatures} and defined
as follows:\vspace*{-1.5mm}\[\begin{array}{rcl}

  \sig_j^{fw} & = & \{x_{j,k} ~|~ x_{j,k} ~\mbox{is allocated in every rule of
  $P_j$}\} ~\cap \\

  && \{x_{j,k} ~|~ \mbox{there is an edge to $(\vec{y})_k$ in every occurrence
  $P_j(\vec{y})$ of $\mathcal{P}$}\} \\

  \sig_j^{bw} & = & \{x_{j,k} ~|~ \mbox{there is an edge to $x_{j,k}$ in every
  rule of $P_j$}\} ~\cap \\

  && \{x_{j,k} ~|~ ~\mbox{$(\vec{y})_k$ is allocated for every occurrence
  $P_j(\vec{y})$ of $\mathcal{P}$}\} \\

  \sig_j^{eq} & = & \{x_{j,1}, \ldots, x_{j,n_j}\} \setminus (\sig_j^{fw} \cup
  \sig_j^{bw})\vspace*{-1.5mm}

\end{array}\] where $\langle x_{j,1}, \ldots, x_{j,n_j} \rangle$ is the tuple of
formal parameters of the predicate $P_j$, in $\mathcal{P}$, for all
$j=1,\ldots,n$. The signatures of the system are computed via
Algorithm \ref{alg:sl2ta-signatures} \ifLongVersion\else(Appendix
\ref{app:algorithms})\fi.

The result of the algorithm can be used to implement a sufficient test
of locality: a given system is local if $sig_i^{eq}=\emptyset$, for
each $i$. This simple test turned out to be powerful enough for all
the examples that we considered in our experiments in
Section~\ref{sec:experiments}.\vspace*{-2mm}

%!!!!!!!!!!!!!!!!!!!!!!!!!!!!!!!!!!!!!!!!!!!!!!!!
\enlargethispage{4mm}
%!!!!!!!!!!!!!!!!!!!!!!!!!!!!!!!!!!!!!!!!!!!!!!!!

\begin{example}[cont. of Ex.~\ref{ex:DLLelim}]\label{ex:DLLsign} The signatures
for the system in Example~\ref{ex:DLLelim} are:
\ifLongVersion
\begin{itemize}
  \item $sig_1^{fw}=sig_1^{bw}=sig_1^{eq}=\emptyset$
  \item $sig_2^{fw}=\{0\}, sig_2^{bw}=\{1\}, sig_2^{eq}=\emptyset$
\end{itemize}
\else
$sig_1^{fw}=sig_1^{bw}=sig_1^{eq}=\emptyset$ and
$sig_2^{fw}=\{0\}, sig_2^{bw}=\{1\}, sig_2^{eq}=\emptyset$.
\fi
The fact that, for each $i$, we have $sig_i^{eq}=\emptyset$
informs us that the system is local.\vspace*{-2mm}%\qed
\end{example}

\ifLongVersion
\begin{algorithm}[t]
\begin{algorithmic}[0]
\State {\bf input} A rooted system $\langle \mathcal{P}, P_i \rangle$,
where $\mathcal{P} = \{P_i \equiv |_{j=1}^{m_i} R_{i,j}\}_{i=1}^n$
\State {\bf global} $\sig_1^{fw}, \sig_1^{bw}, \sig_1^{eq}, 
\ldots, \sig_n^{fw}, \sig_n^{bw}, \sig_n^{eq}$
\end{algorithmic}
\begin{algorithmic}[1]
\Function{fwEdge}{$\mathtt{index},\mathtt{pred_{no}},\mathcal{P}$}
\ForAll{$i = 1, \ldots, n$} \Comment{iterate through all predicates}
\ForAll{$j = 1, \ldots, m_i$} \Comment{iterate through all rules}
\State {\bf assume} $R_{i,j}(\vec{x}) \equiv \exists \vec{z} ~.~ \alpha 
\mapsto \vec{y}	* P_{i_0}(\vec{x}_0)   * \ldots * P_{i_k}(\vec{x}_k) \wedge \Pi$
\State $\mathtt{res} \leftarrow \true$
\ForAll{$\ell = 0, \ldots, k$} \Comment{iterate through all predicate occurrences}
\If{$i_\ell=\mathtt{pred_{no}}$}
\If{$(\vec{x}_\ell)_{\mathtt{index}} \not\in \vec{y}$} 
\Comment{parameter not referenced in occurrence of $P_{\mathtt{pred}_{no}}$}
\State $\mathtt{res} \leftarrow \false$
\EndIf
\EndIf
\EndFor
\EndFor
\EndFor
\State {\bf return} $\mathtt{res}$
\EndFunction
\end{algorithmic}
\begin{algorithmic}[1]
\Function{bwEdge}{$\mathtt{index},\mathtt{pred_{no}},\mathcal{P}$}
\ForAll{$i = 1, \ldots, n$} \Comment{iterate through all predicates}
\ForAll{$j = 1, \ldots, m_i$} \Comment{iterate through all rules}
\State {\bf assume} $R_{i,j}(\vec{x}) \equiv \exists \vec{z} ~.~ \alpha \mapsto
\vec{y}	* P_{i_0}(\vec{x}_0)   * \ldots * P_{i_k}(\vec{x}_k) \wedge \Pi$
\State $\mathtt{res} \leftarrow \true$
\ForAll{$\ell = 0, \ldots, k$} \Comment{iterate through all predicate occurrences}
\If{$i_\ell=\mathtt{pred_{no}}$}
\If{$(\vec{x}_\ell)_{\mathtt{index}} \neq_{\Pi} \alpha$} 
\Comment{parameter not allocated in occurrence of $P_{\mathtt{pred}_{no}}$}
\State $\mathtt{res} \leftarrow \false$
\EndIf
\EndIf
\EndFor
\EndFor
\EndFor
\State {\bf return} $\mathtt{res}$
\EndFunction
\end{algorithmic}
\begin{algorithmic}[1]
\Function{allocated}{$\mathtt{index}$,$\mathtt{pred_{no}}$,$\mathcal{P}$}
\State $\mathtt{res} \leftarrow \true$
\ForAll{$j = 1, \ldots, m_{\mathtt{pred_{no}}}$} \Comment{iterate through all rules of $P_{\mathtt{pred_{no}}}$}
\State {\bf assume} $R_{\mathtt{pred_{no}},j}(\vec{x}) \equiv \exists \vec{z} ~.~ \alpha \mapsto
\vec{y}	* P_{i_0}(\vec{x}_0)   * \ldots * P_{i_k}(\vec{x}_k) \wedge \Pi$
\If{$(\vec{x})_{\mathtt{index}} \neq_{\Pi} \alpha$}
\Comment{formal parameter $\mathtt{index}$ not allocated in $P_{\mathtt{pred_{no}}}$}
\State $\mathtt{res} \leftarrow \false$
\EndIf
\EndFor
\State {\bf return} $\mathtt{res}$
\EndFunction
\end{algorithmic}
\begin{algorithmic}[1]
\Function{referenced}{$\mathtt{index}$,$\mathtt{pred_{no}}$,$\mathcal{P}$}
\State $\mathtt{res} \leftarrow \true$
\ForAll{$j = 1, \ldots, m_{\mathtt{pred_{no}}}$} \Comment{iterate through all rules of $P_{\mathtt{pred_{no}}}$}
\State {\bf assume} $R_{\mathtt{pred_{no}},j}(\vec{x}) \equiv \exists \vec{z} ~.~ \alpha \mapsto
\vec{y}	* P_{i_0}(\vec{x}_0)   * \ldots * P_{i_k}(\vec{x}_k) \wedge \Pi$
\If{$(\vec{x})_{\mathtt{index}} \not\in \vec{y}$}
\Comment{formal parameter $\mathtt{index}$ not referenced in $P_{\mathtt{pred_{no}}}$}
\State $\mathtt{res} \leftarrow \false$
\EndIf
\EndFor
\State {\bf return} $\mathtt{res}$
\EndFunction
\end{algorithmic}
\begin{algorithmic}[1]
\Function{computeSignatures}{$\mathcal{P}$}
\ForAll{$i = 1, \ldots, n$} \Comment{iterate through all predicates of $\mathcal{P}$}
\State {\bf assume} $P_i(\vec{x})$ \Comment{$P_i$ has formal parameters $\vec{x}$}
\ForAll{$p=0, \ldots, \len{\vec{x}} - 1$} \Comment{iterate through all formal parameters of $P_i$}
\State $\sig_{i}^{fw}\leftarrow \emptyset, \sig_{i}^{bw}\leftarrow \emptyset,
\sig_{i}^{eq}\leftarrow \emptyset$
\If {$\Call{fwEdge}{i,p,\mathcal{P}}$ {\bf and} $\Call{allocated}{i,p,\mathcal{P}}$}
\State $\sig_{i}^{fw} \leftarrow \sig_{i}^{fw} \cup \{p\}$
\Else
\If{$\Call{bwEdge}{i,p,\mathcal{P}}$ {\bf and} $\Call{referenced}{i,p,\mathcal{P}}$}
\State $\sig_{i}^{bw} \leftarrow \sig_{i}^{bw} \cup \{p\}$
\Else
\State $\sig_{i}^{eq} \leftarrow \sig_{i}^{eq} \cup \{p\}$
\EndIf
\EndIf
\EndFor
\EndFor
\EndFunction
\end{algorithmic}
\caption{Signature computation} \label{alg:sl2ta-signatures}
\end{algorithm}
\fi

\ifLongVersion
Next, we define a normal form for the quasi-canonical tiles, used as alphabet
symbols in the rules of the tree automaton. Given a quasi-canonical tile $T =
\langle \varphi, \vec{x}_{-1}, \vec{x}_0, \ldots,$ $\vec{x}_{k-1} \rangle$, for
some $k \geq 0$, where $\varphi \equiv (\exists z)~ z \mapsto (y_0, \ldots,
y_{m-1}) \wedge \Pi$, for some $m > 0$, let $\xi_0, \ldots, \xi_m$ be unique
variable namesi. Let $\sigma$ be the substitution defined as $\sigma(z) =
\xi_0$, and $\sigma(y_i)=\xi_{i+1}$, for all $0 \leq i < m$. For all variables
$\vec{x}_{-1} \cup \vec{x}_0 \cup \ldots \cup \vec{x}_{k-1} = \{\alpha_1,
\ldots, \alpha_n\}$, we chose unique names $\xi_{m+1}, \ldots, \xi_{m+n}$,
according to the order in which they appear in the ports $\vec{x}_{-1},
\vec{x}_0, \ldots, \vec{x}_{k-1}$, respectively, and we extend the substitution
further by defining $\sigma(\alpha_i)=\xi_{m+i}$, for all $i = 1, \ldots, n$.
We define the {\em normal tile} $\overline{T} = \langle \varphi[\sigma],
\vec{x}_{-1}[\sigma], \vec{x}_0[\sigma], \ldots,$ $\vec{x}_{k-1}[\sigma]
\rangle$. For a quasi-canonical tree $t : \nat^* \rightharpoonup_{fin}
\mathcal{T}^{qc}$, we define the {\em normal tree} $\overline{t}$ as
$\overline{t}(p)=\overline{t(p)}$, for all $p \in dom(t)$.

The $\Call{sl2ta}{}$ function (Algorithm \ref{alg:sl2ta}) builds a
tree automaton $A$, for a rooted system $\langle \mathcal{P}, P_i
\rangle$. For each rule in the system, the algorithm creates a
quasi-canonical tile, where the input and output ports $\vec{x}_i$ are
factorized as $\vec{x}_i^{fw} \cdot \vec{x}_i^{bw} \cdot
\vec{x}_i^{eq}$, according to the precomputed signatures. The backward
part of the input port $\vec{x}_{-1}^{bw}$ and the forward parts of
the output ports $\vec{x}_i^{fw}$, for $i \geq 0$, are sorted
according to the order of incoming selector edges from the single
points-to formula $\alpha \mapsto (\overline{\beta})$, by the
$\Call{selectorSort}{}$ function (lines \ref{line:selsort1} and
\ref{line:selsort2}). The output ports $x_i$, $i \geq 0$, are sorted
within the tile, according to the order of the selector edges pointing
to $(\vec{x}_i^{fw})_0$, for each $i=0,1,\ldots$ (function
$\Call{sortTile}{}$, line \ref{line:tilesort}). Finally, each
predicate name $P_j$ is associated a state $q_j$ (line
\ref{line:states}), and for each inductive rule, the algorithm creates
a transition rule in the tree automaton (line \ref{line:transition}).
The final state corresponds to the root of the system (line
\ref{line:final}). 
\fi 
The following lemma summarizes the tree automata construction.
\begin{lemma}\label{sl-ta} 
Given a rooted inductive system $\langle \mathcal{P}, P_i(\vec{x})
\rangle$ where $\mathcal{P} = \big\{P_i ~\equiv~ \mid_{j=1}^{m_i}
R_{i,j}\big\}_{i=1}^{n}$ and $\vec{x} = \langle
x_{i,1},\ldots,x_{i,n_i} \rangle$, and a vector $\overline{\alpha} =
\langle \alpha_1, \ldots, \alpha_{n_i} \rangle$ of variables not used
in $\mathcal{P}$. Then, for every state $S$, we have $S \models
P_i(\overline{\alpha})$ if and only if there exists $t \in \mathcal{L}(A)$
such that $S \models \Phi(t)$ where $A = \Call{sl2ta}{\mathcal{P}, i,
  \overline{\alpha}}$. Moreover, $\len{A} =
\mathcal{O}(\len{\mathcal{P}})$.
\end{lemma}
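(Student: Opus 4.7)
The plan is to prove the lemma in two stages, first establishing the semantic correspondence and then the linear size bound. Throughout, I rely on the fact, stated immediately above Definition~\ref{unfolding-tree}, that the semantics of $P_i(\overline{\alpha})$ is given by the set of characteristic tiles of its $i$-rooted unfolding trees.

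First I would reduce to the ``canonical form'' by chaining the three preprocessing steps described in Section~\ref{sec:canonization}. Each step (equality elimination, one-points-to-per-rule reduction via \textsc{splitSystem}, and parameter elimination via \textsc{eliminateParameters}) is shown in its own paragraph to produce an \emph{equivalent} rooted system (models coincide for the designated predicate, modulo renaming by $\overline{\alpha}$) and to blow up the size by at most a constant factor. Composing the three yields an equivalent rooted system $\langle \mathcal{Q}, Q_i \rangle$ of size $\mathcal{O}(\len{\mathcal{P}})$ in which $Q_i$ has empty formal parameter tuple, each rule head is a single points-to proposition, and the pure part only equates allocated variables with other formal parameters. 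In particular, by the equivalence of the systems, $S \models P_i(\overline{\alpha})$ iff $S \models Q_i$ iff there exists an unfolding tree $u \in \mathcal{T}_i(\mathcal{Q})$ with $S \models \Phi(u)$.

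Next I would exhibit a bijection between $\mathcal{T}_i(\mathcal{Q})$ and the accepting runs of $A = \Call{sl2ta}{\mathcal{P}, i, \overline{\alpha}}$ that preserves the characteristic tile. The TA has one state $q_j$ per predicate name $Q_j$ of $\mathcal{Q}$ (with $q_i$ final), and for each rule $R$ of $Q_j$ with $d$ predicate occurrences, exactly one transition rule $\overline{T}_R(q_{j_0},\ldots,q_{j_{d-1}}) \to q_j$, where $\overline{T}_R$ is the normalized quasi-canonical tile built from $R$ via the ports factored according to the signatures $\sig_j^{fw}, \sig_j^{bw}, \sig_j^{eq}$ and ordered by \textsc{selectorSort}/\textsc{sortTile}. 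Any unfolding tree $u \in \mathcal{T}_i(\mathcal{Q})$ thus induces a tree $t$ labeled by the corresponding normal tiles, on which the obvious state-labeling is an accepting run, and conversely. It remains to observe that the port factorizations and reorderings performed in the construction are purely syntactic relabelings of free variables that leave the existentially quantified composed formula $\Phi(u)$ logically unchanged: the reorderings of the forward and backward parts of ports only permute equality conjuncts of the form $x = y$ introduced by $\circledast_i$, and the $\sig^{eq}$ part faithfully propagates the variables carrying non-local equalities. Hence $\Phi(t) \equiv \Phi(u)$, completing the equivalence $S \models P_i(\overline{\alpha}) \iff \exists t \in \lang{A}.~ S \models \Phi(t)$.

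The main obstacle is the bookkeeping in this final step: one must check that the heuristic factorization $\vec{x}_i \equiv \vec{x}_i^{fw} \cdot \vec{x}_i^{bw} \cdot \vec{x}_i^{eq}$ really yields a quasi-canonical tile in the sense of Section~\ref{sec:quasi-canonical-tiles} (conditions~1--4 on the distribution of points-to targets across ports), and that the in-tile permutations induced by \textsc{sortTile} are consistent with Definition~\ref{Quasi-canonically-tiled} at every parent/child interface. This is a finite case analysis driven by the signature definitions, and succeeds precisely because after preprocessing each rule has a single points-to proposition with a uniquely determined allocated root. For the complexity bound, each transition rule $\rho_R$ constructed from a rule $R$ of $\mathcal{Q}$ has size $\mathcal{O}(\len{R})$ (one state per occurrence in $\tail(R)$, plus a normal tile whose representation is linear in $\len{R}$), so $\len{A} = \sum_R \len{\rho_R} = \mathcal{O}(\len{\mathcal{Q}}) = \mathcal{O}(\len{\mathcal{P}})$, as required.
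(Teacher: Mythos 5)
The paper never actually writes out a proof of Lemma~\ref{sl-ta}; it is stated only as a ``summary'' of the construction in Section~\ref{sec:FromDefToTA}, with the equivalence and size claims for each preprocessing step asserted separately in the surrounding prose. Your proposal assembles exactly those ingredients --- equivalence and linear size of the three canonization steps, the one-to-one correspondence between rules of the processed system and transitions of $A$ (hence between unfolding trees and accepting runs), the observation that port factorization and reordering are semantics-preserving relabelings, and the per-rule size accounting --- so it follows the paper's intended argument and correctly identifies where the remaining bookkeeping (verifying that the signature-driven factorization yields quasi-canonical tiles consistent at every parent/child interface) would have to be done.
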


\begin{example}[cont. of Ex.~\ref{ex:DLLsign}]\label{ex:DLLtoTA}
The automaton corresponding to the $\DLL$ system called by $\DLL(\ax,
\bx, \cx, \dx)$ is $A=\langle \Sigma, \{q_1, q_2\}, \Delta, \{q_1\}
\rangle$ where:\vspace*{-1.5mm}{\small
\[\begin{array}{rcl}
\Delta & = & \left\{\begin{array}{lcl}
\overline{\langle \ax \mapsto (\bx, \dx) \wedge \ax = \cx, \emptyset \rangle} ()  \rightarrow
q_1 \ \ \ \ \ 
\overline{\langle \ax \mapsto (x, \bx), \emptyset, (x, \ax) \rangle} (q_2) & \rightarrow & q_1 \\
\overline{\langle \exists hd'. hd' \mapsto (\dx, p) \wedge hd = \cx \wedge hd' = hd, (hd, p) \rangle} () & \rightarrow & q_2 \\
\overline{\langle \exists hd'. hd' \mapsto(x, p) \wedge hd' = hd, (hd, p), (x, hd) \rangle} (q_2) & \rightarrow & q_2
\end{array}\right\}
\end{array}\]} %\qed 
\end{example}

\ifLongVersion
\begin{algorithm}[t]
\begin{algorithmic}[0]
  \State {\bf input} A rooted system $\langle \mathcal{P},
  P_{\mathtt{index}} \rangle$, $\mathcal{P} = \{P_i \equiv
  |_{j=1}^{m_i} R_{i,j}\}_{i=1}^n$, and actual parameters $\vec{u}$ of a call of
  $P_{\mathtt{index}}$

  \State {\bf output} A tree automaton $A = \langle \Sigma, Q, \Delta,
  F \rangle$
\end{algorithmic}
\begin{algorithmic}[1]
  \Function{sl2ta}{$\mathcal{P}, \mathtt{index}, \vec{u}$} 
  \State $\mathcal{P} \leftarrow \Call{splitSystem}{\mathcal{P}}$
  \State $\mathcal{P} \leftarrow \Call{eliminateParameters}{P_{\mathtt{index}},\vec{u}}$
  \State $\Call{computeSignatures}{\mathcal{P}}$
  \State $\Delta \leftarrow \emptyset, \Sigma \leftarrow \emptyset$
  \ForAll{$i = 1, \ldots, n$} \Comment{iterate through all predicates of $\mathcal{P}$}
  \ForAll{$j = 1, \ldots, m_i$} \Comment{iterate through all rules of $P_i$}
  \State {\bf assume} $R_{i,j}(\vec{x}) \equiv \exists \vec{z} ~.~ \alpha \mapsto
  \overline{\beta} * P_{i_0}(\vec{z}_0)   * \ldots * P_{i_k}(\vec{z}_k) \wedge \Pi$

  \State{$\phi \leftarrow ~\mbox{\bf if}~ \alpha\in\mathtt{Parameters}
    ~\mbox{\bf then}~ \alpha \mapsto (\overline{\beta}) \wedge \Pi
    ~\mbox{\bf else}~ \exists \alpha' ~.~ \alpha' \mapsto
    (\overline{\beta}) \wedge \Pi \wedge \alpha=\alpha'$}
  
  \State{$\vec{x}_{-1}^{fw} \leftarrow \langle\rangle, 
    \vec{x}_{-1}^{bw} \leftarrow \langle\rangle,
    \vec{x}_{-1}^{eq}\leftarrow \langle\rangle$}
  \Comment{initialize input ports}

  \ForAll{$\ell = 0,\ldots,\len{\vec{x}}-1$} 
  \Comment{iterate through the formal parameters of $R_{i,j}$}
  
  \If{$\ell \in \sig_i^{fw}$}
  \State{$\vec{x}_{-1}^{fw} \leftarrow \vec{x}_{-1}^{fw} \cdot \langle (\vec{x})_\ell \rangle$}
  \EndIf
  
  \If{$\ell \in \sig_i^{bw}$}
  \State $\vec{x}_{-1}^{bw} \leftarrow \vec{x}_{-1}^{bw} \cdot \langle (\vec{x})_\ell \rangle$
  \EndIf
  
  \If{$\ell \in \sig_{i}^{eq}$}
  \State $\vec{x}_{-1}^{eq} \leftarrow \vec{x}_{-1}^{eq} \cdot \langle (\vec{x})_\ell \rangle$
  \EndIf
  
  \EndFor
  
  \State $\Call{selectorSort}{\vec{x}_{-1}^{bw},\overline{\beta}}$\label{line:selsort1}
  \ForAll{$\ell=0, \ldots, k$} \Comment{iterate through predicate occurrences in $R_{i,j}$}
  \State{$\vec{x}_\ell^{fw} \leftarrow \langle\rangle, 
    \vec{x}_\ell^{bw} \leftarrow \langle\rangle,
    \vec{x}_\ell^{eq} \leftarrow \langle\rangle$}
  \Comment{initialize output ports}
  \ForAll{$r=0,\dots,\len{\vec{z}_\ell}-1$} 
  \Comment{iterate through variables of occurrence $P_{i_\ell}(\vec{z}_\ell)$}
    \If{$r \in \sig_{i_\ell}^{fw}$}
    \State $\vec{x}_{\ell}^{fw} \leftarrow \vec{x}_{\ell}^{fw} \cdot \langle (\vec{z}_\ell)_r$
    \EndIf

    \If{$r \in \sig_{i_\ell}^{bw}$}
    \State $\vec{x}_{\ell}^{bw} \leftarrow \vec{x}_{\ell}^{bw} \cdot \langle (\vec{z}_\ell)_r \rangle$
    \EndIf
     	
    \If{$r \in \sig_{i_\ell}^{eq}$}
    \State $\vec{x}_{\ell}^{eq} \leftarrow \vec{x}_{\ell}^{eq} \cdot \langle (\vec{z}_\ell)_r \rangle$
    \EndIf
    
    \EndFor
    \State $\Call{selectorSort}{\vec{x}_{\ell}^{fw},\overline{\beta}}$\label{line:selsort2}
    \EndFor
	
    \State $\mathtt{T} \leftarrow \langle \phi,\vec{x}_{-1},\vec{x}_0,
    \ldots, \vec{x}_k\rangle$ \Comment{create a new tile}

    \State $\mathtt{lhs} \leftarrow \langle q_{i_0}, \ldots, q_{i_k}
    \rangle$ \Comment{create the left hand side of the transition rule
      for $R_{i,j}$}

    \State $(\mathtt{T_{new}},\mathtt{lhs_{new}}) \leftarrow \Call{sortTile}{\mathtt{T},\mathtt{lhs}}$
    \label{line:tilesort}
    \State $\Sigma \leftarrow \Sigma\cup \{\overline{\mathtt{T_{new}}}\}$
    \Comment{insert normalized tile in the alphabet}
    \State $Q \leftarrow Q \cup \{q_i, q_{i_0}, \ldots, q_{i_k}\}$
    \label{line:states}
    \State $\Delta \leftarrow \Delta \cup \{ \mathtt{\overline{T_{new}}}(\mathtt{lhs_{new}}) \rightarrow q_i \}$
    \Comment{build transition using normalized tile}
    \label{line:transition}
    \EndFor
    \EndFor
    \State {$F=\{ q_{\mathtt{index}} \}$}
    \label{line:final}
    \State {{\bf return} $A=\langle Q,\Sigma,\Delta,F \rangle$}
    \EndFunction   
\end{algorithmic}
\caption{Converting rooted inductive systems to tree automata} \label{alg:sl2ta}
\end{algorithm}
\fi

%==============================================================================
\subsection{Rotation of Tree Automata}\vspace*{-1mm}\label{sec:canonical-rotation-ta}
%==============================================================================

This section describes the algorithm that produces the closure of a
quasi-canonical tree automaton (i.e.\ a tree automaton recognizing
quasi-canonical trees only) under rotation. The result of the
algorithm is a tree automaton $A^r$ that recognizes all trees $u :
\nat^* \rightharpoonup_{fin} \mathcal{T}^{qc}$ such that $t \sim^{qc}
u$, for some tree $t$ recognized by $A = \langle Q, \Sigma, \Delta, F
\rangle$. Algorithm \ref{alg:rot-closure} \ifLongVersion\else(Appendix
\ref{app:algorithms})\fi describes the rotation closure.
\ifLongVersion
The result of Algorithm \ref{alg:rot-closure} (function
$\Call{sl2ta}{}$) is a language-theoretic union of $A$ and automata
$A_\rho$, one for each rule $\rho$ of $A$. The idea behind the
construction of $A_\rho = \langle Q_\rho, \Sigma, \Delta_\rho,
\{q^f_\rho\} \rangle$ can be understood by considering a tree $t \in
\lang{A}$, a run $\pi : dom(t) \rightarrow Q$, and a position $p \in
dom(t)$, which is labeled with the right hand side of the rule $\rho =
T(q_1, \ldots, q_k) \arrow{}{} q$ of $A$. Then $\lang{A_\rho}$ will
contain the rotated tree $u$, i.e.\ $t \sim^{qc}_r u$, where the
significant position $p$ is mapped into the root of $u$ by the
rotation function $r$, i.e.\ $r(p) = \epsilon$. To this end, we
introduce a new rule $T_{new}(q_0, \ldots, q_j, q^{rev}, q_{j+1},
\ldots, q_n) \arrow{}{} q^f_\rho$, where the tile $T_{new}$ mirrors
the change in the structure of $T$ at position $p$, and $q^{rev} \in
Q_\rho$ is a fresh state corresponding to $q$. The construction of
$A_\rho$ continues recursively, by considering every rule of $A$ that
has $q$ on the left hand side: $U(q'_1, \ldots, q, \ldots, q'_\ell)
\arrow{}{} s$. This rule is changed by swapping the roles of $q$ and
$q'$ and producing a rule $U_{new}(q'_1, \ldots, s^{rev}, \ldots
q'_\ell) \arrow{}{} q^{rev}$, where $U_{new}$ mirrors the change in
the structure of $U$. Intuitively, the states $\{q^{rev} | q \in Q\}$
mark the unique path from the root of $u$ to $r(\epsilon) \in
dom(u)$. The recursion stops when either (i) $s$ is a final state of
$A$, (ii) the tile $U$ does not specify a forward edge in the
direction marked by $q$, or (iii) all states of $A$ have been
visited. 
\fi
The following lemma proves the correctness of Algorithm
\ref{alg:rot-closure}.
\ifLongVersion
\begin{lemma}\label{canonical-rotation-ta}
  Let $A = \langle Q, \mathcal{T}^{qc}, \Delta, F \rangle$ be a tree
  automaton. Then $\lang{A^r} = \{\overline{u} ~|~ u : \nat^*
  \rightharpoonup_{fin} \mathcal{T}^{qc},~ \exists t \in \lang{A} ~.~
  u \sim^{qc} t\}$. Moreover, the size of $A^r$ is of the order of
  $\mathcal{O}(\len{A}^2)$.
\end{lemma}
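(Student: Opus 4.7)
The plan is to establish both inclusions of $\lang{A^r} = \{\overline{u} \mid \exists t \in \lang{A},~ u \sim^{qc} t\}$ and then the $\mathcal{O}(\len{A}^2)$ size bound, relying on the construction $A^r = A \cup \bigcup_{\rho \in \Delta} A_\rho$ produced by Algorithm~\ref{alg:rot-closure}. Throughout, any witnessing rotation bijection $r$ singles out a \emph{pivot} $p \in dom(t)$ --- the position whose $r$-image is the root of $u$ --- and by Lemma~\ref{rotation-reversion}, the reversions of $r$ occur exactly along the path $\epsilon = p_0, p_1, \dots, p_k = p$ in $dom(t)$.

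For the inclusion $\supseteq$, suppose $t \in \lang{A}$ has an accepting run $\pi : dom(t) \to Q$ and $u \sim^{qc} t$. If the pivot $p$ equals the root $\epsilon$, then $r$ is non-reverting everywhere, hence $\overline{u} = \overline{t} \in \lang{A} \subseteq \lang{A^r}$. Otherwise, let $\rho = T(q_1,\ldots,q_k) \to q$ be the rule applied by $\pi$ at position $p$. I would build an accepting run of $A_\rho$ on $\overline{u}$ as follows: to the positions of $u$ that correspond (under $r$) to the spine $p_0, p_1, \ldots, p_{k-1}$ assign the freshly introduced reversed state $q^{rev}$ (and its recursively introduced descendants produced by Algorithm~\ref{alg:rot-closure}), and to every other position reuse $\pi$ composed with the inverse of $r$. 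The design of the new tiles $T_{new}$ and $U_{new}$ in the construction exactly mirrors the swap of input/output ports and of forward/backward components prescribed by Definition~\ref{Quasi-canonical-rotation}, so this assignment is a valid run of $A_\rho$ into $q^f_\rho$.

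For the inclusion $\subseteq$, take $\overline{u} \in \lang{A^r}$. If $\overline{u} \in \lang{A}$, set $t = u$ with $r$ the identity. Otherwise $\overline{u} \in \lang{A_\rho}$ for some rule $\rho$, and any accepting run of $A_\rho$ uses the unique root transition $T_{new}(\ldots, q^{rev}, \ldots) \to q^f_\rho$, continuing downwards through a uniquely determined spine of $^{rev}$-marked states until a non-marked state of $A$ is reached at some position $p' \in dom(u)$. I would reconstruct $t$ by \emph{un-rotating}: define $r$ so that the pivot $p \in dom(t)$ corresponds to $\epsilon \in dom(u)$ with the spine in $t$ from $\epsilon$ to $p$ tracing the spine of $^{rev}$-marked positions in $u$, and replace each rotated tile $T_{new}, U_{new}$ by its pre-rotation original $T, U$. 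By construction, stripping the $^{rev}$ marks off the spine states in the run of $A_\rho$ gives an accepting run of $A$ on $t$, and the port correspondence built into Algorithm~\ref{alg:rot-closure} guarantees $u \sim^{qc}_r t$ per Definition~\ref{Quasi-canonical-rotation}.

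The main obstacle is a careful bookkeeping argument that the port reshuffling during rotation is consistent: at the pivot tile, the previously-incoming port becomes outgoing (with its $\port^{fw}$ and $\port^{bw}$ halves swapped), the outgoing port along the spine becomes the new incoming port, and the remaining outgoing ports and their canonical ordering are preserved; at each further spine tile, a symmetric swap occurs, and the substitution $\sigma_p$ from Definitions~\ref{Canonical-rotation} and~\ref{Quasi-canonical-rotation} must be realized exactly by the tile normalization $\overline{(\cdot)}$. For the size bound, observe that for each $\rho \in \Delta$, the recursive construction of $A_\rho$ introduces at most one modified rule per rule of $A$ reached by propagating $^{rev}$ backward, so $\len{A_\rho} = \mathcal{O}(\len{A})$; summing over the $\mathcal{O}(\len{A})$ rules $\rho$ yields $\len{A^r} = \mathcal{O}(\len{A}^2)$.
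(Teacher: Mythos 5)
Your proposal follows essentially the same route as the paper's proof: both directions are handled by locating the pivot $r^{-1}(\epsilon)$, using Lemma~\ref{rotation-reversion} to confine all reversions to the spine from the root to the pivot, labelling that spine with the fresh $q^{rev}$ states of $A_\rho$ (resp.\ stripping them off to un-rotate), and reusing the original run off the spine; the size bound is also argued identically, via $\len{A_\rho}=\mathcal{O}(\len{A})$ summed over the rules of $A$. The port-reshuffling bookkeeping you flag as the main obstacle is exactly what the paper's proof spells out explicitly, but your plan is the same argument.
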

\else
\begin{lemma}\label{canonical-rotation-ta}
  Let $A = \langle Q, \mathcal{T}^{qc}, \Delta, F \rangle$ be a tree
  automaton. Then $\lang{A^r} = \{u ~|~ u : \nat^*
  \rightharpoonup_{fin} \mathcal{T}^{qc},~ \exists t \in \lang{A} ~.~
  u \sim^{qc} t\}$. Moreover, the size of $A^r$ is of the order of
  $\mathcal{O}(\len{A}^2)$.
\end{lemma}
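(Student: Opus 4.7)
The plan is to show the two language inclusions $\lang{A^r} \supseteq \{u \mid \exists t \in \lang{A}.~u \sim^{qc} t\}$ and $\lang{A^r} \subseteq \{u \mid \exists t \in \lang{A}.~u \sim^{qc} t\}$ separately, and then account for the size bound by a simple rule-counting argument. Throughout, the crucial structural fact is Lemma \ref{rotation-reversion} (which appears earlier), stating that the only direction reversions introduced by a rotation $r$ with $r(p) = \epsilon$ occur along the unique root-to-$p$ path; this is exactly what the construction of $A_\rho$ exploits by marking that path with the fresh ``reverse'' states $\{q^{rev} \mid q \in Q\}$.

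For the completeness direction ($\supseteq$), I would fix $t \in \lang{A}$ and $u \sim^{qc}_r t$ (using the symmetry of $\sim^{qc}$ established through Proposition \ref{Rotation:equivalence} and Definition \ref{Quasi-canonical-rotation}), and let $p_0 \in dom(t)$ be the unique position with $r(p_0) = \epsilon$. Let $\pi : dom(t) \to Q$ be an accepting run of $A$ on $t$, and let $\rho = T(q_1, \ldots, q_k) \to q$ be the rule applied at $p_0$, i.e.\ $\pi(p_0) = q$ with children states $q_1, \ldots, q_k$. The plan is to build a run $\pi'$ of $A_\rho$ on $u$ by induction on the path $p_0, p_0.(-1), p_0.(-1).(-1), \ldots, \epsilon$ in $t$: on subtrees of $t$ branching off this path, $\pi'$ coincides with $\pi$ (accepted by the ``untouched'' rules inherited into $A_\rho$); on the path itself, $\pi'$ uses the fresh ``reverse'' states $q^{rev}$ and the new transitions $U_{new}(\ldots, s^{rev}, \ldots) \to q^{rev}$ that were installed precisely to mirror the direction swap. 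The labeling substitutions $\sigma_p$ from Definition \ref{Quasi-canonical-rotation} guarantee that the tiles read off $u$ at each position along $r$'s image of the path are the normalized tiles generated by $\Call{sortTile}{}$.

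For the soundness direction ($\subseteq$), I would distinguish whether the accepting run of $A^r$ on $u$ ends in a state of $A$ (in which case $u \in \lang{A}$ and we take $t = u$ with $r$ the identity), or ends in some fresh final state $q_\rho^f$. In the second case, the accepting run of $A^r$ must use exactly one transition of the form $T_{new}(q_0, \ldots, q_j, q^{rev}, q_{j+1}, \ldots, q_n) \to q_\rho^f$ at the root, and then a unique chain of $q^{rev}$-producing transitions leading to a transition of $A$. Reading this chain backwards yields the inverse rotation function and reconstructs $t \in \lang{A}$; the normalization via $\overline{T}$ ensures that the port factorizations along the reconstructed path agree with Definition \ref{Canonically-tiled}.

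The main obstacle, I expect, is bookkeeping the port reorderings and the substitutions $\sigma_p$ correctly along the reversed path: when a rule of $A$ is ``turned inside out,'' its forward port in direction $j$ becomes the backward part of the incoming port (and vice versa for the edge entering from the parent), so the new tile $T_{new}$ must be obtained by swapping $\port_j^{fw}$ with $\port_{-1}^{bw}$ and $\port_j^{bw}$ with $\port_{-1}^{fw}$, then resorting via $\Call{sortTile}{}$ and renormalizing. Once this local correspondence is verified for a single rule, both induction arguments go through cleanly. For the size bound, $A^r$ is the disjoint union of $A$ and one automaton $A_\rho$ per rule $\rho$ of $A$; each $A_\rho$ adds at most $|Q|$ new ``reverse'' states and at most one modified copy of each existing transition, so $\len{A_\rho} = \mathcal{O}(\len{A})$, giving $\len{A^r} = \mathcal{O}(\len{A}^2)$.
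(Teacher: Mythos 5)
Your proposal follows essentially the same route as the paper's proof: both inclusions are handled by induction along the unique reversed path (identified via Lemma \ref{rotation-reversion} in one direction, and via the maximal chain of $q^{rev}$-labelled positions in the accepting run in the other), with the local tile transformation swapping the forward/backward parts of the incoming port with those of the output port on that path, and the size bound follows from the same per-rule counting of the automata $A_\rho$. The approach and all key ingredients match; no gaps.
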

\fi
\ifLongVersion
\begin{proof}
``$\subseteq$'' Let $u \in \lang{A^r}$ be a normal quasi-canonical
  tree. Then $u \in \lang{A} \cup \bigcup_{\rho \in \Delta}
  \lang{A_\rho}$ (lines \ref{line:init} and \ref{line:union} in
  Alg. \ref{alg:rot-closure}). If $u \in \lang{A}$, then we choose
  $t=u$ and trivially $t \sim^{qc} u$. Otherwise $u \in
  \lang{A_\rho}$, where $A_\rho = \langle Q_\rho, \mathcal{T}^{qc},
  \Delta_\rho, \{q_\rho^f\} \rangle$, for some $\rho \in \Delta$ (line
  \ref{line:A-rho}). Let $\pi : dom(u) \rightarrow Q_\rho$ be the run
  of $A_\rho$ on $u$. Also let $p_0 \in dom(u)$ be the maximal
  (i.e.\ of maximal length) position such that $u(p_0) = q^{rev}$, for
  some $q^{rev} \in Q^{rev}$ (line \ref{line:Q-rev}). Let
  $\subtree{u}{p_0.i}$ be the subtrees of $u$ rooted at the children
  of $p_0$, for all $i=0, \ldots, \#_u(p_0)-1$. Since $p_0$ is the
  maximal position of $\pi$ to be labeled by some state in $Q^{rev}$,
  it is easy to see that $\subtree{\pi}{p_0.i}$ are labeled by states
  in $Q$, hence $\subtree{\pi}{p_0.i}$ are runs of $A$ over
  $\subtree{t}{p_0.i}$, for all $i=0, \ldots, \#_u(p_0)-1$ (the only
  rules of $A_\rho$ involving only states from $Q$ are the rules of
  $A$, cf. line \ref{line:copy}).

  We build a quasi-canonical tree $t : \nat^* \rightharpoonup_{fin}
  \mathcal{T}^{qc}$ and a run $\theta : dom(t) \rightarrow Q$ of $A$
  on $t$ top-down as follows. Let $p_0, p_1, \ldots, p_n=\epsilon$ be
  the path composed of the prefixes of $p_0$,
  i.e.\ $p_i=p_{i-1}.(-1)$, for all $i=1,\ldots,\len{p_0}=n$. We build
  $t$, $\theta$, and a path
  $\overline{p_0}=\epsilon,\overline{p_1}=j_0, \ldots,
  \overline{p_{n}} \in dom(t)$, by induction on this path.

  For the base case, let $u(p_0) = \langle \varphi, \vec{x}_{-1}^{fw}
  \cdot \vec{x}_{-1}^{bw} \cdot \vec{x}_{-1}^{eq}, \vec{x}_0, \ldots,
  \vec{x}_{\#_u(p)-1} \rangle$ be the tile which labels $u$ at
  position $p_0$. Let $t(\epsilon) = \langle \varphi, \emptyset,
  \vec{x}_0, \ldots, \vec{x}_{j_0-1}, \vec{x}_{-1}^{bw} \cdot
  \vec{x}_{-1}^{fw} \cdot \vec{x}_{-1}^{eq}, \vec{x}_{j_0}, \ldots,
  \vec{x}_{\#_u(p)-1} \rangle$, where $j_0+1$ is the canonical
  position of the output port $\vec{x}_{-1}^{bw} \cdot
  \vec{x}_{-1}^{fw} \vec{x}_{-1}^{eq}$, in $t(\epsilon)$, according to
  the order of selector edges in $\varphi$. Then $A_\rho$ has a
  rule: $$u(p_0)(\pi(p_0.0), \ldots, \pi(p_0.(\#_u(p)-1))) \arrow{}{}
  \pi(p)$$ such that $\{\pi(p_0.0), \ldots, \pi(p_0.(\#_u(p)-1)\} \cap
  Q^{rev} = \emptyset$. Since $p_0$ is the maximal position labeled
  with a state from $Q^{rev}$, this rule was generated at line
  \ref{line:last-rule} in Alg. \ref{alg:rot-closure}. It follows that
  $A$ has a rule $$t(\epsilon)(\pi(p_0.0), \ldots,
  \pi(p_0.(\#_u(p_0)-1))) \arrow{}{} q$$ for some final state $q \in
  F$ (line \ref{line:final-state}). Then let $\theta(\epsilon) =
  q$. We further define:
  \[
  \subtree{t}{i} = \left\{\begin{array}{ll}
  \subtree{u}{p_0.i} & \mbox{if $i = 0, \ldots, j_0-1$} \\
  \subtree{u}{p_0.(i+1)} & \mbox{if $i=j_0+1, \ldots, \#_u(p_0)-1$}
  \end{array}\right. \hspace*{5mm}
  \subtree{\theta}{i} = \left\{\begin{array}{ll} 
  \subtree{\pi}{p_0.i} & \mbox{if $i = 0, \ldots, j_0-1$} \\ 
  \subtree{\pi}{p_0.(i+1)} & \mbox{if $i=j_0+1, \ldots, \#_u(p_0)-1$}
  \end{array}\right.
  \]
  For the induction step, for each $0 < i \leq n$, we have
  $p_{i-1}=p_i.k_i$, for some $0 \leq k_i < \#_u(p_{i-1})$. We have
  $u(p_i) = \langle \varphi, \vec{x}_{-1}, \vec{x}_0, \ldots,
  \vec{x}_{\#_u(p_i)-1} \rangle$, and define:
  $$t(\overline{p_i}) = \langle \varphi, ~\vec{x}_{k_i}^{bw} \cdot
  \vec{x}_{k_i}^{fw} \cdot \vec{x}_{k_i}^{eq}, ~\vec{x}_0, ~\ldots,
  ~\vec{x}_{j_i-1}, ~\vec{x}_{-1}^{bw} \cdot \vec{x}_{-1}^{fw} \cdot
  \vec{x}_{-1}^{eq},~ \vec{x}_{j_i}, ~\ldots, ~\vec{x}_{\#_u(p_i)-1}
  \rangle$$ where $j_i+1$ is the canonical position of the port
  $\vec{x}_{-1}^{bw} \cdot \vec{x}_{-1}^{fw} \cdot \vec{x}_{-1}^{eq}$
  given by the the selector edges in $\varphi$. Moreover, $A_\rho$ has
  a rule: $$u(p_i)(\pi(p_i.0), \ldots, \pi(p_i.(\#_u(p_i)-1)))
  \arrow{}{} \pi(p_i)$$ where $\pi(p_i.k_i)=\pi(p_{i-1}) \in Q^{rev}$,
  $\pi(p_i) \in Q^{rev}$ if $0 \leq i < n$, and $\pi(p_n) =
  q^f_\rho$. Moreover, this rule was introduced at line
  \ref{line:rot-rule}, if $i < n$, or at line
  \ref{line:final-rot-rule}, if $i=n$. Let $i < n$ (the case $i = n$
  uses a similar argument). If $\pi(p_{i-1})=s^{rev}$ and
  $\pi(p_i)=q^{rev}$, then $A$ must have a
  rule: $$t(\overline{p_i})(\pi(p_i.0), \ldots, \pi(p_i.(j_i-1)), q,
  \pi(p_i.j_i), \ldots, \pi(p_i.(\#_u(p_i)-1))) \arrow{}{} s$$ Let
  $\overline{p_{i+1}}=\overline{p_i}.j_i$ and
  $\theta(\overline{p_{i+1}})=q$. We further define:
  \[
  \subtree{t}{\overline{p_i}.\ell} = \left\{\begin{array}{ll}
    \subtree{u}{p_i.\ell} & \mbox{if $\ell = 0, \ldots, j_i - 1$} \\
    \subtree{u}{p_i.(\ell+1)} & \mbox{if $\ell = j_i+1, \ldots, \#_u(p_i) - 1$}
  \end{array}\right. \hspace*{5mm}
  \subtree{\theta}{\overline{p_i}.\ell} = \left\{\begin{array}{ll} 
  \subtree{\pi}{p_i.\ell} & \mbox{if $\ell = 0, \ldots, j_i - 1$} \\ 
  \subtree{\pi}{p_i.(\ell+1)} & \mbox{if $i = j_i+1, \ldots, \#_u(p_i)-1$}
  \end{array}\right.
  \]
  We define the following rotation function $r : dom(t) \rightarrow
  dom(u)$. For each $i = 0, \ldots, n$, we have
  $r(\overline{p_i})=p_i$, and:
  \[
  r(\overline{p_i}.\ell) = \left\{\begin{array}{ll}
  r(p_i.\ell) & \mbox{if $\ell = 0, \ldots, j_i - 1$} \\
  r(p_i.(\ell+1)) & \mbox{if $\ell = j_i, \ldots, \#_u(p_i) - 1$}
  \end{array}\right.
  \]
  It is easy to check that indeed $t \sim^{qc}_r u$, and that $\theta$ is
  a run of $A$ over $t$. 

\noindent''$\supseteq$'' Let $u : \nat^* \rightharpoonup_{fin}
\mathcal{T}^{qc}$ be a quasi-canonical tree such that $t \sim^{qc}_r
u$, for some $t \in \lang{A}$ and a bijective function $r : dom(t)
\rightarrow{}{} dom(u)$. Let $\pi : dom(t) \rightarrow Q$ be a run of
$A$ over $t$. We build an accepting run $\theta$ of $A^r$ over the
normal tree $\overline{u}$. Let $p_0 \in dom(t)$ such that
$r(p_0) = \epsilon$ is the root of $u$. If $p_0 = \epsilon$, it is
easy to show that $dom(u) = dom(t)$ and $r(p) = p$, for all $p \in
dom(t)$, because both $t$ and $u$ are quasi-canonical trees, and the
order of children is given by the order of selector edges in the tiles
labeling the trees. Moreover, the normal form of these tiles is
identical, i.e.\ $\overline{u}(p)=t(p)$, for all $p \in dom(t)$, hence
$u \in \lang{A} \subseteq \lang{A}^r$ (cf. line \ref{line:init}). 

Otherise, if $p_0 \neq \epsilon$, we consider the sequence of prefixes
of $p_0$, defined as $p_i = p_{i-1}.(-1)$, for all $i = 1, \ldots, n =
\len{p_0}$. Applying Lemma \ref{rotation-reversion} to $p_0, \ldots,
p_n$ successively, we obtain a path $\epsilon = r(p_0), r(p_1),
\ldots, r(p_n) \in dom(u)$, such that $r(p_{i+1}) = r(p_i).d_i$, for
all $i=0,\ldots,n-1$, and some positive directions $d_0, \ldots,
d_{n-1} \in \mathcal{D}_+(u)$.

We build the run $\theta$ by induction on this path. For the base
case, let $t(p_0) = q_0$. Then $A$ has a rule $\rho = (t(p_0)(q_1,
\ldots, q_{\#_t(p_0)}) \arrow{}{} q_0) \in \Delta$. By the
construction of $A^r$, cf. line \ref{line:final-rot-rule}, $A_\rho$
has a rule $$(\overline{u}(\epsilon))(q_1, \ldots,
q_{d_0},q_0^{rev},q_{d_0+1}, \ldots, q_{\#_t(p_0)}) \arrow{}{}
q^f_\rho.$$ We define $\theta(\epsilon) = q^f_\rho$ and:
\[
\subtree{\theta}{i} = \left\{\begin{array}{ll} 
\subtree{\pi}{p_0.i} & \mbox{if $i = 0, \ldots, d_0-1$} \\ 
\subtree{\pi}{p_0.(i+1)} & \mbox{if $i=d_0+1, \ldots, \#_t(p_0)-1$}
\end{array}\right.
\]
For the induction step, we denote $q_i = \pi(p_i)$, for all $0 < i
\leq n$. Then $A$ has a rule $$(t(p_i))(\pi(i.0), \ldots,
\pi(i.(k_i-1)), q_i, \pi(i.(k_i+1)), \ldots, \pi(i.(\#_t(p_i)-1)))
\arrow{}{} q_{i+1},$$ for each $0 < i \leq n$. By the construction of
$A^r$, cf. line \ref{line:rot-rule}, $A_\rho$ has a rule:
$$(\overline{u(r(p_i))})(\pi(i.0), \ldots, \pi(i.(d_i-1)),
~q_{i+1}^{rev},~ \pi(i.(d_i+1)), \ldots, \pi(i.(\#_t(p_i)-1))) \arrow{}{}
q_i^{rev}.$$
We define $\theta(r(p_i))=q_i^{rev}$ and:
\[
\subtree{\theta}{r(p_i).\ell} = \left\{\begin{array}{ll} 
\subtree{\pi}{r(p_i).\ell} & \mbox{if $\ell = 0, \ldots, d_i-1$} \\ 
\subtree{\pi}{r(p_i).(\ell+1)} & \mbox{if $i=d_i+1, \ldots, \#_t(p_i)-1$}
\end{array}\right.
\]
It is not difficult to prove that $\theta$ is a run of $A_\rho$, and,
moreover, since $\theta(\epsilon)=q^f_\rho$, it is an accepting run,
hence $\overline{u} \in \lang{A_\rho} \subseteq A^r$ (cf. line
\ref{line:union}).

Concerning the size of $A^r$, notice that $\len{A^r} \leq \len{A} + \sum_{\rho
\in \Delta}\len{A_\rho}$ where $A_\rho = \langle Q_\rho, \Sigma, \Delta_\rho,$
$\{q^f_\rho\} \rangle$. We have that $\card{\Delta_\rho} \leq \card{\Delta}$,
and for each rule $\tau \in \Delta_\rho \setminus \Delta$ created at lines
\ref{line:last-rule}, \ref{line:final-rot-rule}, or \ref{line:rot-rule}, there
exists a rule $\nu \in \Delta$ such that $\len{\tau} \leq \len{\nu}+1$.
Moreover, $\card{\Delta_\rho \setminus \Delta} \leq \card{Q}$ since we introduce
a new rule for each state in the set $\mathtt{visited} \subseteq Q$.  Hence
$\len{A_\rho} = \len{A} + \sum_{\tau \in \Delta_\rho \setminus \Delta}
\len{\tau} \leq \len{A} + \len{A} + \card{Q} \leq 3\len{A}$. Hence $\len{A^r}
\leq 3 \len{A}^2 = \mathcal{O}(\len{A}^2)$. \qed\end{proof}
\fi

\ifLongVersion
\begin{algorithm}[t]
\begin{algorithmic}[0]
  \State {\bf input} A quasi-canonical tree automaton $A= \langle Q, \Sigma,
  \Delta, F \rangle$

  \State {\bf output} A tree automaton $A^r$, where $\lang{A^r} = \{u
  : \nat^* \rightharpoonup_{fin} \mathcal{T}^{qc} ~|~ \exists t \in
  \lang{A} ~.~ u \sim^{qc} t\}$
\end{algorithmic}
\begin{algorithmic}[1]
  \State $A^r = \langle Q_r, \Sigma, \Delta_r, F_r \rangle \leftarrow A$ 
  \label{line:init}
  \Comment{make a copy of $A$ into $A^r$}

  \ForAll{$\rho \in \Delta$} \Comment{iterate through the rules of $A$}

  \State{{\bf assume} $\rho \equiv T(q_0, \ldots, q_k) \rightarrow q$ {\bf and}
    $T \equiv \langle \varphi, \vec{x}_{-1},\vec{x}_0,\dots, \vec{x}_k \rangle$}
  \Comment{the chosen rule $\rho$ will recognize the root of the rotated tree}
  
  %% \If{$\vec{x}_{-1}=\emptyset$ and $q\in Q_f$} 
  %%    [$i$ is the root rule of $A$---no need to rotate]
  %%    \State $A^r \leftarrow A^r \cup A$ [an automata union]

  \If{$\vec{x}_{-1} \neq \emptyset$ {\bf or} $q \not\in F$}
  \State{{\bf assume} $\vec{x}_{-1}\equiv \vec{x}_{-1}^{fw}\cdot \vec{x}_{-1}^{bw}\cdot \vec{x}_{-1}^{eq}$}
  \Comment{factorize the input port}

  %% \If {$\vec{x}_{-1}^{bw} = \emptyset$}
  %%       	 	\State skip this case [There will be no forward local edge after rotation]

  \If {$\vec{x}_{-1}^{bw} \neq \emptyset$} 

  \State\label{line:Q-rev} $Q^{rev} \leftarrow \{q^{rev} \mid q\in Q
  \}$ \Comment{ states $q^{rev}$ label the unique reversed path in
    $A^r$}
  
  \State\label{line:copy} $(Q_\rho,\Delta_\rho) \leftarrow (Q \cup
  Q^{rev} \cup \{q^f_\rho\},\Delta)$ \Comment{assuming $Q \cap Q^{rev}
    = \emptyset$, $q^f_\rho \not\in Q \cup Q^{rev}$}

  \State $p \leftarrow \Call{positionOf}{\vec{x}_{-1}^{bw},\varphi}$
  \Comment{find new output port for $\vec{x}_{-1}$ based on selectors of $\varphi$}

  \State\label{line:root-tile} $T_{new} \leftarrow \langle \varphi,
  \emptyset,\vec{x}_0,\ldots,\vec{x}_p,
  \vec{x}_{-1}^{bw}\cdot\vec{x}_{-1}^{fw}\cdot\vec{x}_{-1}^{eq},
  \vec{x}_{p+1},\ldots,\vec{x}_k \rangle$ \Comment{swap
    $\vec{x}_{-1}^{bw}$ with $\vec{x}_{-1}^{fw}$}

  \State\label{line:final-rot-rule} $\Delta_\rho \leftarrow \Delta_\rho \cup \{T_{new}(q_0,
  \ldots, q_p, q^{rev}, q_{p+1}, \ldots, q_k) \arrow{}{} q^f_\rho \}$

  \State $(\Delta_\rho,\_) \leftarrow \Call{rotateRule}{q,\Delta,\Delta_\rho,\emptyset,F}$
  \Comment{continue building $\Delta_\rho$ recursively}

  \State\label{line:A-rho} $A_\rho \leftarrow \langle Q_\rho, \Sigma, \Delta_\rho, \{q^f_\rho\} \rangle$
  \Comment{$A_\rho$ recognizes trees whose roots are labeled by $\rho$}

  \State $A^r \leftarrow A^r \cup A_\delta$ 
  \label{line:union}
  \Comment{incorporate $A_\delta$ into $A_r$}
  \EndIf
  
  \EndIf 
  \EndFor
  \State {\bf return} $A^r$
\end{algorithmic}
\begin{algorithmic}[1]
  \Function{rotateRule}{$q,\Delta,\Delta_{new},\mathtt{visited},F$} 
  \State $\mathtt{visited} \leftarrow \mathtt{visited} \cup \{q\}$ 
  \ForAll{$(U(s_0,\dots,s_\ell) \rightarrow s) \in \Delta$} 
  \Comment{iterate through the rules of $A$}

  \ForAll{$0 \leq j \leq \ell$ {\bf such that} $s_j=q$}
  \Comment{for all occurrences of $q$ on the rhs}

  \State{{\bf assume} $U = \langle \varphi, \vec{x}_{-1}, \vec{x}_0,
  \ldots, \vec{x}_j, \dots, \vec{x}_\ell\rangle$}

  \State{{\bf assume} $\vec{x}_{j} \equiv \vec{x}_{j}^{fw}\cdot \vec{x}_{j}^{bw}
    \cdot\vec{x}_{j}^{eq}$} \Comment{factorize the $\vec{x}_j$ output port}

  \If{$\vec{x}_{-1}=\emptyset$ {\bf and} $s\in F$}\label{line:final-state}
  \Comment{remove $\vec{x}_j$ from output and place it as input port}
  \State $U' \leftarrow \langle \varphi,\vec{x}_{j}^{bw}\cdot
  \vec{x}_{j}^{fw}\cdot\vec{x}_{j}^{eq},\vec{x}_0,\dots,
  \vec{x}_{j-1},\vec{x}_{j+1},\dots,\vec{x}_\ell\rangle$

  \State\label{line:last-rule} $\Delta_{new} \leftarrow \Delta_{new}
  \cup \{ U'(s_0,\dots,s_{j-1},s_{j+1},\dots,s_\ell) \arrow{}{}
  q^{rev} \}$

  \Else \Comment{swap $\vec{x}_{-1}$ with $\vec{x}_j$}
  \State $\vec{x}_{-1}\equiv \vec{x}_{-1}^{fw}\cdot \vec{x}_{-1}^{bw} \cdot\vec{x}_{-1}^{eq}$

  %% \If{$\vec{x}_{-1}^{bw}=\emptyset$} [no forward local edge
  %%       				after the rotation]

  \If{$\vec{x}_{-1}^{bw} \neq \emptyset$}

  \State $\mathtt{ports} \leftarrow \langle
  \vec{x}_0,\dots,\vec{x}_{j-1},\vec{x}_{j+1}, \ldots,\vec{x}_\ell \rangle$

  \State $\mathtt{states} \leftarrow (s_0,\dots,s_{j-1},s_{j+1},\dots,s_\ell)$

  \State $p \leftarrow \Call{insertOutPort}{\vec{x}_{-1}^{bw} \cdot
    \vec{x}_{-1}^{fw}\cdot \vec{x}_{-1}^{eq},\mathtt{ports}, \varphi$}
%  \Comment{place output port to position given by $\varphi$}

  \State $\Call{insertLhsState}{s^{rev}, \mathtt{states}, p}$
%  \Comment{place $s^{rev}$ on the same position in the new rule}

  \State $U_{new} \leftarrow \langle \varphi,\vec{x}_{j}^{bw}\cdot
  \vec{x}_{j}^{fw}\cdot\vec{x}_{j}^{eq},\mathtt{ports}\rangle$
  \Comment{create rotated tile}

  \State\label{line:rot-rule} $\Delta_{new} \leftarrow \Delta_{new}
  \cup \{U_{new}(\mathtt{states}) \rightarrow q^{rev}\}$
  \Comment{create rotated rule}

  \If{$s \not\in \mathtt{visited}$}

  \State $(\Delta_{new},\mathtt{visited}) \leftarrow
  \Call{rotateRule}{s,\Delta,\Delta_{new},\mathtt{visited},F}$
  \EndIf 
  \EndIf 
  \EndIf
  \EndFor 
  \EndFor 
  \State {\bf return} $(\Delta_{new},\mathtt{visited})$
  \EndFunction
\end{algorithmic}
\caption{Rotation Closure of Quasi-canonical TA} \label{alg:rot-closure}
\end{algorithm}
\fi

The main result of this paper is given by the following theorem. The
entailment problem for inductive systems is reduced, in polynomial
time, to a language inclusion problem for tree automata. The inclusion
test is sound (if the answer is yes, the entailment holds), and
moreover, complete assuming that both systems are local. 

\begin{theorem}\label{sl-ta-entailment} Let $\mathcal{P} = \big\{P_i ~\equiv~
\mid_{j=1}^{m_i} R_{i,j}\big\}_{i=1}^{n}$ be a system of inductive definitions.
Then, for any two predicates $P_i(x_{i,1}, \ldots, x_{i,n_i})$ and $P_j(x_{j,1},
\ldots, x_{j,n_j})$ of $\mathcal{P}$ such that $n_i = n_j$ and for any variables
$\overline{\alpha} = \{\alpha_1, \ldots, \alpha_{n_i}\}$ not used in
$\mathcal{P}$, the following holds:\begin{itemize}

  \item ~[Soundness] $P_i(\overline{\alpha}) \models_{\mathcal{P}}
  P_j(\overline{\alpha})$ if $\lang{A_1} \subseteq \lang{A^r_2}$,

  \item ~[Completness] $P_i(\overline{\alpha}) \models_{\mathcal{P}}
  P_j(\overline{\alpha})$ only if $\lang{A_1} \subseteq \lang{A^r_2}$ provided
  that the rooted systems $\langle \mathcal{P}, P_i \rangle$ and $\langle
  \mathcal{P}, P_j \rangle$ are both local\vspace*{-1mm}

\end{itemize} where $A_1 = \Call{sl2ta}{\mathcal{P},i,\overline{\alpha}}$ and
$A_2 = \Call{sl2ta}{\mathcal{P},j,\overline{\alpha}}$.\end{theorem}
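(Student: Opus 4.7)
The plan is to reduce both directions to the already-established lemmas about the correspondence between SL predicates, characteristic tiles of unfolding trees, and rotations. Throughout, write $A_1 = \Call{sl2ta}{\mathcal{P}, i, \overline{\alpha}}$ and $A_2 = \Call{sl2ta}{\mathcal{P}, j, \overline{\alpha}}$, so by Lemma \ref{sl-ta} we have, for every state $S$ and every $k \in \{i,j\}$, that $S \models P_k(\overline{\alpha})$ iff there exists $t \in \lang{A_k}$ with $S \models \Phi(t)$.

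For soundness, assume $\lang{A_1} \subseteq \lang{A_2^r}$ and let $S$ be any state with $S \models P_i(\overline{\alpha})$. By Lemma \ref{sl-ta} there is $t_1 \in \lang{A_1}$ with $S \models \Phi(t_1)$; the inclusion hypothesis gives $t_1 \in \lang{A_2^r}$; Lemma \ref{canonical-rotation-ta} then furnishes a tree $t_2 \in \lang{A_2}$ with $t_1 \sim^{qc} t_2$; and Lemma \ref{quasi-canonical-rotation-lemma} transports the model across the rotation, yielding $S \models \Phi(t_2)$. A second application of Lemma \ref{sl-ta} concludes $S \models P_j(\overline{\alpha})$.

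For completeness, assume both rooted systems are local and that the entailment $P_i(\overline{\alpha}) \models_{\mathcal{P}} P_j(\overline{\alpha})$ holds. Pick $t_1 \in \lang{A_1}$; it suffices to show $t_1 \in \lang{A_2^r}$. By locality, the tiles appearing in the unfolding trees produced by $\Call{sl2ta}{}$ have empty equality ports (the sufficient test of Section~\ref{sec:canonization} applies to both systems), so $t_1$ is in fact a canonical tree, and so is every $t \in \lang{A_2}$. Use Lemma \ref{canonical-model} (the ``$\Leftarrow$'' direction, via any concrete local spanning tree realizing the structure encoded by $t_1$) to exhibit a state $S$ with $S \models \Phi(t_1)$; by Lemma \ref{sl-ta} $S \models P_i(\overline{\alpha})$, hence by hypothesis $S \models P_j(\overline{\alpha})$, and Lemma \ref{sl-ta} again gives a tree $t_2 \in \lang{A_2}$ with $S \models \Phi(t_2)$. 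Since $t_1$ and $t_2$ are both canonical and share the model $S$, the biconditional Lemma \ref{canonical-rotation-lemma} applies and delivers $t_1 \sim^c t_2$, which in particular is a quasi-canonical rotation. Finally, Lemma \ref{canonical-rotation-ta} yields $t_1 \in \lang{A_2^r}$, as required.

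The main obstacle I expect is the reliance, in the completeness direction, on the conversion of locality of the \emph{system} into a statement that every $t \in \lang{A_1} \cup \lang{A_2}$ is truly \emph{canonical} (not merely quasi-canonical); this is exactly what is needed to invoke Lemma \ref{canonical-rotation-lemma} rather than the weaker Lemma \ref{quasi-canonical-rotation-lemma}, and it is the step where the hypothesis ``both rooted systems are local'' is consumed. The other delicate point is that the state $S$ supplied by Lemma \ref{canonical-model} must be chosen with locations pairwise distinct so that no accidental identification prevents $t_2$ from being canonically rotatable to $t_1$; choosing the spanning tree $u$ in that lemma as an injection into a fresh set of locations suffices.
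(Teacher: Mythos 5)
Your proof follows the paper's own argument step for step: soundness via Lemma \ref{sl-ta}, the rotation-closure Lemma \ref{canonical-rotation-ta}, and Lemma \ref{quasi-canonical-rotation-lemma}; completeness via building a model of $\Phi(t_1)$ with Lemma \ref{canonical-model}, transporting it through the entailment, and invoking the biconditional Lemma \ref{canonical-rotation-lemma} to obtain the canonical rotation, exactly as in the paper. The only slight imprecision is the parenthetical suggestion that locality of the system entails success of the signature test of Sec.~\ref{sec:canonization} (that test is sufficient, not necessary, for locality); what is actually needed, and what the paper likewise asserts, is that local rooted systems yield automata recognizing only (normal) canonical trees.
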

\ifLongVersion
\begin{proof} [Soundness] Let $S$ be a state such that $S \models
P_i(\overline{\alpha})$. By Lemma \ref{sl-ta}, there exists a~tree $t \in
\lang{A_1}$ such that $S \models \Phi(t)$. Since $t \in \lang{A_1} \subseteq
\lang{A_2^r}$, by Lemma \ref{canonical-rotation-ta}, there exists
a~quasi-canonical tree $u \in \lang{A_2}$ such that $t \sim^{qc} u$. By Lemma
\ref{quasi-canonical-rotation-lemma}, we have $S \models \Phi(u)$, and by Lemma
\ref{sl-ta} again, we obtain $S \models P_j(\overline{\alpha})$. Hence
$P_i(\overline{\alpha}) \models_{\mathcal{P}} P_j(\overline{\alpha})$.

[Completness] If $\langle \mathcal{P}, P_i \rangle$ and $\langle \mathcal{P},
P_j \rangle$ are local rooted systems, it follows that $A_1$ and $A_2$ recognize
only normal canonical trees. Let $t \in \lang{A_1}$ be a normal canonical tree.
It is not difficult to see that starting from $t$, one can build a state $S$ and
its spanning tree such that the right-hand side of Lemma~\ref{canonical-model}
holds. Then, by Lemma~\ref{canonical-model}, one gets that $S \models \Phi(t)$.
By Lemma \ref{sl-ta}, we have $S \models P_i(\overline{\alpha})$, and since
$P_i(\overline{\alpha}) \models_{\mathcal{P}} P_j(\overline{\alpha})$, we have
$S \models P_j(\overline{\alpha})$. By Lemma \ref{sl-ta} again, we obtain a
canonical tree $u \in \lang{A_2}$ such that $S \models \Phi(u)$. Since $S
\models \Phi(t)$ and $S \models \Phi(u)$, by Lemma
\ref{canonical-rotation-lemma}, we have $t \sim^c u$, and since $t$ is normal,
we obtain $t \in \lang{A^r_2}$, by Lemma~\ref{canonical-rotation-ta}. Hence
$\lang{A_1} \subseteq \lang{A^r_2}$.
\qed\end{proof}
\fi

\begin{example}[cont. of Ex.~\ref{ex:DLLtoTA}]\label{ex:TAclosure}
The automaton $A^r$ corresponding to the rotation closure of the automaton $A$ is
$A^r=\langle \Sigma, \{q_1,q_2,q_2^{rev},q_{fin}\}, \Delta,\{q_1,q_{fin}\}\rangle$ 
where:{\small
\[\begin{array}{rcl}
\Delta & = & \left\{\begin{array}{lcl}
\overline{\langle \ax \mapsto (\bx,\dx) \wedge \ax=\cx, \emptyset \rangle} ()  \rightarrow  q_1
\ \ \ \ \ 
\overline{\langle \ax \mapsto (x,\bx),\emptyset, (x,\ax) \rangle} (q_2) & \rightarrow & q_1 \\
\overline{\langle \exists hd'. hd' \mapsto (\dx,p) \wedge hd=\cx \wedge hd'=hd, (hd,p) \rangle} () & \rightarrow & q_2 \\
\overline{\langle \exists hd'. hd' \mapsto(x,p) \wedge hd'=hd, (hd,p), (x,hd) \rangle} (q_2) & \rightarrow & q_2 \\
\overline{\langle \exists hd'. hd' \mapsto (d,p) \wedge hd=c \wedge hd'=hd, \emptyset, (p,hd)
\rangle} (q_2^{rev}) & \rightarrow & q_{fin} \\
\overline{\langle \ax \mapsto (x,\bx), (\ax,x) \rangle} () & \rightarrow & q_2^{rev} \\
\overline{\langle \exists hd'. hd' \mapsto(x,p) \wedge hd'=hd, (hd,x), (p,hd) \rangle} (q_2^{rev}) &
\rightarrow & q_2^{rev}\\
\overline{\langle \exists hd'. hd' \mapsto(x,p) \wedge hd'=hd, \emptyset, (x,hd),(p,hd) \rangle} 
(q_2,q_2^{rev}) & \rightarrow & q_{fin}\\
\end{array}\right\}
\end{array}\]}  %\qed 
\end{example}

%!!!!!!!!!!!!!!!!!!!!!!!!!!!!!!!!!!!!!!!!!!!!!!!!
\enlargethispage{5mm}
%!!!!!!!!!!!!!!!!!!!!!!!!!!!!!!!!!!!!!!!!!!!!!!!!

%%%%%%%%%%%%%%%%%%%%%%%%%%%%%%%%%%%%%%%%%%%%%%%%%%%%%%%%%%%%%%%%%%%%%%%%%%%%%%%
\vspace*{-3mm}\section{Complexity}\vspace*{-2mm}\label{sec:complexity}
%%%%%%%%%%%%%%%%%%%%%%%%%%%%%%%%%%%%%%%%%%%%%%%%%%%%%%%%%%%%%%%%%%%%%%%%%%%%%%%

In this section, we prove tight complexity bounds for the entailment problem in
the fragment of SL with inductive definitions under consideration (i.e.\ with
the restrictions defined in Sec. \ref{sec:inductive-definitions}). The first
result shows the need for {\em connected} rules in the system as allowing
unconnected rules leads to the undecidability of the entailment problem. As
a~remark, the general undecidability of entailments for SL with inductive
definitions has already been proved in \cite{fossacs14}. In addition to their
result, our proof stresses the fact that undecidability occurs due the lack of
connectivity within some rules.

\begin{theorem}\label{not-connected-undecidable} The entailment problem $P_i
\models_{\mathcal{P}} P_j$, for some $1 \leq i,j \leq n$, is undecidable for
inductive systems $\mathcal{P} = \{P_i \equiv |_{j=1}^{m_i} R_{i,j}\}_{i=1}^n$
that can have unconnected rules.\end{theorem}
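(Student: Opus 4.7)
The plan is to reduce from the Post Correspondence Problem (PCP), well known to be undecidable. Given a PCP instance $\{(u_1,v_1),\ldots,(u_k,v_k)\}$ over a finite alphabet, I would first fix a shape encoding of letters as small sub-heaps so that the spatial shape of a chain uniquely determines the word it encodes. For example, over $\{a,b\}$ one can use cells with two selectors, distinguishing $a$ and $b$ by which selector points to $\nil$. In particular, two chains have the same shape iff they encode the same word.

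I would then define a single predicate $P(\vec{x})$ in $\mathcal{P}$ by $k$ recursive rules plus one base rule. For each PCP pair $(u_i,v_i)$, the corresponding rule has a \emph{disconnected} spatial head that splits into two independent pieces: one prepending the encoding of $u_i$ in front of the first chain endpoint, the other prepending the encoding of $v_i$ in front of the second endpoint, followed by a recursive call of $P$ on the two new endpoints. The two pieces share no points-to path, so the rule is unconnected in the sense of Definition~\ref{connected-rule}. The base rule closes the recursion by identifying both chains on a common terminal structure that, by the shape encoding, is satisfiable only when the two accumulated words coincide. Consequently, $P$ has a model iff the PCP instance has a solution.

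To pass from satisfiability to entailment, I would add a companion predicate $Q(\vec{x})$ of the same arity defined by a single rule whose head is unsatisfiable under the strict semantics of separating conjunction (for instance, $\exists z ~.~ z\mapsto(\nil) ~*~ z\mapsto(\nil)$, which forces $z$ to be allocated by two disjoint subheaps simultaneously). Then $P(\vec{x}) \models_{\mathcal{P}} Q(\vec{x})$ iff $P$ is unsatisfiable iff the PCP instance has no solution, which yields undecidability. The main obstacle I expect is fine-tuning the base rule of $P$ so that the ``same-shape'' termination condition is expressible within the fragment's restrictions on pure formulas and parameter use; should this prove awkward, a natural alternative is to reduce from the halting problem for a two-counter Minsky machine, where each counter is a disjoint list segment and every machine instruction is naturally implemented by a disconnected rule acting on both segments simultaneously.
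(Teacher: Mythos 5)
There is a genuine gap, and it sits exactly where you flag it: the base rule that is supposed to be ``satisfiable only when the two accumulated words coincide.'' After the recursion bottoms out, your two chains occupy \emph{disjoint} parts of the heap (that is what $*$ forces), and a base rule has a spatial head of fixed, bounded size. The fragment gives you no mechanism to correlate the contents of two unbounded, disjoint structures: the pure part admits only equalities (no disequalities), equating the chain endpoints as locations says nothing about the letters along the chains, and you cannot overlay the two chains onto the same cells because the separating conjunction forbids it. The deeper problem is that your reduction targets \emph{satisfiability} of $P$ (and then converts to entailment via an unsatisfiable $Q$ --- that last step is fine). But satisfiability of an inductive predicate in this fragment, connected or not, is essentially an emptiness/reachability check: an unfolding-tree formula $\exists\vec{z}.\,\Sigma\wedge\Pi$ is satisfiable unless $\Pi$ forces two distinct allocated cells to coincide, since distinct existentials can always be instantiated to distinct locations, and that condition is checkable on the rule system. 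So no reduction from PCP (or from Minsky-machine halting, your fallback, which founders on the same inability to correlate the two counter segments) to satisfiability can succeed; the undecidability has to come from the quantifier alternation inherent in entailment itself.

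That is what the paper's proof does: it reduces from the inclusion problem $L_X(G)\subseteq L_Y(G)$ for context-free grammars, encoding each terminal word $w$ as a chain formula $\varphi_w$ and each production $X\to w_1X_{i_1}\cdots w_nX_{i_n}w_{n+1}$ as a rule whose head is the $*$-conjunction of the disconnected segments $\varphi_{w_1},\ldots,\varphi_{w_{n+1}}$, glued to the recursive calls only through existential endpoints. Unfolding trees then correspond exactly to derivation trees, models to derived words, and $P_X\models_{\mathcal{P}}P_Y$ to $L_X(G)\subseteq L_Y(G)$. Here the ``for all models'' quantifier of entailment plays the role of language inclusion directly, and no single rule ever has to compare two unbounded structures. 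If you want to salvage a PCP flavour, note that CFG inclusion is itself proved undecidable via PCP, so the detour through grammars is not avoiding your intuition --- it is the device that lets the comparison of the two words be performed by the entailment quantifier rather than by a base rule.
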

\ifLongVersion
\begin{proof} A context-free grammar is a tuple $G=\langle \mathcal{X}, \Sigma,
\delta \rangle$ where $\mathcal{X}$ is a finite nonempty set of
\emph{nonterminals}, $\Sigma$ is a finite nonempty \emph{alphabet} such that
$\mathcal{X} \cap \Sigma = \emptyset$, and $\delta \subseteq \mathcal{X}\times
(\Sigma \cup \mathcal{X})^*$ is a finite set of \emph{productions}. Given two
strings $u,v \in (\Sigma \cup \mathcal{X})^*$, we define a \emph{step} $u
\Longrightarrow_G v$ if there exists a production $(X, w) \in \delta$ and some
words $y,z \in (\Sigma \cup \mathcal{X})^*$ such that $u=yXz$ and $v=ywz$. We
denote by $\Longrightarrow^*_G$ the reflexive and transitive closure of the
$\Longrightarrow_G$ relation. The {\em language} of a grammar $G$ generated by a
nonterminal $X \in \mathcal{X}$ is defined as $L_X(G) = \{w \in \Sigma^* ~|~ X
\Longrightarrow_G^* w\}$. It is known that the inclusion problem $L_X(G)
\subseteq L_Y(G)$, for some nonterminals $X,Y \in \mathcal{X}$, is undecidable
as originally proved in \cite{barhillel-perles-shamir61}. We reduce from this
problem to entailment within unconnected inductive systems. 

Let $\Sigma = \{\sigma_1, \ldots, \sigma_N\}$ be the alphabet of
$G$. We define the set of selectors $$Sel = \{1, \ldots, \lceil \log_2
N \rceil + 1\}$$ and, for each alphabet symbol $\sigma_K$, we define a
basic SL formula $\varphi_K(x,y) \equiv x \mapsto(\vec{\beta},y)$
where, for all $0 \leq i < \lceil \log_2 N \rceil$:
\[\vec{\beta}_i = \left\{\begin{array}{ll}
x & \mbox{if $1$ occurs on position $i$ in the binary encoding of $K$} \\
\nil & \mbox{if $0$ occurs on position $i$ in the binary encoding of $K$}
\end{array}\right.\]
A word $w = \sigma_{i_1} \cdot \ldots \cdot \sigma_{i_n} \in \Sigma^*$
is encoded by the formula $$\varphi_w(x,y) \equiv \exists x_1 \ldots
\exists x_{n-1} ~.~ \varphi_{i_1}(x,x_1) * \ldots
\varphi_{i_n}(x_{n-1},y).$$ We define a predicate name $P_X(x,y)$ for
each nonterminal $X \in \mathcal{X}$, and for each production $\pi
\equiv (X, w_1 \cdot X_{i_1} \cdot \ldots \cdot w_n \cdot X_{i_n}
\cdot w_{n+1}) \in \delta$ where $w_1,\ldots,w_n \in \Sigma^*$ are
words and $X_{i_1}, \ldots, X_{i_n} \in \mathcal{X}$ are
non-terminals, we have the rule: 
\[\begin{array}{rcl}
R_\pi(x,y) \equiv \exists x_1 \ldots \exists x_{2n} & . &
\varphi_{w_1}(x,x_1) * P_{X_{i_1}}(x_1,x_2) * \ldots * \\ &&
\varphi_{w_n}(x_{2n-2},x_{2n-1}) * P_{X_{i_n}}(x_{2n-1}, x_{2n}) *
\varphi_{w_{n+1}}(x_{2n},y).
\end{array}\]
Finally, each predicate is defined as
$P_X \equiv |_{(X,w) \in \delta} ~R_{(X,w)}$, and the inductive system
is $\mathcal{P}_G = \{P_X ~|~ X \in \mathcal{X}\}$. It is immediate to
check that $P_X(x,y) \models_{\mathcal{P}_G} P_Y(x,y)$ if and only if
$L_X(G) \subseteq L_Y(G)$. \qed\end{proof}
\fi

The second result of this section provides tight complexity bounds for the
entailment problem for connected systems with the restrictions defined in Sec.
\ref{sec:inductive-definitions}. We must point out that EXPTIME-hardness of
entailments in the fragment of \cite{Iosif13} was already proved in
\cite{fossacs14}. The result below is stronger as the fragment under
consideration is a restriction of the fragment from \cite{Iosif13}. In
particular, we do not allow branching parameter copying and do not consider
checking entailments between formulae with empty spatial heads like, e.g.,
$\exists z ~.~ \DLL(x,z) * \DLL(z,y) \models \DLL(x,y)$. On the other hand,
these restrictions allow us to set an EXPTIME upper bound on the entailment
problem for connected systems.

\begin{theorem}\label{entailment-exptime-complete} The entailment problem $P_i
\models_{\mathcal{P}} P_j$, for some $1 \leq i,j \leq n$, is EXPTIME-complete
for connected inductive systems $\mathcal{P} = \{P_i \equiv |_{j=1}^{m_i}
R_{i,j}\}_{i=1}^n$.\end{theorem}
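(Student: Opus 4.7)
The plan is to establish the two directions separately: a matching EXPTIME upper bound by reduction to tree-automata inclusion, and EXPTIME-hardness by adapting the known lower bound.

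For the upper bound, I would assemble the ingredients that have already been put in place. Given $\mathcal{P}$ and two predicates $P_i, P_j$ of equal arity, first run the canonization pipeline from Section \ref{sec:canonization} (equality elimination, reduction to one points-to per rule via Algorithm \ref{alg:split}, and parameter elimination via Algorithm \ref{alg:eliminate}). Each of these stages produces an equivalent rooted system of size $\mathcal{O}(|\mathcal{P}|)$ in polynomial time. Then build $A_1 = \Call{sl2ta}{\mathcal{P}, i, \overline{\alpha}}$ and $A_2 = \Call{sl2ta}{\mathcal{P}, j, \overline{\alpha}}$; by Lemma \ref{sl-ta}, both have size $\mathcal{O}(|\mathcal{P}|)$. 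Compute the rotation closure $A_2^r$ using Algorithm \ref{alg:rot-closure}; by Lemma \ref{canonical-rotation-ta}, $|A_2^r| = \mathcal{O}(|A_2|^2) = \mathcal{O}(|\mathcal{P}|^2)$. Finally, test $\lang{A_1} \subseteq \lang{A_2^r}$. Language inclusion for nondeterministic bottom-up tree automata is in EXPTIME (a direct consequence of the classical subset-construction determinization of $A_2^r$, which yields at most exponentially many states, followed by polynomial-time inclusion on the resulting deterministic automaton). By Theorem \ref{sl-ta-entailment}, this test gives the correct answer on any pair of rooted systems satisfying the locality assumption; to lift the upper bound to the full connected fragment I would either invoke the quasi-canonical construction (Section \ref{sec:quasi-canonical-tiles}) which provides a sound-and-complete encoding for connected systems via Lemma \ref{quasi-canonical-rotation-lemma}, or, where locality fails, fall back on the general decidability result of \cite{Iosif13} and argue that its MSO-on-bounded-treewidth decision procedure runs in EXPTIME under our syntactic restrictions.

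For the lower bound I would reuse the EXPTIME-hardness reduction of \cite{fossacs14}, which is already known to hold for a superset of the present fragment, and verify that its output inductive system meets our structural requirements: every rule is connected in the sense of Definition \ref{connected-rule}, the pure part of each rule respects the restrictions of Section \ref{sec:inductive-definitions}, and no formal parameter is passed to two different predicate occurrences in the same rule. Since the construction of \cite{fossacs14} encodes the acceptance problem of a polynomial-space alternating Turing machine by means of list- or tree-shaped data structures with a single root, these conditions are naturally satisfied (or can be arranged by a trivial syntactic rewriting which preserves polynomial size and entailment equivalence).

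The main obstacle, in my view, is the upper bound rather than the lower bound. Theorem \ref{sl-ta-entailment} gives completeness only under locality, and the soundness-only quasi-canonical extension of Section \ref{sec:quasi-canonical-tiles} does not immediately yield a complete decision procedure for arbitrary connected systems. So the delicate step is to argue that the TA construction, combined with rotation closure, is in fact \emph{complete} for the whole connected fragment; this requires strengthening the completeness half of Lemma \ref{quasi-canonical-rotation-lemma} (which currently fails in the $\Rightarrow$ direction) by exploiting connectedness to show that any two quasi-canonical trees modelling the same state are related by a quasi-canonical rotation. If this strengthening can be pushed through, the EXPTIME bound on TA inclusion immediately yields the upper bound; otherwise one must route the upper bound through the bounded-treewidth MSO machinery and carefully account for its complexity on formulae produced by our inductive definitions.
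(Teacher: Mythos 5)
Your upper-bound argument is essentially the paper's: Theorem~\ref{sl-ta-entailment} reduces the entailment in polynomial time to an inclusion $\lang{A_1} \subseteq \lang{A^r_2}$ between nondeterministic tree automata of size polynomial in $\len{\mathcal{P}}$, and such inclusions are decidable in EXPTIME. Your observation that completeness of this reduction is only guaranteed under locality is a fair reading of the paper, but neither of your fallbacks repairs it: the MSO-on-bounded-treewidth route of \cite{Iosif13} carries no EXPTIME bound (this is exactly why \cite{fossacs14} leaves the upper bound open), and strengthening the missing direction of Lemma~\ref{quasi-canonical-rotation-lemma} is precisely what the paper does not do --- it simply invokes Theorem~\ref{sl-ta-entailment}, with the completeness half of the claim to be read as applying to local systems, as announced in the introduction.

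The genuine gap is in your lower bound. You propose to reuse the EXPTIME-hardness construction of \cite{fossacs14} and ``verify'' that its output satisfies the restrictions of Section~\ref{sec:inductive-definitions}; you never carry out that verification, and it is not a formality. The paper explicitly motivates a \emph{new} hardness proof by the fact that the present fragment is a strict restriction of the fragment of \cite{Iosif13} for which \cite{fossacs14} proves hardness (no branching parameter copying, connected rules with an allocated root, severely restricted pure parts), so citing that result does not establish hardness here. The paper instead reduces directly from language inclusion of nondeterministic bottom-up tree automata, which is EXPTIME-complete by \cite{tata05}: each alphabet symbol $\sigma_K$ of arity $n$ becomes a points-to proposition $x \mapsto (\vec{\beta}, y_1, \ldots, y_n)$ whose first $\lceil \log_2 N \rceil$ selectors encode $K$ in binary, each transition $\sigma_K(q_1,\ldots,q_n) \rightarrow q$ becomes a connected inductive rule of a predicate $P_q(x)$, and the final states of the two automata are collected into top-level predicates $P_1, P_2$ so that $P_1(x) \models_{\mathcal{P}} P_2(x)$ iff $\lang{A_1} \subseteq \lang{A_2}$. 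This construction visibly respects every syntactic restriction of the fragment, which is exactly what your appeal to \cite{fossacs14} leaves open.
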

\ifLongVersion
\begin{proof} We reduce from and to the inclusion problem for non-deterministic
(bottom-up) tree automata. It is known that the language inclusion problem for
non-deterministic tree automata is EXPTIME-complete (see, e.g.\ Corollary 1.7.9
in \cite{tata05}). 

For the hardness part, let $A_1 = \langle Q_1, \Sigma, \Delta_1, F_1
\rangle$ and $A_2 = \langle Q_2, \Sigma, \Delta_2, F_2 \rangle$ be two
tree automata over the same alphabet $\Sigma = \{\sigma_1, \ldots,
\sigma_N\}$. We assume w.l.o.g. that $Q_1 \cap Q_2 = \emptyset$.
We define the set of selectors $$Sel = \{1, \ldots, \lceil \log_2 N
\rceil + \max\{\#(\sigma) ~|~ \sigma \in \Sigma\}\}.$$ For each
alphabet symbol $\sigma_K$, $1 \leq K \leq N$ and each rule $\rho
\equiv (\sigma_K(q_1,\ldots,q_n) \arrow{}{} q) \in \Delta_1 \cup
\Delta_2$, we define a basic SL formula $\varphi_K^\rho(x,y_1, \ldots,
y_n) \equiv x \mapsto(\vec{\beta},y_1,\ldots,y_n)$ where, for all $0
\leq i < \lceil \log_2 N \rceil$:
\[\vec{\beta}_i = \left\{\begin{array}{ll}
x & \mbox{if $1$ occurs on position $i$ in the binary encoding of $K$} \\ 
\nil & \mbox{if $0$ occurs on position $i$ in the binary encoding of $K$}
\end{array}\right.\]

For each state $q \in Q_i$ of $A_i$, we consider a predicate name
$P_q(x)$ and an extra predicate name $P_i(x)$, for both $i=1,2$. For
each transition rule $\rho \equiv (\sigma_K(q_1,\ldots,q_n) \arrow{}{}
q)$, of either $A_1$ or $A_2$, we define the following connected
inductive rule:
$$R_\rho(x) \equiv \exists y_1 \ldots \exists y_n ~.~
\varphi_K^\rho(x,y_1,\ldots,y_n) * P_{q_1}(y_1) * \ldots *
P_{q_n}(y_n)$$ Then we define the predicates:
\[\begin{array}{rcl}
P_q & = & \{R_q ~|~ (\sigma(q_1,\ldots,q_n)
\arrow{}{} q) \in \Delta_1 \cup \Delta_2\} \\ 
P_i & = & \{R_q ~|~ (\sigma(q_1,\ldots,q_n) \arrow{}{} q) \in \Delta_i
~\mbox{and}~ q \in F_i\} ~\mbox{for both $i=1,2$}
\end{array}\]
Finally, we have $\mathcal{P} = \{P_q ~|~ (\sigma(q_1,\ldots,q_n)
\arrow{}{} q) \in \Delta_1 \cup \Delta_2\} \cup \{P_1,P_2\}$. It is
easy to check that $P_1(x) \models_{\mathcal{P}} P_2(x)$ if and only
if $\mathcal{L}(A_1) \subseteq \mathcal{L}(A_2)$. Hence the entailment
problem is EXPTIME-hard.

For the EXPTIME-completness part, Theorem \ref{sl-ta-entailment} shows
that any entailment problem $P_i \models_{\mathcal{P}} P_j$, for a
connected system $\mathcal{P}$, can be reduced, in polynomial time, to
a language inclusion problem between tree automata
$\mathcal{L}(A_{P_i}) \subseteq \mathcal{L}(A_{P_j}^r)$. Hence the
entailment problem is in EXPTIME. \qed\end{proof}
\fi

%%%%%%%%%%%%%%%%%%%%%%%%%%%%%%%%%%%%%%%%%%%%%%%%%%%%%%%%%%%%%%%%%%%%%%%%%%%%%%%
\vspace*{-3mm}\section{Experiments}\vspace*{-2.5mm}\label{sec:experiments}
%%%%%%%%%%%%%%%%%%%%%%%%%%%%%%%%%%%%%%%%%%%%%%%%%%%%%%%%%%%%%%%%%%%%%%%%%%%%%%%

We implemented a prototype tool \cite{slide} that takes as input two rooted
inductive systems $\langle \mathcal{P}_{lhs},P_{lhs} \rangle$ and $\langle
\mathcal{P}_{rhs}, P_{rhs} \rangle$ with eliminated parameters (i.e.\ the result
of Alg.~\ref{alg:eliminate}).
%
%% The tool then automatically computes the TA $A_{lhs}$ % and $A_{rhs}$
%corresponing to these systems.  % In case of non-local systems, which % are
%automatically recognized, the equality parameters are sorted simply in the %
%order in which they appear in the definition provided by the user.  %
%Subsequently, the tool computes the rotation closure % $A_{rhs}^r$. Finally, it
%checks the inclusion $\lang{A_{lhs}} % \subseteq \lang{A_{rhs}^r}$ using the
%\textsc{Vata} tree automata % library \cite{VATA} (which checks the inclusion
%using anti-chains % without a prior determinization of the automata). If
%$\lang{A_{lhs}} % \subseteq \lang{A_{rhs}^r}$, the tool provides a positive
%answer, % meaning that $P_{lhs}(\vec{z}) \models P_{rhs}(\vec{z})$. If %
%$\lang{A_{lhs}} \not\subseteq \lang{A_{rhs}^r}$, the tool provides a % negative
%answer, meaning that the entailment \emph{does not} hold in % case the given
%systems are local or that it \emph{needs not} hold in % case a~non-local system
%appears at the input (about which the tool % warns the user).
%
Table~\ref{TaExpRes} lists the entailment queries on which we tried
out our tool. The upper part of the table contains local inductive
systems whereas the bottom part contains non-local systems. Apart from
the $\DLL$ and $\TLL$ predicates from
Sect.~\ref{sec:inductive-definitions}, the considered entailment
queries contain the following predicates \ifLongVersion\else(defined
in Appendix~\ref{app:indDefForExp})\fi: $\DLL_{rev}$ that encodes a
DLL from the end, $\TREE_{pp}$ encoding trees with parent pointers,
$\TREE^{rev}_{pp}$ that encodes trees with parent pointers starting
from an arbitrary leaf, $\TLL_{pp}$ encoding TLLs with parent
pointers, and $\TLL^{rev}_{pp}$ which encodes TLLs with parent
pointers starting from their left most leaf.
\ifLongVersion 
These predicates are defined in Fig.~\ref{fig:indDefForExp}. 
\fi 
Columns $|A_{lhs}|$, $|A_{rhs}|$, and $|A^r_{rhs}|$ of Table~\ref{TaExpRes}
provide information about the number of states and transitions of the respective
automata. The tool answered all queries correctly (despite the incompleteness
for non-local systems), and the running times were all under 1 sec. on a
standard PC (Intel Core2 CPU, 3GHz, 4GB RAM).

\begin{table}[t]
\begin{center}

  \caption{Experimental results. The upper table contains local
    systems, while the lower table non-local ones. Sizes of initial TA
    (col. 3,4) and rotated TA (col. 5) are in numbers of
    states/transitions }\vspace*{-2mm}

{\fontsize{8}{9}\selectfont
\begin{tabular}{||c||c||c|c|c||}

  \hline

  Entailment $LHS\models RHS$ & Answer &$|A_{lhs}|$ & $|A_{rhs}|$ &
  $|A^r_{rhs}|$ \\

  \hline \hline
  
  $\DLL(a,\nil,c,\nil) \models \DLL_{rev}(a,\nil,c,\nil)$ & True & 2/4 & 2/4 & 5/8
  \\
  
  $\DLL_{rev}(a,\nil,c,\nil) \models \DLL(a,\nil,c,\nil)$ & True & 2/4 & 2/4 &
  5/8\\ 
  
  $\DLL_{mid}(a,\nil,c,\nil) \models \DLL(a,\nil,c,\nil)$ & True & 4/8 & 2/4 &
  5/8\\ 

  $\DLL(a,\nil,c,\nil) \models \DLL_{mid}(a,\nil,c,\nil)$ & True & 2/4 & 4/8 &
  12/18\\ 

  $ \exists x,n,b.~ x\mapsto (n,b) * \DLL_{rev}(a,\nil,b,x) *
  \DLL(n,x,c,\nil)\models \DLL(a,\nil,c,\nil)$ & True & 3/5 & 2/4 & 5/8 \\
  
  $ \DLL(a,\nil,c,\nil) \models \exists x,n,b.~ x\mapsto (n,b) *
  \DLL_{rev}(a,\nil,b,x) * \DLL(n,x,c,\nil)$ & False & 2/4 & 3/5 & 9/13\\
  
  $\exists y,a.~ x\mapsto (y,\nil) * y\mapsto(a,x) * \DLL(a,y,c,\nil) \models
  \DLL(x,\nil,c,\nil)$ & True & 3/4 & 2/4 & 5/8 \\ 
  
  $\DLL(x,\nil,c,\nil)\models \exists y,a.~ x\mapsto (\nil,y) * y\mapsto(a,x)*
  \DLL(a,y,c,\nil)$ & False & 2/4 & 3/4 & 8/10 \\
  
  $\TREE_{pp}(a,\nil) \models \TREE^{rev}_{pp}(a,\nil)$ & True & 2/4 & 3/8 &
  6/11 \\
  
  $\TREE^{rev}_{pp}(a,\nil) \models \TREE_{pp}(a,\nil)$ & True & 3/8 & 2/4 &
  5/10 \\
  
  \hline \hline

  $\TLL_{pp}(a,\nil,c,\nil) \models \TLL^{rev}_{pp}(a,\nil,c,\nil)$ & True & 4/8 &
  5/10 & 16/26 \\

  $\TLL^{rev}_{pp}(a,\nil,c,\nil) \models \TLL_{pp}(a,\nil,c,\nil)$ & True & 5/10 &
  4/8 & 13/22\\
  
  $ \exists l,r,z.~ a\mapsto(l,r,\nil,\nil) * \TLL (l,c,z)* \TLL(r,z,\nil) \models
  \TLL(a,c,\nil)$ & True & 4/7 & 4/8 & 13/22 \\
  
  $ \TLL(a,c,\nil) \models \exists l,r,z.~ a\mapsto(l,r,\nil,\nil) * \TLL (l,c,z)*
  \TLL(r,z,\nil)$ & False & 4/8 & 4/7 & 13/21\\

  \hline

\end{tabular}
}

  \label{TaExpRes}

\vspace*{-6mm}

\end{center}
\end{table}

\ifLongVersion
\begin{figure}[t]
\[\begin{array}{rcl}

  \DLL_{rev}(hd,p,tl,n) & \equiv & hd\mapsto (n,p) ~\wedge~ hd=tl
  ~|~ \exists x.~tl\mapsto (n,x) * \DLL_{rev}(hd,p,x,tl) \\

  \DLL_{mid}(hd,p,tl,n) & \equiv & hd\mapsto (n,p) ~\wedge~ hd=tl 
  ~|~ hd\mapsto(tl,p) * tl\mapsto(n,hd) \\
  &|& \exists x,y,z.~x\mapsto (y,z) * \DLL(y,x,tl,n) * \DLL_{rev}(hd,p,z,x) \\

  \TREE_{pp}(x,b) & \equiv & x \mapsto (nil,nil,b)
  ~|~ \exists l,r.~x\mapsto (l,r,b) * \TREE_{pp}(l,x) * \TREE_{pp}(r,x) \\

  \TREE_{pp}^{rev}(t,b) & \equiv &  t\mapsto (nil,nil,b)
  ~|~ \exists x,up. x \mapsto (nil,nil,up) * \TREE^{aux}_{pp}(t,b,up,x)\\

  \TREE^{aux}_{pp}(t,b,x,d) & \equiv & \exists r. x\mapsto (d,r,b) 
    * \TREE_{pp}(r,x) ~\wedge~ x=t \\
  & | & \exists l. x\mapsto (l,d,b) * \TREE_{pp}(l,x) ~\wedge~ x=t \\
  & | & \exists r,up. x\mapsto (d,r,up) * \TREE^{aux}_{pp}(t,b,up,x) 
    * \TREE_{pp}(r,x) \\
  & | & \exists l,up. x\mapsto (l,d,up) * \TREE^{aux}_{pp}(t,b,up,x) 
    * \TREE_{pp}(l,x) \\

  \TLL_{pp}(r,p,ll,lr) & \equiv & r \mapsto (nil,nil,p,lr) ~\wedge~ r = ll \\ 
  & | & \exists x,y,z.~ r \mapsto (x,y,p,nil) * \TLL_{pp}(x,r,ll,z) *
  \TLL_{pp}(y,r,z,lr)\\

  \TLL^{rev}_{pp}(t,p,ll,lr)& \equiv & ll\mapsto(nil,nil,p,lr) ~\wedge~ ll=t \\
  & | & \exists up,z. ll\mapsto(nil,nil,up,z) * \TLL^{aux}_{pp}(t,p,up,ll,z,lr)
  \\

  \TLL^{aux}_{pp}(t,p,x,d,z,lr)& \equiv & \exists r. x\mapsto (d,r,p,nil) *
  \TLL_{pp}(r,x,z,lr) ~\wedge~ x=t\\
  & | & \exists r,up,q. x\mapsto (d,r,up,nil) * \TLL^{aux}_{pp}(t,p,up,x,q,lr) *
  \TLL_{pp}(r,x,z,q) \\

\end{array}\]

  \vspace*{-4mm}
 
  \caption{Inductive definitions of predicates used in experiments.}

  \vspace*{-2mm}
 
  \label{fig:indDefForExp}

\end{figure}

For each experiment, we first replaced each left/right-hand side of an
entailment query containing a points-to predicate by a new top-level
predicate and then performed the parameter elimination. We have not
yet implemented this step, but we tightly followed
Alg.~\ref{alg:eliminate}. Subsequently, we have applied our
tool. Column \emph{Answer} of Table~\ref{TaExpRes} shows the results
given by the tool. The answers provided by the tool are all correct
despite the incompleteness of the approach for non-local systems. The
experiments show that our approach is indeed capable of handling quite
complex predicates, including the possibility of encoding the same
data structures starting from different entry points. Moreover, the
approach is also rather efficient: columns $|A_{lhs}|$, $|A_{rhs}|$,
and $|A^r_{rhs}|$ provide information about the number of states and
transitions of the respective automata. Note that no reduction
technique was applied to reduce the size of $A^r_{rhs}$. 
\fi

%!!!!!!!!!!!!!!!!!!!!!!!!!!!!!!!!!!!!!!!!!!!!!!!!
\enlargethispage{4mm}
%!!!!!!!!!!!!!!!!!!!!!!!!!!!!!!!!!!!!!!!!!!!!!!!!

%%%%%%%%%%%%%%%%%%%%%%%%%%%%%%%%%%%%%%%%%%%%%%%%%%%%%%%%%%%%%%%%%%%%%%%%%%%%%%%
\vspace*{-3mm}\section{Conclusion}\vspace*{-2.5mm}
%%%%%%%%%%%%%%%%%%%%%%%%%%%%%%%%%%%%%%%%%%%%%%%%%%%%%%%%%%%%%%%%%%%%%%%%%%%%%%%

We have presented a novel automata-theoretic decision procedure for the
entailment problem of a non-trivial subset of SL with inductive predicates. Our
reduction to TA can deal with the problem that the same recursive structure may
be represented differently when viewed from different entry points. To deal with
this problem, our procedure uses a special closure operation which closes a
given TA representation with respect to all rotations of its spanning trees. Our
procedure is sound and complete for recursive structures in which all edges are
local with respect to a spanning tree. For this case, using our reduction, we
show that the entailment problem is EXPTIME-complete. For structures outside of
this class, our procedures is incomplete but still sound. We have implemented
our approach in a prototype tool which we tested through a number of non-trivial
experiments. In the experiments, the incompleteness of our approach has never
showed up. Moreover, the experiments also showed that our approach is quite
efficient. In the future, we plan to improve the implementation, extend the
experiments, as well as to integrate our decision procedure into some
verification tool dealing with programs with dynamic linked structures.

%%%%%%%%%%%%%%%%%%%%%%%%%%%%%%%%%%%%%%%%%%%%%%%%%%%%%%%%%%%%%%%%%%%%%%%%%%%%%%%

\bibliographystyle{splncs03}

\begin{thebibliography}{10}
\providecommand{\url}[1]{\texttt{#1}}
\providecommand{\urlprefix}{URL }

\bibitem{fossacs14}
Antonopoulos, T., Gorogiannis, N., Haase, C., Kanovich, M., Ouaknine, J.:
  Foundations for decision problems in separation logic with general inductive
  predicates. In: To appear in Proc. of FOSSACS'14 (2014)

\bibitem{barhillel-perles-shamir61}
Bar-Hillel, Y., Perles, M., Shamir, E.: On formal properties of simple phrase
  structure grammars. Zeitschrift für Phonetik, Sprachwissenschaft und
  Kommunikationsforschung  14(2),  143--172 (1961)

\bibitem{spaceinvader}
Berdine, J., Calcagno, C., Cook, B., Distefano, D., O'Hearn, P., Wies, T.,
  Yang, H.: Shape analysis for composite data structures. In: Proc. CAV'07.
  LNCS, vol. 4590. Springer (2007)

\bibitem{berdine-calcagno-ohearn04}
Berdine, J., Calcagno, C., O'Hearn, P.W.: A decidable fragment of separation
  logic. In: Proc. of FSTTCS'04. LNCS, vol. 3328. Springer (2004)

\bibitem{anti-chains}
Bouajjani, A., Habermehl, P., Holik, L., Touili, T., Vojnar, T.:
  Antichain-based universality and inclusion testing over nondeterministic
  finite tree automata. In: Proc. of CIAA. LNCS, vol. 5148. Springer (2008)

\bibitem{Brotherston:2010}
Brotherston, J., Kanovich, M.: Undecidability of propositional separation logic
  and its neighbours. In: Proceedings of the 2010 25th Annual IEEE Symposium on
  Logic in Computer Science. pp. 130--139. LICS '10 (2010)

\bibitem{infer}
Calcagno, C., Distefano, D.: Infer: An automatic program verifier for memory
  safety of c programs. In: Proc. of NASA Formal Methods'11. LNCS, vol. 6617.
  Springer (2011)

\bibitem{tata05}
Comon, H., Dauchet, M., Gilleron, R., Jacquemard, F., Lugiez, D., Tison, S.,
  Tommasi, M.: {Tree Automata Techniques and Applications} (2005), \\URL:
  {\ttfamily http://www.grappa.univ-lille3.fr/tata}

\bibitem{cook-haase-ouaknine-parkinson-worell11}
Cook, B., Haase, C., Ouaknine, J., Parkinson, M.J., Worrell, J.: Tractable
  reasoning in a fragment of separation logic. In: Proc. of CONCUR'11. LNCS,
  vol. 6901. Springer (2011)

\bibitem{predator}
Dudka, K., Peringer, P., Vojnar, T.: Predator: A practical tool for checking
  manipulation of dynamic data structures using separation logic. In: Proc. of
  CAV'11. LNCS, vol. 6806. Springer (2011)

\bibitem{enea13}
Enea, C., Saveluc, V., Sighireanu, M.: Compositional invariant checking for
  overlaid and nested linked lists. In: Proc. of ESOP'13. pp. 129--148 (2013)

\bibitem{slide}
Iosif, R., Rogalewicz, A., Vojnar, T.: Slide: Separation logic with inductive
  definitions, \\URL: {\ttfamily
  http://www.fit.vutbr.cz/research/groups/verifit/tools/slide/}

\bibitem{Iosif13}
Iosif, R., Rogalewicz, A., Simacek, J.: The tree width of separation logic with
  recursive definitions. In: Proc. of CADE-24. LNCS, vol. 7898. Springer (2013)

\bibitem{libvata}
Lengal, O., Simacek, J., Vojnar, T.: Vata: a tree automata library, \\URL:
  {\ttfamily http://www.fit.vutbr.cz/research/groups/verifit/tools/libvata/}

\bibitem{sleek}
Nguyen, H.H., Chin, W.N.: Enhancing program verification with lemmas. In: Proc
  of CAV'08. LNCS, vol. 5123. Springer (2008)

\bibitem{Wies13}
Piskac, R., Wies, T., Zufferey, D.: Automating separation logic using smt. In:
  Proc. of CAV'13. LNCS, vol. 8044 (2013)

\bibitem{reynolds02}
Reynolds, J.: {Separation Logic: A Logic for Shared Mutable Data Structures}.
  In: Proc. of LICS'02. IEEE CS Press (2002)

\end{thebibliography}

\ifLongVersion
\else
% \vfill\eject
\appendix

%%%%%%%%%%%%%%%%%%%%%%%%%%%%%%%%%%%%%%%%%%%%%%%%%%%%%%%%%%%%%%%%%%%%%%%%%%%%%%%
\section{Missing Lemmas and Proofs}\label{app:trees}
%%%%%%%%%%%%%%%%%%%%%%%%%%%%%%%%%%%%%%%%%%%%%%%%%%%%%%%%%%%%%%%%%%%%%%%%%%%%%%%

In this appendix, we present proofs of lemmas and theorems that were presented
in the paper. Apart from that, some technical propositions and lemmas needed for
the proofs are also added.

\begin{proposition}\label{Rotation:equivalence} The relation $\sim$ is an
equivalence relation.\end{proposition}

\begin{proof} 
The relation $\sim$ is clearly reflexive as one can choose $r$ as the
identity function. To prove that $\sim$ is transitive, let $t_1, t_2,
t_3 : \nat^* \rightharpoonup_{fin} \Sigma$ be trees such that $t_1
\sim_{r_1} t_2$ and $t_2 \sim_{r_2} t_3$, and let $r_1 : dom(t_1)
\rightarrow dom(t_2)$ and $r_2 : dom(t_2) \rightarrow dom(t_3)$ be the
bijective functions from Def.  \ref{Rotation}, respectively. Then, for
all $p \in dom(t_1)$ and $d \in \mathcal{D}_+(t_1)$ such that $p.d \in
dom(t_1)$, there exists $e_1 \in \mathcal{D}(t_2)$ such that
$r_1(p.d)=r_1(p).e_1$. We distinguish two cases:
\begin{enumerate}

  \item If $e_1 \in \mathcal{D}_+(t_2)$, then there exists $e_2 \in
  \mathcal{D}(t_3)$ such that $(r_2 \circ r_1)(p.d) = r_2(r_1(p).e_1)
  = (r_2 \circ r_1)(p).e_2$.  

  \item Otherwise, if $e_1 = -1$, then $r_1(p) = r_1(p.d).e'_1$, for some $e'_1
  \in \mathcal{D}_+(t_2)$. Then there exists $e_2 \in \mathcal{D}(t_3)$ such
  that $(r_2 \circ r_1)(p) = r_2(r_1(p.d).e'_1) = (r_2 \circ r_1)(p.d).e_2$, and
  consequently, $(r_2 \circ r_1)(p.d) = (r_2 \circ r_1)(p).(-1)$.

\end{enumerate}

To show that $\sim$ is symmetric, let $t_1 \sim_r t_2$ be two trees, and let $p
\in dom(t_2)$ and $d \in \mathcal{D}_+(t_2)$ such that $p.d \in dom(t_2)$. We
prove that there exists $e \in \mathcal{D}(t_1)$ such that $r^{-1}(p.d) =
r^{-1}(p).e$, where $r$ is the bijective function from Def. \ref{Rotation}. By
contradiction, suppose that there is no such direction, and since $r^{-1}(p),
r^{-1}(p.d) \in dom(t_1)$, there exists a path $r^{-1}(p) = p_1, \ldots, p_i,
\ldots, p_k = r^{-1}(p.d)$ of length $k > 2$ in $t_1$ such that:\begin{itemize}

  \item $p_{j+1} = p_j.(-1)$, for all $1 \leq j < i$,

  \item $p_{j+1} = p_j.d_j$, for all $i \leq j < k$ and $d_i,\ldots,d_{k-1} \in
  \mathcal{D}_+(t_1)$

\end{itemize} where $p_1, \ldots, p_k$ are distinct positions in $dom(t_1)$.
Then $r(p_1), \ldots, r(p_k)$ are distinct positions in $dom(t_2)$ such
that:\begin{itemize}

  \item $r(p_j) = r(p_{j+1}).e_j$, for all $1 \leq j < i$ and some $e_j \in
  \mathcal{D}(t_2)$,

  \item $r(p_{j+1}) = r(p_j).e_j$, for all $i \leq j < k$ and some $e_j \in 
  \mathcal{D}(t_2)$.

\end{itemize} So there exists a path $p=r(p_1), \ldots, r(p_k)=p.d$ of length $k
> 2$ in $t_2$, which contradicts with the fact that $t_2$ is a tree.
\qed\end{proof}

\begin{lemma}\label{rotation-reversion} Let $t,u : \nat^* \rightharpoonup_{fin}
\Sigma$ be two trees, $r : dom(t) \rightarrow dom(u)$ be a bijective function
such that $t \sim_r u$, and $p \in dom(t)$ be a position such that $r(p) =
\epsilon$ is the root of $u$. Then, for all $q \in dom(t)$ and all $0 \leq d <
\#_t(q)$, $r(q.d) = r(q).(-1)$ iff $q.d$ is a prefix of $p$.\end{lemma}

\begin{proof}

``$\Rightarrow$'' Let $q \in dom(t)$ be an arbitrary position such that $r(q.d)
= r(q).(-1)$, for some $0 \leq d < \#_t(q)$. Suppose, by contradiction, that
$q.d$ is not a prefix of $p$. There are two cases:\begin{enumerate}

  \item $p$ is a strict prefix of $q.d$, i.e. there exists a sequence
    $p = p_0, \ldots, p_k = q$, for some $k > 0$, such that, for all
    $0 \leq i < k$, there exists $0 \leq j_i < \#_t(p_i)$ such that
    $p_{i+1} = p_i.j_i$. Then, there exists $\ell \in \mathcal{D}(u)$
    such that $r(q) = r(p_{k-1}.j_{k-1}) = r(p_{k-1}).\ell$. If $\ell
    \geq 0$, we have $r(p_{k-1}) = r(q).(-1) = r(q.d) =
    r(p_{k-1}.j_{k-1}.d)$. Since both $j_{k-1}, d \geq 0$, $p_{k-1}$
    and $p_{k-1}.j_{k-1}.d$ are distinct nodes from $dom(t)$, and we
    reach a contradiction with the fact that $r$ is bijective. Hence,
    $\ell = -1$ is the only possibility. Applying the same argument
    inductively on $p_0, \ldots, p_k$, we find that $r(p)$ is either
    equal or is a descendant of $r(q)$, which contradicts with the
    fact that $r(p)=\epsilon$ is the root of $u$.
    
  \item $p$ is not a prefix of $q.d$, and there exists a common prefix
    $p_0$ of both $p$ and $q.d$. Let $p_0$ be the maximal such prefix.
    Since $p_0$ is a prefix of $q.d$, there exists a sequence $p_0,
    \ldots, p_k = q$, for some $k > 0$, such that, for all $0 \leq i <
    k$ there exists $0 \leq j_i < \#_t(p_i)$ such that $p_{i+1} =
    p_i.j_i$. By the argument of the previous case, we have that $r(q)
    = r(p_k), r(p_{k-1}), \ldots, r(p_0)$ is a strictly descending
    path in $u$, i.e.\ $r(p_i)$ is a child of $r(p_{i+1})$, for all $0
    \leq i < k$. Since $p_0$ is a prefix of $p$, there exists another
    non-trivial sequence $p_0 = p'_0, \ldots, p'_m = p$, for some $m >
    0$, such that, for all $0 \leq i < m$ there exists $0 \leq j'_i <
    \#_t(p'_i)$ and $p'_{i+1} = p'_i.j'_i$. Moreover, since $p_0$ is
    the maximal prefix of $p$ and $q.d$, we have $\{p_1, \ldots, p_k\}
    \cap \{p'_1, \ldots, p'_m \} = \emptyset$. Then $r(p'_1) =
    r(p_0.j'_0) = r(p_0).\ell$, for some $\ell \in \mathcal{D}(u)$. If
    $\ell = -1$, then necessarily $r(p'_1) = r(p_{k-1})$, which
    contradicts with the fact that $r$ is a bijection, since $p'_1
    \neq p_{k-1}$. Then the only possibility is $\ell \geq 0$. By
    applying the same argument inductively on $p'_0, \ldots, p'_m =
    p$, we obtain that $r(p)$ is a strict descendant of $r(p_0)$,
    which contradicts with $r(p)=\epsilon$ being the root of $u$.

\end{enumerate} Since both cases above lead to contradictions, the only
possibility is that $q.d$ is a prefix of $p$. 

\noindent ``$\Leftarrow$'' If $q.d$ is a prefix of $p$, there exists a
non-trivial sequence $q=q_0, q.d=q_1, \ldots, q_k=p$, for some $k >
0$, such that, for all $0 \leq i < k$, there exists $0 \leq j_i <
\#_t(q_i)$ such that $q_{i+1} = q_i.j_i$. Then, $\epsilon = r(p) =
r(q_k) = r(q_{k-1}.j_{k-1}) = r(q_{k-1}).(-1)$. Reasoning inductively,
we obtain that, for all $0 < i \leq k$, $r(q_i) = r(q_{i-1}).(-1)$,
hence $r(q.d) = r(q_1) = r(q_0).(-1) = r(q).(-1)$.  \qed\end{proof}

\noindent {\bf Lemma} \ref{spanning-tree-rotation}. 
\begin{proof} Let $p \in dom(t)$ be an arbitrary position and $d \in
\mathcal{D}_+(t)$ be an arbitrary non-negative direction such that $p.d \in
dom(t)$. If $t$ is a spanning tree, then $t$ is bijective, and there exists a
selector $s \in Sel$ such that $t(p) \arrow{s}{} t(p.d)$ is an edge in $S$.
Since $t'$ is a spanning tree, it is also bijective, hence there exist $p',p''
\in dom(t')$ such that $t'(p')=t(p)$ and $t'(p'') = t(p.d)$. 

By contradition, suppose that $p'' \neq p'.e$ for all $e \in \mathcal{D}(t')$.
Since $p',p'' \in dom(t')$, there exists a path $p' = p_1, \ldots, p_k$ such
that $k > 2$, and, for all $1 \leq i < k$, we have $p_{i+1} = p_i.e_i$, for some
$e_1, \ldots, e_{k-1} \in \mathcal{D}(t')$. Because $t'$ is a spanning tree of
$S$, there exist selectors $s_1, \ldots, s_{k-1} \in Sel$ such that one of the
following holds, for all $1 \leq i < k$:\begin{itemize}

  \item $t'(p_i) \arrow{s_i}{} t'(p_{i+1})$ or

  \item $t'(p_{i+1}) \arrow{s_i}{} t'(p_i)$.

\end{itemize} Since all the above edges of $S$ are local wrt. $t$, there exists
a path $p = (t^{-1} \circ t')(p_1), \ldots,$ $(t^{-1} \circ t')(p_k) = p.d$ of
pairwise distinct positions in $t$, for $k > 2$, which contradicts with the fact
that $t$ is a tree. In conclusion, for all $p \in dom(t)$ and $d \in
\mathcal{D}_+(t)$, such that $p.d \in dom(t)$ there exists $e \in
\mathcal{D}(t')$ such that $(t'^{-1} \circ t)(p.d) = (t'^{-1} \circ t)(p).e$,
and since $t'^{-1} \circ t$ is a bijective mapping, we have $t \sim_{t'^{-1}
\circ t} t'$.

Finally, we are left with proving that all edges are local with
respect to $t'$. Let $\ell \arrow{s}{} \ell'$ be an arbitrary edge of
$S$, for some $s \in Sel$. The edge is local with respect to $t$, thus
there exists $p \in dom(t)$ and $d \in \mathcal{D}(t) \cup
\{\epsilon\}$ such that $t(p) = \ell$ and $t(p.d) = \ell'$. We
distinguish three cases:\begin{enumerate}

  \item If $d = \epsilon$, then $p = p.d$, and trivially $\ell =
  t'((t'^{-1} \circ t)(p)) = t'((t'^{-1} \circ t)(p.d)) = t'((t'^{-1}
  \circ t)(p).d) = \ell'$. 

  \item Otherwise, if $d \in \mathcal{D}_+(t)$, taking into account that $t
  \sim_{t'^{-1} \circ t} t'$, by the first part of this lemma, we have
  $(t'^{-1}\circ t)(p.d) = (t'^{-1}\circ t)(p).e$, for some $e \in
  \mathcal{D}(t')$. But $t'((t'^{-1}\circ t)(p)) = t(p) = \ell$ and
  $t'((t'^{-1}\circ t)(p.d)) = t(p.d) = \ell'$, so $\ell \arrow{s}{} \ell'$ is
  local wrt. $t'$.

  \item Otherwise, if $d = -1$, there exists $q \in dom(t)$ and $d' \in
  \mathcal{D}_+(t)$ such that $p = q.d'$. Since $t \sim_{t'^{-1} \circ
    t} t'$, by the first part of this lemma, we have $(t'^{-1}\circ
  t)(q.d') = (t'^{-1}\circ t)(q).e$, for some $e \in
  \mathcal{D}(t')$. But $t'(t'^{-1}\circ t)(q.d')=t(p)=\ell$ and
  $t'((t'^{-1}\circ t)(q))=t(q)=t(p.d)=\ell'$, so $\ell \arrow{s}{}
  \ell'$ is local with respect to $t'$.

\end{enumerate}
\qed\end{proof} 

For a state $S = \langle s, h \rangle$ and a location $\ell \in
dom(h)$, the {\em neighbourhood of $\ell$ in $S$} is a state denoted
as $S_{\langle\ell\rangle} = \langle s_\ell, h_\ell \rangle$, where:
\begin{itemize}
\item $h_\ell = \{\langle \ell, \lambda k ~.~ \mbox{if}~ \ell
  \arrow{k}{S} \ell' ~\mbox{then}~ \ell' ~\mbox{else}~ \bot \rangle\}$
\item $s_\ell(x) = ~\mbox{if}~ s(x) \in dom(h_\ell) \cup img(h_\ell)
  ~\mbox{then}~ s(x) ~\mbox{else}~ \bot$
\end{itemize}
Intuitively, the neighbourhood of an allocated location $\ell$ is the
state in which only $\ell$ is allocated and all other locations
$\ell'$ for which there is an edge $\ell \arrow{k}{S} \ell'$ are
dangling.

Given a canonical tree $t : \nat^* \rightharpoonup_{fin}
\mathcal{T}^c$, and a state $S = \langle s, h \rangle$, let $u :
dom(t) \rightarrow dom(h)$ be an arbitrary tree labeled with allocated
locations from $S$. For each position $p \in dom(t)$ its {\em explicit
  neighbourhood} with respect to $t$ and $u$ is the state $S_{\langle
  t, u, p \rangle} = \langle s_p, h_p \rangle$ defined as
follows:\begin{itemize}

  \item $h_p = \overline{h}$, and
  
  \item \begin{minipage}{11.5cm}
  \vspace*{-3mm}
  \[s_p(x) = \left\{\begin{array}{ll}
  u(p) & \mbox{if $x \in \port_{-1}^{fw}(t(p)) \cup
    \port_0^{bw}(t(p)) \cup \ldots \cup \port_{\#_t(p)-1}^{bw}(t(p))$} \\ 
  u(p.i) & \mbox{if $x \in \port_i^{fw}(t(p))$, for some $0 \leq
    i < \#_t(p)$} \\ 
  u(p.(-1)) & \mbox{if $p \neq \epsilon$ and $x \in
    \port_{-1}^{bw}(t(p))$} \\ 
  \overline{s}(x) & \mbox{otherwise}
  \end{array}\right.\]
  \end{minipage} \\
  where $S_{\langle u(p) \rangle} = \langle \overline{s}, \overline{h}
  \rangle$ is the neighbourhood of the location $u(p)$ in $S$.

\end{itemize}

\begin{lemma}\label{canonical-model}
Let $t : \nat^* \rightharpoonup_{fin} \mathcal{T}^c$ be a canonical
tree, and let $S = \langle s, h \rangle$ be a state. Then $S \models
\Phi(t)$ iff there exists a local spanning tree $u : dom(t)
\rightarrow dom(h)$ such that, for all $p \in dom(t)$:
\begin{enumerate}
\item\label{cm:1} $\len{\port_i^{fw}(t(p))} = \card{\{k \in Sel ~|~ u(p) 
  \arrow{k}{S} u(p.i)\}}$, for all $0 \leq i < \#_t(p)$.
\item\label{cm:2} $\len{\port_{-1}^{bw}(t(p))} = \card{\{k \in Sel ~|~
  u(p) \arrow{k}{S} u(p.(-1))\}}$ if $p \neq \epsilon$.
\item\label{cm:3} $S_{\langle t, u, p \rangle} \models \form(t(p))$.
\end{enumerate}
\end{lemma}
\begin{proof}``$\Rightarrow$'' By induction on the structure of $t$. For
  the base case $\#_t(\epsilon)=0$, i.e.\ $dom(t)=\{\epsilon\}$, we
  have $t(\epsilon) = \langle \form(t(\epsilon),
  \port_{-1}(t(\epsilon)) \rangle$ and therefore $S \models \Phi(t)$
  if and only if $S \models \form(t(\epsilon))$. Since $t(\epsilon)$
  is a canonical tile, only one location is allocated in $dom(h)$,
  i.e.\ $dom(h) = \{\ell\}$, for some $\ell \in Loc$. We define $u =
  \{(\epsilon, \ell)\}$.  It is immediate to check that $u$ is a local
  spanning tree of $S$, and that $S_{\langle t, u, \epsilon \rangle} =
  S \models \form(t(\epsilon))$ -- points (\ref{cm:1}) and
  (\ref{cm:2}) are vacuously true. 

  For the induction step $\#_t(\epsilon) = d > 0$, we have:
  $$\Phi(t,\epsilon) = t(\epsilon)^\epsilon \circledast_0 \Phi(t,0)
  \ldots \circledast_{d-1} \Phi(t,d-1)$$ Since $S \models
  \Phi(t,\epsilon)$, there exist states $S_i = \langle s_i, h_i
  \rangle$, for each $-1 \leq i < d$, such that $S_{-1} \models
  t(\epsilon)^\epsilon$ and $S_i \models \Phi(t,i)$, for all $0 \leq i
  < d$, and moreover, $S = S_{-1} \uplus S_0 \uplus \ldots \uplus
  S_{d-1}$. By the induction hypothesis, for each $0 \leq i < d$ there
  exists a local spanning tree $u_i : dom(\subtree{t}{i}) \rightarrow
  dom(h_i)$, meeting conditions (\ref{cm:1}), (\ref{cm:2}) and
  (\ref{cm:3}). Since $S_{-1} \models t(\epsilon)^\epsilon$, it must
  be that $dom(h_{-1}) = \{\ell\}$ for some location $\ell \in
  Loc$. We define $u$ as follows:
  \[\begin{array}{rcl}
  u(\epsilon) & = & \ell \\
  u(i.q) & = & u_i(q) ~\mbox{for all $q \in dom(u_i)$}
  \end{array}\]
  To prove that $u$ is a spanning tree of $S$, let $p \in dom(u)$ be a
  position, and $0 \leq i < \#_u(p)$ be a direction. We distinguish
  two cases:
  \begin{itemize}
  \item if $p = \epsilon$, then $\port_i^{fw}(t(p)) \neq \emptyset$,
    hence there exists an edge $\ell \arrow{k}{S_{-1}}
    u_i(\epsilon)$. Then $u(p) \arrow{k}{S} u(p.i)$ as well.
  \item if $p = j.q$, and $q \in dom(u_j)$, for some $0 \leq j < d$,
    by the induction hypothesis, there exists $k \in Sel$, such that
    $u(p) = u_j(q) \arrow{k}{S_j} u_j(q.i) = u(p.i)$, i.e.\ $u(p)
    \arrow{k}{S} u(p.i)$.
  \end{itemize}
  To prove that $u$ is a local spanning tree of $S$, let $\kappa
  \arrow{k}{S} \kappa'$ be an edge, for some $\kappa, \kappa' \in
  dom(h)$. Since $dom(h)$ is a disjoint union of $dom(h_{-1}) =
  \{\ell\}, dom(h_0), \ldots, dom(h_{d-1})$, we distinguish several
  cases:
  \begin{itemize}
  \item if $\kappa \in dom(h_{-1})$ and $\kappa' \in dom(h_i)$, for
    some $0 \leq i < d$, then $u(p)=\kappa=\ell$ and $u(i) =
    u_i(\epsilon) = \kappa'$ is the only possibility -- due to the
    strict semantics of SL, it is not possible to define an edge
    between $\ell$ and an location $\kappa' = u_i(q)$, unless $q =
    \epsilon$.
  \item if $\kappa, \kappa' \in dom(h_i)$, for some $0 \leq i < d$,
    then, by the induction hypothesis, there exists $q \in dom(u_i)$
    and $0 \leq j < \#_{u_i}(q)$ such that $\kappa = u_i(q) = u(i.q)$
    and $\kappa' = u_i(q.j) = u(i.q.j)$.
  \item if $\kappa \in dom(h_i)$ and $\kappa' \in dom(h_j)$, for some
    $0 \leq i,j < d$, then we reach a contradiction with the strict
    semantics of SL -- since there is no equality between output ports
    in $t(\epsilon)$, it is not possible to define an edge between a
    location from $dom(h_i)$ and $dom(h_j)$,
  \end{itemize}
  The proofs of points (\ref{cm:1}), (\ref{cm:2}) and (\ref{cm:3}) are
  by the strict semantics of SL. 

  \noindent ``$\Leftarrow$'' By induction on the structure of $t$. In
  the base case $\#_t(\epsilon) = 0$, i.e.\ $dom(t) = \{\epsilon\}$,
  then $t(\epsilon) = \langle \form(t(\epsilon),
  \port_{-1}(t(\epsilon)) \rangle$. We have $S = S_{\langle
    t,u,\epsilon \rangle} \models \form(t(\epsilon))$, by point
  (\ref{cm:3}), hence $S \models \Phi(t)$.

  For the induction step $\#_t(\epsilon) = d > 0$, let $u_i =
  \subtree{u}{i}$, for all $0 \leq i < d$. Since $u$ is a bijective
  function, we have $img(u_i) \cap img(u_j) = \emptyset$, for all $0
  \leq i < j < d$. For all $-1 \leq i < d$, we define $S_i = \langle
  s_i, h_i \rangle$, where:
  \[\begin{array}{rclr}
  h_i(\ell) & = & \left\{\begin{array}{ll}
  h(\ell) & \mbox{if $\ell \in img(u_i)$} \\
  \bot & \mbox{otherwise}
  \end{array}\right. & ~\mbox{for all $\ell \in Loc$, if $i \geq 0$} \\
  h_{-1}(\ell) & = & \left\{\begin{array}{ll}
  h(\ell) & \mbox{if $u(\epsilon) = \ell$} \\
  \bot & \mbox{otherwise}
  \end{array}\right. & ~\mbox{for all $\ell \in Loc$} \\
  s_i(x) & = & \left\{\begin{array}{ll}
  s(x) & \mbox{if $s(x) \in Img(h_i)$} \\
  \bot & \mbox{otherwise}
  \end{array}\right. & ~\mbox{for all $x \in Var$}
  \end{array}\]
  It is not hard to show that $S = S_{-1} \uplus S_0 \uplus \ldots
  S_{d-1}$. Since $u$ is a local spanning tree of $S$, $u_i$ is a
  local spanning tree for $S_i$, for all $0 \leq i < d$, and
  conditions (\ref{cm:1}), (\ref{cm:2}) and (\ref{cm:3}) hold for
  $S_i$ and $u_i$, respectively. By the induction hypothesis, we have
  $S_i \models \Phi(\subtree{t}{i}, \epsilon)$. By points
  (\ref{cm:1}), (\ref{cm:2}) and (\ref{cm:3}) moreover, we have that
  $S_{-1} \models \form(t(\epsilon))$. Hence $S \models \Phi(t)$. 
\qed\end{proof}

\noindent {\bf Lemma} \ref{canonical-rotation-lemma}. 

\begin{proof}If $S \models \Phi(t)$, there exists a local spanning tree $t_s :
dom(t) \rightarrow dom(h)$ of $S$, that meets the three conditions of Lemma
\ref{canonical-model}. 

\noindent``$\Rightarrow$'' If $S \models \Phi(u)$ there exists a local
spanning tree $u_s : dom(u) \rightarrow dom(h)$ of $S$, that meets the
three conditions of Lemma \ref{canonical-model}. Since $t_s$ and $u_s$
are spanning trees of $S$, by Lemma \ref{spanning-tree-rotation}, we
have $t_s \sim_{u_s^{-1} \circ t_s} u_s$. Let $r = u_s^{-1} \circ t_s$
from now on. Clearly, for any $p \in dom(t)$, we have $t_s(p) =
u_s(r(p))$. Since $dom(t)=dom(t_s)$ and $dom(u)=dom(u_s)$, we have $t
\sim_r u$ as well. 

We further need to prove the three points of
Def. \ref{Canonical-rotation} in order to show that $t \sim_r^c
u$. Let $p \in dom(t)$ be an arbitrary location. For the first two
points, let $0 \leq i < \#_t(p)$ be a direction. We have $t_s(p.i) =
u_s(r(p.i)) = u_s(r(p).j)$, for some $j \in \mathcal{D}(u)$. We
distinguish two cases:
\begin{itemize}
\item if $j \geq 0$, we compute:
  \[\begin{array}{rcl}
  \len{\port_i^{fw}(t(p))} & = & \card{\{k \in Sel ~|~ t_s(p) \arrow{k}{S} t_s(p.i)\}} \\
  & = & \card{\{k \in Sel ~|~ u_s(r(p)) \arrow{k}{S} u_s(r(p).j)\}} \\
  & = & \len{\port_j^{fw}(u(r(p)))} \\
  \\
  \len{\port_i^{bw}(t(p))} & = & \len{\port_{-1}^{bw}(t(p.i))} ~\mbox{(by Def. \ref{Canonically-tiled})} \\
    & = & \card{\{k \in Sel ~|~ t_s(p.i) \arrow{k}{S} t_s(p)\}} \\
    & = & \card{\{k \in Sel ~|~ u_s(r(p).j) \arrow{k}{S} u_s(r(p))\}} \\
    & = & \len{\port_j^{bw}(u(r(p)))}
  \end{array}\]
\item if $j = -1$, we compute:
  \[\begin{array}{rcl}
  \len{\port_i^{fw}(t(p))} & = & \card{\{k \in Sel ~|~ u_s(r(p)) \arrow{k}{S} u_s(r(p).(-1))\}} \\
  & = & \len{\port_{-1}^{bw}(u(r(p)))} \\
  \\
  \len{\port_i^{bw}(t(p))} & = & \card{\{k \in Sel ~|~ u_s(r(p).(-1)) \arrow{k}{S} u_s(r(p))\}} \\
  & = & \card{\{k \in Sel ~|~ u_s(r(p).(-1)) \arrow{k}{S} u_s(r(p).(-1).j')\}} \\
  && ~\mbox{for some $j' \geq 0$ such that $r(p)=r(p).(-1).j'$} \\
  & = & \len{\port_{j'}^{fw}(u(r(p).(-1)))} \\
  & = & \len{\port_{-1}^{fw}(u(r(p)))} ~\mbox{(by Def. \ref{Canonically-tiled})}
  \end{array}\]
\end{itemize}
Since all variables are pairwise distinct in $\port_i(t(p))$ and
$\port_j(u(r(p)))$, respectively, one can define a substitution
$\sigma_p$ meeting the conditions of the first two points of
Def. \ref{Canonical-rotation}.

For the third point of Def. \ref{Canonical-rotation}, by Lemma
\ref{canonical-model}, we have that $S_{\langle t, t_s, p \rangle}
\models \form(t(p))$ and $S_{\langle u, u_s, r(p) \rangle} \models
\form(u(r(p)))$, where $S_{\langle t, t_s, p \rangle} = \langle s_t,
h_t \rangle$ is the explicit neighbourhood of $p$ w.r.t $t$ and $t_s$,
and $S_{\langle u, u_s, p \rangle} = \langle s_u, h_u \rangle$ is the
explicit neighbourhood of $r(p)$ w.r.t $u$ and $u_s$. Since $t(p)$ and
$u(r(p))$ are canonical tiles, we have:
\[\begin{array}{rcl}
\form(t(p)) & \equiv & (\exists z)~ z \mapsto (y_0, \ldots, y_{m-1}) \wedge \Pi_t \\
\form(u(r(p))) & \equiv & (\exists w)~ w \mapsto (v_0, \ldots, v_{n-1}) \wedge \Pi_u
\end{array}\]
where 
\[\begin{array}{rcl}
\Pi_t & \equiv & \port_{-1}^{fw}(t(p)) = z ~\wedge~ 
\bigwedge_{i=0}^{\#_t(p)-1}\port_i^{bw}(t(p)) = z \\ 
\Pi_u & \equiv & \port_{-1}^{fw}(u(r(p))) = w ~\wedge~ 
\bigwedge_{i=0}^{\#_u(r(p))-1}\port_i^{bw}(u(r(p))) = w
\end{array}\]
We extend the substitution $\sigma_p$ to $\sigma_p[z \leftarrow w]$ in
the following. It is not hard to check that $s_t \circ \sigma_p = s_u$
and $h_t = h_u$. Hence $S_{\langle u, u_s, p \rangle} \models
\form(t(p))[\sigma_p]$. Since both $\form(t(p))[\sigma_p]$ and
$\form(u(r(p)))$ have the same model, by the definition of the
(strict) semantics of SL, it follows that the numbers of edges are the
same, i.e.\ $m=n$, and either both $z$ and $w$ are quantified, or they
are free. Thus, we obtain that $\form(t(p))[\sigma_p] \equiv
\form(u(r(p)))$.

\noindent''$\Leftarrow$'' It $t \sim^c u$, there exists a bijective
function $r : dom(t) \rightarrow dom(u)$, meeting the conditions of
Def. \ref{Canonical-rotation}. Let $u_s = t_s \circ r^{-1}$ be a
tree. Since $t_s$ and $r$ are bijective, then also $u_s$ is bijective,
and $dom(u_s) = dom(u)$. To prove that $S \models \Phi(u)$, it is
enough to show that $u_s$ is a local spanning tree of $S$, meeting the
three conditions of Lemma \ref{canonical-model}.

To show that $u_s$ is a local spanning tree of $S$, let $p \in
dom(u_s)$ and $i \in \mathcal{D}_+(u_s)$ such that $p.i \in
dom(u_s)$. Since $t_s \sim_{r} u_s$, by
Prop. \ref{Rotation:equivalence}, we have $u_s \sim_{r^{-1}} t_s$,
hence there exists $j \in \mathcal{D}(t_s)$ such that $r^{-1}(p.i) =
r^{-1}(p).j$. Since $t_s$ is a spanning tree of $S$, there exists a
selector $k \in Sel$ such that: $$u_s(p) = t_s(r^{-1}(p)) \arrow{k}{S}
t_s(r^{-1}(p).j) = t_s(r^{-1}(p.i)) = u_s(p.i)$$ Hence $u_s$ is a
spanning tree of $S$. In order to prove its locality, let $\ell
\arrow{k}{} \ell'$ be an edge of $S$, for some $\ell, \ell \in loc(S)$
and $k \in Sel$. Since $t_s$ is a local spanning tree of $S$, there
exist $p \in dom(t_s)$ and $i \in \mathcal{D}_+(t_s)$ such that $\ell
= t_s(p) = u_s(r(p))$ and $\ell' = t_s(p.i) = u_s(r(p.i))$. Since $t_s
\sim_{r} u_s$, there exists $j \in \mathcal{D}(u_s)$ such that $r(p.i)
= r(p).j$. Hence we have $u_s(r(p)) = \ell \arrow{k}{} \ell' =
u_s(r(p).j)$, i.e.\ the edge is local w.r.t. $u_s$ as well.

To prove point (\ref{cm:1}) of Lemma \ref{canonical-model}, let $p \in
dom(u)$ be an arbitrary position, and $0 \leq i < \#_u(p)$ be a
direction. Since $u_s \sim_{r^{-1}} t_s$, we have $r^{-1}(p.i) =
r^{-1}(p).j$, for some $j \in \mathcal{D}(t)$. We compute:
\[\begin{array}{rcl}
\card{\{k \in Sel ~|~ u_s(p) \arrow{k}{S} u_s(p.i)\}} & = & 
\card{\{k \in Sel ~|~ t_s(r^{-1}(p)) \arrow{k}{S} t_s(r^{-1}(p.i) \}} \\
& = & \card{\{k \in Sel ~|~ t_s(r^{-1}(p)) \arrow{k}{S} t_s(r^{-1}(p).j) \}}
\end{array}\]
We distinguish two cases:
\begin{itemize}
\item if $j \geq 0$, we have, by Lemma \ref{canonical-model} applied
  to $t$ and $t_s$:
\[\begin{array}{rcl}
\card{\{k \in Sel ~|~ t_s(r^{-1}(p)) \arrow{k}{S} t_s(r^{-1}(p).j) \}}
& = & \len{\port_j^{fw}(t(r^{-1}(p)))} \\ 
& = & \len{\port_i^{fw}(u(p))}
\end{array}\]
The last equality is because $r(r^{-1}(p).j) = r(r^{-1}(p)).i$. 
\item if $j = -1$, there exists $d \in \mathcal{D}_+(t)$ such that
  $r^{-1}(p) = (r^{-1}(p).(-1)).d$. But then we have:
\[\begin{array}{rcl}
r((r^{-1}(p).(-1)).d) & = & r(r^{-1}(p)) = p = (p.i).(-1) \\
& = & r(r^{-1}(p).(-1)).(-1)
\end{array}\]
We compute, further:
\[\begin{array}{rcl}
\card{\{k \in Sel ~|~ t_s(r^{-1}(p)) \arrow{k}{S} t_s(r^{-1}(p).j) \}}
& = & \len{\port_{-1}^{bw}(t(r^{-1}(p)))} \\ & = &
\len{\port_d^{bw}(t(r^{-1}(p).(-1)))} ~\mbox{(by Def. \ref{Canonically-tiled})} \\ 
& = & \len{\port_{-1}^{fw}(u(p.i))} ~\mbox{(by Def. \ref{Canonical-rotation})} \\
& = & \len{\port_i^{fw}(u(p))} ~\mbox{(by Def. \ref{Canonically-tiled})}
\end{array}\]
\end{itemize}
For point (\ref{cm:2}) of Lemma \ref{canonical-model}, one applies a
symmetrical argument. To prove point (\ref{cm:3}) of Lemma
\ref{canonical-model}, let $S_{\langle t, t_s, r^{-1}(p) \rangle} =
\langle s_t, h_t \rangle$ be the explicit neighbourhood of $r^{-1}(p)$
w.r.t $t$ and $t_s$. It is not hard to check that $S_{\langle u, u_s,
  p \rangle} = \langle s_t \circ \sigma_p, h_t \rangle$, hence:
$$S_{\langle u, u_s, p \rangle} = \langle s_t \circ \sigma_p, h_t
\rangle \models \form(t(r^{-1}(p)))[\sigma_p] \equiv \form(u(p))$$ The
last equivalence is by Def. \ref{Canonical-rotation}.\qed\end{proof}

\noindent {\bf Lemma} \ref{quasi-canonical-rotation-lemma}. 

\begin{proof} Let $S$ be a state such that $S \models \Phi(t)$, and $r : dom(t)
\rightarrow dom(u)$ be a bijective function such that $t \sim^{qc}_r u$. Then $S
\models \Phi(t^c)$, and since we have $t^c \sim^c_r u^c$ by Def.
\ref{Quasi-canonical-rotation}, then also $S \models \Phi(u^c)$, by Lemma
\ref{canonical-rotation-lemma}. Also, for every $p \in dom(t)$, let $\sigma_p :
Var \rightharpoonup_{fin} Var$ be the substitution from Def.
\ref{Canonical-rotation}.

Assume that $S \not\models \Phi(u)$. Then there exists a non-trivial
path $p_0, \ldots, p_k \in dom(u)$, for some $k > 0$, and some
variables $x_0 \in FV(\form(u(p_0))), \ldots, x_k \in
FV(\form(u(p_k)))$, such that $x_0$ and $x_k$ are allocated in
$\form(u(p_0))$ and $\form(u(p_k))$, respectively, and for all $0 \leq
i < k$, we have $x_i =_{\Pi_u} x_{i+1}$, where $\Pi_u$ is the pure
part of $\form(\Phi(u))$. But then $r^{-1}(p_0), \ldots, r^{-1}(p_k)
\in dom(t)$ is a path in $t$, and the variables
$\sigma_{p_0}^{-1}(x_0)$ and $\sigma_{p_k}^{-1}(x_k)$ are allocated in
$t(r^{-1}(p_0))$ and $t(r^{-1}(p_k))$, respectively. Moreover, we have
$\sigma_{p_i}^{-1}(x_i) =_{\Pi_t} \sigma_{p_{i+1}}^{-1}(x_{i+1})$,
where $\Pi_t$ is the pure part of $\form(\Phi(t))$. Hence $S
\not\models \Phi(t)$, contradiction.\qed\end{proof}

\noindent {\bf Lemma} \ref{canonical-rotation-ta}. 

\begin{proof}
``$\subseteq$'' Let $u \in \lang{A^r}$ be a normal quasi-canonical
  tree. Then $u \in \lang{A} \cup \bigcup_{\rho \in \Delta}
  \lang{A_\rho}$ (lines \ref{line:init} and \ref{line:union} in
  Alg. \ref{alg:rot-closure}). If $u \in \lang{A}$, then we choose
  $t=u$ and trivially $t \sim^{qc} u$. Otherwise $u \in
  \lang{A_\rho}$, where $A_\rho = \langle Q_\rho, \mathcal{T}^{qc},
  \Delta_\rho, \{q_\rho^f\} \rangle$, for some $\rho \in \Delta$ (line
  \ref{line:A-rho}). Let $\pi : dom(u) \rightarrow Q_\rho$ be the run
  of $A_\rho$ on $u$. Also let $p_0 \in dom(u)$ be the maximal
  (i.e.\ of maximal length) position such that $u(p_0) = q^{rev}$, for
  some $q^{rev} \in Q^{rev}$ (line \ref{line:Q-rev}). Let
  $\subtree{u}{p_0.i}$ be the subtrees of $u$ rooted at the children
  of $p_0$, for all $i=0, \ldots, \#_u(p_0)-1$. Since $p_0$ is the
  maximal position of $\pi$ to be labeled by some state in $Q^{rev}$,
  it is easy to see that $\subtree{\pi}{p_0.i}$ are labeled by states
  in $Q$, hence $\subtree{\pi}{p_0.i}$ are runs of $A$ over
  $\subtree{t}{p_0.i}$, for all $i=0, \ldots, \#_u(p_0)-1$ (the only
  rules of $A_\rho$ involving only states from $Q$ are the rules of
  $A$, cf. line \ref{line:copy}).

  We build a quasi-canonical tree $t : \nat^* \rightharpoonup_{fin}
  \mathcal{T}^{qc}$ and a run $\theta : dom(t) \rightarrow Q$ of $A$
  on $t$ top-down as follows. Let $p_0, p_1, \ldots, p_n=\epsilon$ be
  the path composed of the prefixes of $p_0$,
  i.e.\ $p_i=p_{i-1}.(-1)$, for all $i=1,\ldots,\len{p_0}=n$. We build
  $t$, $\theta$, and a path
  $\overline{p_0}=\epsilon,\overline{p_1}=j_0, \ldots,
  \overline{p_{n}} \in dom(t)$, by induction on this path.

  For the base case, let $u(p_0) = \langle \varphi, \vec{x}_{-1}^{fw}
  \cdot \vec{x}_{-1}^{bw} \cdot \vec{x}_{-1}^{eq}, \vec{x}_0, \ldots,
  \vec{x}_{\#_u(p)-1} \rangle$ be the tile which labels $u$ at
  position $p_0$. Let $t(\epsilon) = \langle \varphi, \emptyset,
  \vec{x}_0, \ldots, \vec{x}_{j_0-1}, \vec{x}_{-1}^{bw} \cdot
  \vec{x}_{-1}^{fw} \cdot \vec{x}_{-1}^{eq}, \vec{x}_{j_0}, \ldots,
  \vec{x}_{\#_u(p)-1} \rangle$, where $j_0+1$ is the canonical
  position of the output port $\vec{x}_{-1}^{bw} \cdot
  \vec{x}_{-1}^{fw} \vec{x}_{-1}^{eq}$, in $t(\epsilon)$, according to
  the order of selector edges in $\varphi$. Then $A_\rho$ has a
  rule: $$u(p_0)(\pi(p_0.0), \ldots, \pi(p_0.(\#_u(p)-1))) \arrow{}{}
  \pi(p)$$ such that $\{\pi(p_0.0), \ldots, \pi(p_0.(\#_u(p)-1)\} \cap
  Q^{rev} = \emptyset$. Since $p_0$ is the maximal position labeled
  with a state from $Q^{rev}$, this rule was generated at line
  \ref{line:last-rule} in Alg. \ref{alg:rot-closure}. It follows that
  $A$ has a rule $$t(\epsilon)(\pi(p_0.0), \ldots,
  \pi(p_0.(\#_u(p_0)-1))) \arrow{}{} q$$ for some final state $q \in
  F$ (line \ref{line:final-state}). Then let $\theta(\epsilon) =
  q$. We further define:
  \[
  \subtree{t}{i} = \left\{\begin{array}{ll}
  \subtree{u}{p_0.i} & \mbox{if $i = 0, \ldots, j_0-1$} \\
  \subtree{u}{p_0.(i+1)} & \mbox{if $i=j_0+1, \ldots, \#_u(p_0)-1$}
  \end{array}\right. \hspace*{5mm}
  \subtree{\theta}{i} = \left\{\begin{array}{ll} 
  \subtree{\pi}{p_0.i} & \mbox{if $i = 0, \ldots, j_0-1$} \\ 
  \subtree{\pi}{p_0.(i+1)} & \mbox{if $i=j_0+1, \ldots, \#_u(p_0)-1$}
  \end{array}\right.
  \]
  For the induction step, for each $0 < i \leq n$, we have
  $p_{i-1}=p_i.k_i$, for some $0 \leq k_i < \#_u(p_{i-1})$. We have
  $u(p_i) = \langle \varphi, \vec{x}_{-1}, \vec{x}_0, \ldots,
  \vec{x}_{\#_u(p_i)-1} \rangle$, and define:
  $$t(\overline{p_i}) = \langle \varphi, ~\vec{x}_{k_i}^{bw} \cdot
  \vec{x}_{k_i}^{fw} \cdot \vec{x}_{k_i}^{eq}, ~\vec{x}_0, ~\ldots,
  ~\vec{x}_{j_i-1}, ~\vec{x}_{-1}^{bw} \cdot \vec{x}_{-1}^{fw} \cdot
  \vec{x}_{-1}^{eq},~ \vec{x}_{j_i}, ~\ldots, ~\vec{x}_{\#_u(p_i)-1}
  \rangle$$ where $j_i+1$ is the canonical position of the port
  $\vec{x}_{-1}^{bw} \cdot \vec{x}_{-1}^{fw} \cdot \vec{x}_{-1}^{eq}$
  given by the the selector edges in $\varphi$. Moreover, $A_\rho$ has
  a rule: $$u(p_i)(\pi(p_i.0), \ldots, \pi(p_i.(\#_u(p_i)-1)))
  \arrow{}{} \pi(p_i)$$ where $\pi(p_i.k_i)=\pi(p_{i-1}) \in Q^{rev}$,
  $\pi(p_i) \in Q^{rev}$ if $0 \leq i < n$, and $\pi(p_n) =
  q^f_\rho$. Moreover, this rule was introduced at line
  \ref{line:rot-rule}, if $i < n$, or at line
  \ref{line:final-rot-rule}, if $i=n$. Let $i < n$ (the case $i = n$
  uses a similar argument). If $\pi(p_{i-1})=s^{rev}$ and
  $\pi(p_i)=q^{rev}$, then $A$ must have a
  rule: $$t(\overline{p_i})(\pi(p_i.0), \ldots, \pi(p_i.(j_i-1)), q,
  \pi(p_i.j_i), \ldots, \pi(p_i.(\#_u(p_i)-1))) \arrow{}{} s$$ Let
  $\overline{p_{i+1}}=\overline{p_i}.j_i$ and
  $\theta(\overline{p_{i+1}})=q$. We further define:
  \[
  \subtree{t}{\overline{p_i}.\ell} = \left\{\begin{array}{ll}
    \subtree{u}{p_i.\ell} & \mbox{if $\ell = 0, \ldots, j_i - 1$} \\
    \subtree{u}{p_i.(\ell+1)} & \mbox{if $\ell = j_i+1, \ldots, \#_u(p_i) - 1$}
  \end{array}\right. \hspace*{5mm}
  \subtree{\theta}{\overline{p_i}.\ell} = \left\{\begin{array}{ll} 
  \subtree{\pi}{p_i.\ell} & \mbox{if $\ell = 0, \ldots, j_i - 1$} \\ 
  \subtree{\pi}{p_i.(\ell+1)} & \mbox{if $i = j_i+1, \ldots, \#_u(p_i)-1$}
  \end{array}\right.
  \]
  We define the following rotation function $r : dom(t) \rightarrow
  dom(u)$. For each $i = 0, \ldots, n$, we have
  $r(\overline{p_i})=p_i$, and:
  \[
  r(\overline{p_i}.\ell) = \left\{\begin{array}{ll}
  r(p_i.\ell) & \mbox{if $\ell = 0, \ldots, j_i - 1$} \\
  r(p_i.(\ell+1)) & \mbox{if $\ell = j_i, \ldots, \#_u(p_i) - 1$}
  \end{array}\right.
  \]
  It is easy to check that indeed $t \sim^{qc}_r u$, and that $\theta$ is
  a run of $A$ over $t$. 

\noindent''$\supseteq$'' Let $u : \nat^* \rightharpoonup_{fin}
\mathcal{T}^{qc}$ be a quasi-canonical tree such that $t \sim^{qc}_r
u$, for some $t \in \lang{A}$ and a bijective function $r : dom(t)
\rightarrow{}{} dom(u)$. Let $\pi : dom(t) \rightarrow Q$ be a run of
$A$ over $t$. We build an accepting run $\theta$ of $A^r$ over the
normal tree $\overline{u}$. Let $p_0 \in dom(t)$ such that
$r(p_0) = \epsilon$ is the root of $u$. If $p_0 = \epsilon$, it is
easy to show that $dom(u) = dom(t)$ and $r(p) = p$, for all $p \in
dom(t)$, because both $t$ and $u$ are quasi-canonical trees, and the
order of children is given by the order of selector edges in the tiles
labeling the trees. Moreover, the normal form of these tiles is
identical, i.e.\ $\overline{u}(p)=t(p)$, for all $p \in dom(t)$, hence
$u \in \lang{A} \subseteq \lang{A}^r$ (cf. line \ref{line:init}). 

Otherise, if $p_0 \neq \epsilon$, we consider the sequence of prefixes
of $p_0$, defined as $p_i = p_{i-1}.(-1)$, for all $i = 1, \ldots, n =
\len{p_0}$. Applying Lemma \ref{rotation-reversion} to $p_0, \ldots,
p_n$ successively, we obtain a path $\epsilon = r(p_0), r(p_1),
\ldots, r(p_n) \in dom(u)$, such that $r(p_{i+1}) = r(p_i).d_i$, for
all $i=0,\ldots,n-1$, and some positive directions $d_0, \ldots,
d_{n-1} \in \mathcal{D}_+(u)$.

We build the run $\theta$ by induction on this path. For the base
case, let $t(p_0) = q_0$. Then $A$ has a rule $\rho = (t(p_0)(q_1,
\ldots, q_{\#_t(p_0)}) \arrow{}{} q_0) \in \Delta$. By the
construction of $A^r$, cf. line \ref{line:final-rot-rule}, $A_\rho$
has a rule $$(\overline{u}(\epsilon))(q_1, \ldots,
q_{d_0},q_0^{rev},q_{d_0+1}, \ldots, q_{\#_t(p_0)}) \arrow{}{}
q^f_\rho.$$ We define $\theta(\epsilon) = q^f_\rho$ and:
\[
\subtree{\theta}{i} = \left\{\begin{array}{ll} 
\subtree{\pi}{p_0.i} & \mbox{if $i = 0, \ldots, d_0-1$} \\ 
\subtree{\pi}{p_0.(i+1)} & \mbox{if $i=d_0+1, \ldots, \#_t(p_0)-1$}
\end{array}\right.
\]
For the induction step, we denote $q_i = \pi(p_i)$, for all $0 < i
\leq n$. Then $A$ has a rule $$(t(p_i))(\pi(i.0), \ldots,
\pi(i.(k_i-1)), q_i, \pi(i.(k_i+1)), \ldots, \pi(i.(\#_t(p_i)-1)))
\arrow{}{} q_{i+1},$$ for each $0 < i \leq n$. By the construction of
$A^r$, cf. line \ref{line:rot-rule}, $A_\rho$ has a rule:
$$(\overline{u(r(p_i))})(\pi(i.0), \ldots, \pi(i.(d_i-1)),
~q_{i+1}^{rev},~ \pi(i.(d_i+1)), \ldots, \pi(i.(\#_t(p_i)-1))) \arrow{}{}
q_i^{rev}.$$
We define $\theta(r(p_i))=q_i^{rev}$ and:
\[
\subtree{\theta}{r(p_i).\ell} = \left\{\begin{array}{ll} 
\subtree{\pi}{r(p_i).\ell} & \mbox{if $\ell = 0, \ldots, d_i-1$} \\ 
\subtree{\pi}{r(p_i).(\ell+1)} & \mbox{if $i=d_i+1, \ldots, \#_t(p_i)-1$}
\end{array}\right.
\]
It is not difficult to prove that $\theta$ is a run of $A_\rho$, and,
moreover, since $\theta(\epsilon)=q^f_\rho$, it is an accepting run,
hence $\overline{u} \in \lang{A_\rho} \subseteq A^r$ (cf. line
\ref{line:union}).

Concerning the size of $A^r$, notice that $\len{A^r} \leq \len{A} + \sum_{\rho
\in \Delta}\len{A_\rho}$ where $A_\rho = \langle Q_\rho, \Sigma, \Delta_\rho,$
$\{q^f_\rho\} \rangle$. We have that $\card{\Delta_\rho} \leq \card{\Delta}$,
and for each rule $\tau \in \Delta_\rho \setminus \Delta$ created at lines
\ref{line:last-rule}, \ref{line:final-rot-rule}, or \ref{line:rot-rule}, there
exists a rule $\nu \in \Delta$ such that $\len{\tau} \leq \len{\nu}+1$.
Moreover, $\card{\Delta_\rho \setminus \Delta} \leq \card{Q}$ since we introduce
a new rule for each state in the set $\mathtt{visited} \subseteq Q$.  Hence
$\len{A_\rho} = \len{A} + \sum_{\tau \in \Delta_\rho \setminus \Delta}
\len{\tau} \leq \len{A} + \len{A} + \card{Q} \leq 3\len{A}$. Hence $\len{A^r}
\leq 3 \len{A}^2 = \mathcal{O}(\len{A}^2)$. \qed\end{proof}

\noindent{\bf Theorem} \ref{sl-ta-entailment}. 
\begin{proof} [Soundness] Let $S$ be a state such that $S \models
P_i(\overline{\alpha})$. By Lemma \ref{sl-ta}, there exists a~tree $t \in
\lang{A_1}$ such that $S \models \Phi(t)$. Since $t \in \lang{A_1} \subseteq
\lang{A_2^r}$, by Lemma \ref{canonical-rotation-ta}, there exists
a~quasi-canonical tree $u \in \lang{A_2}$ such that $t \sim^{qc} u$. By Lemma
\ref{quasi-canonical-rotation-lemma}, we have $S \models \Phi(u)$, and by Lemma
\ref{sl-ta} again, we obtain $S \models P_j(\overline{\alpha})$. Hence
$P_i(\overline{\alpha}) \models_{\mathcal{P}} P_j(\overline{\alpha})$.

[Completness] If $\langle \mathcal{P}, P_i \rangle$ and $\langle \mathcal{P},
P_j \rangle$ are local rooted systems, it follows that $A_1$ and $A_2$ recognize
only normal canonical trees. Let $t \in \lang{A_1}$ be a normal canonical tree.
It is not difficult to see that starting from $t$, one can build a state $S$ and
its spanning tree such that the right-hand side of Lemma~\ref{canonical-model}
holds. Then, by Lemma~\ref{canonical-model}, one gets that $S \models \Phi(t)$.
By Lemma \ref{sl-ta}, we have $S \models P_i(\overline{\alpha})$, and since
$P_i(\overline{\alpha}) \models_{\mathcal{P}} P_j(\overline{\alpha})$, we have
$S \models P_j(\overline{\alpha})$. By Lemma \ref{sl-ta} again, we obtain a
canonical tree $u \in \lang{A_2}$ such that $S \models \Phi(u)$. Since $S
\models \Phi(t)$ and $S \models \Phi(u)$, by Lemma
\ref{canonical-rotation-lemma}, we have $t \sim^c u$, and since $t$ is normal,
we obtain $t \in \lang{A^r_2}$, by Lemma~\ref{canonical-rotation-ta}. Hence
$\lang{A_1} \subseteq \lang{A^r_2}$.
\qed\end{proof}

\noindent {\bf Theorem} \ref{not-connected-undecidable}.
\begin{proof} A context-free grammar is a tuple $G=\langle \mathcal{X}, \Sigma,
\delta \rangle$ where $\mathcal{X}$ is a finite nonempty set of
\emph{nonterminals}, $\Sigma$ is a finite nonempty \emph{alphabet} such that
$\mathcal{X} \cap \Sigma = \emptyset$, and $\delta \subseteq \mathcal{X}\times
(\Sigma \cup \mathcal{X})^*$ is a finite set of \emph{productions}. Given two
strings $u,v \in (\Sigma \cup \mathcal{X})^*$, we define a \emph{step} $u
\Longrightarrow_G v$ if there exists a production $(X, w) \in \delta$ and some
words $y,z \in (\Sigma \cup \mathcal{X})^*$ such that $u=yXz$ and $v=ywz$. We
denote by $\Longrightarrow^*_G$ the reflexive and transitive closure of the
$\Longrightarrow_G$ relation. The {\em language} of a grammar $G$ generated by a
nonterminal $X \in \mathcal{X}$ is defined as $L_X(G) = \{w \in \Sigma^* ~|~ X
\Longrightarrow_G^* w\}$. It is known that the inclusion problem $L_X(G)
\subseteq L_Y(G)$, for some nonterminals $X,Y \in \mathcal{X}$, is undecidable
as originally proved in \cite{barhillel-perles-shamir61}. We reduce from this
problem to entailment within unconnected inductive systems. 

Let $\Sigma = \{\sigma_1, \ldots, \sigma_N\}$ be the alphabet of
$G$. We define the set of selectors $$Sel = \{1, \ldots, \lceil \log_2
N \rceil + 1\}$$ and, for each alphabet symbol $\sigma_K$, we define a
basic SL formula $\varphi_K(x,y) \equiv x \mapsto(\vec{\beta},y)$
where, for all $0 \leq i < \lceil \log_2 N \rceil$:
\[\vec{\beta}_i = \left\{\begin{array}{ll}
x & \mbox{if $1$ occurs on position $i$ in the binary encoding of $K$} \\
\nil & \mbox{if $0$ occurs on position $i$ in the binary encoding of $K$}
\end{array}\right.\]
A word $w = \sigma_{i_1} \cdot \ldots \cdot \sigma_{i_n} \in \Sigma^*$
is encoded by the formula $$\varphi_w(x,y) \equiv \exists x_1 \ldots
\exists x_{n-1} ~.~ \varphi_{i_1}(x,x_1) * \ldots
\varphi_{i_n}(x_{n-1},y).$$ We define a predicate name $P_X(x,y)$ for
each nonterminal $X \in \mathcal{X}$, and for each production $\pi
\equiv (X, w_1 \cdot X_{i_1} \cdot \ldots \cdot w_n \cdot X_{i_n}
\cdot w_{n+1}) \in \delta$ where $w_1,\ldots,w_n \in \Sigma^*$ are
words and $X_{i_1}, \ldots, X_{i_n} \in \mathcal{X}$ are
non-terminals, we have the rule: 
\[\begin{array}{rcl}
R_\pi(x,y) \equiv \exists x_1 \ldots \exists x_{2n} & . &
\varphi_{w_1}(x,x_1) * P_{X_{i_1}}(x_1,x_2) * \ldots * \\ &&
\varphi_{w_n}(x_{2n-2},x_{2n-1}) * P_{X_{i_n}}(x_{2n-1}, x_{2n}) *
\varphi_{w_{n+1}}(x_{2n},y).
\end{array}\]
Finally, each predicate is defined as
$P_X \equiv |_{(X,w) \in \delta} ~R_{(X,w)}$, and the inductive system
is $\mathcal{P}_G = \{P_X ~|~ X \in \mathcal{X}\}$. It is immediate to
check that $P_X(x,y) \models_{\mathcal{P}_G} P_Y(x,y)$ if and only if
$L_X(G) \subseteq L_Y(G)$. \qed\end{proof}

\noindent{\bf Theorem} \ref{entailment-exptime-complete}. 

\begin{proof} We reduce from and to the inclusion problem for non-deterministic
(bottom-up) tree automata. It is known that the language inclusion problem for
non-deterministic tree automata is EXPTIME-complete (see, e.g.\ Corollary 1.7.9
in \cite{tata05}). 

For the hardness part, let $A_1 = \langle Q_1, \Sigma, \Delta_1, F_1
\rangle$ and $A_2 = \langle Q_2, \Sigma, \Delta_2, F_2 \rangle$ be two
tree automata over the same alphabet $\Sigma = \{\sigma_1, \ldots,
\sigma_N\}$. We assume w.l.o.g. that $Q_1 \cap Q_2 = \emptyset$.
We define the set of selectors $$Sel = \{1, \ldots, \lceil \log_2 N
\rceil + \max\{\#(\sigma) ~|~ \sigma \in \Sigma\}\}.$$ For each
alphabet symbol $\sigma_K$, $1 \leq K \leq N$ and each rule $\rho
\equiv (\sigma_K(q_1,\ldots,q_n) \arrow{}{} q) \in \Delta_1 \cup
\Delta_2$, we define a basic SL formula $\varphi_K^\rho(x,y_1, \ldots,
y_n) \equiv x \mapsto(\vec{\beta},y_1,\ldots,y_n)$ where, for all $0
\leq i < \lceil \log_2 N \rceil$:
\[\vec{\beta}_i = \left\{\begin{array}{ll}
x & \mbox{if $1$ occurs on position $i$ in the binary encoding of $K$} \\ 
\nil & \mbox{if $0$ occurs on position $i$ in the binary encoding of $K$}
\end{array}\right.\]

For each state $q \in Q_i$ of $A_i$, we consider a predicate name
$P_q(x)$ and an extra predicate name $P_i(x)$, for both $i=1,2$. For
each transition rule $\rho \equiv (\sigma_K(q_1,\ldots,q_n) \arrow{}{}
q)$, of either $A_1$ or $A_2$, we define the following connected
inductive rule:
$$R_\rho(x) \equiv \exists y_1 \ldots \exists y_n ~.~
\varphi_K^\rho(x,y_1,\ldots,y_n) * P_{q_1}(y_1) * \ldots *
P_{q_n}(y_n)$$ Then we define the predicates:
\[\begin{array}{rcl}
P_q & = & \{R_q ~|~ (\sigma(q_1,\ldots,q_n)
\arrow{}{} q) \in \Delta_1 \cup \Delta_2\} \\ 
P_i & = & \{R_q ~|~ (\sigma(q_1,\ldots,q_n) \arrow{}{} q) \in \Delta_i
~\mbox{and}~ q \in F_i\} ~\mbox{for both $i=1,2$}
\end{array}\]
Finally, we have $\mathcal{P} = \{P_q ~|~ (\sigma(q_1,\ldots,q_n)
\arrow{}{} q) \in \Delta_1 \cup \Delta_2\} \cup \{P_1,P_2\}$. It is
easy to check that $P_1(x) \models_{\mathcal{P}} P_2(x)$ if and only
if $\mathcal{L}(A_1) \subseteq \mathcal{L}(A_2)$. Hence the entailment
problem is EXPTIME-hard.

For the EXPTIME-completness part, Theorem \ref{sl-ta-entailment} shows
that any entailment problem $P_i \models_{\mathcal{P}} P_j$, for a
connected system $\mathcal{P}$, can be reduced, in polynomial time, to
a language inclusion problem between tree automata
$\mathcal{L}(A_{P_i}) \subseteq \mathcal{L}(A_{P_j}^r)$. Hence the
entailment problem is in EXPTIME. \qed\end{proof}

%%%%%%%%%%%%%%%%%%%%%%%%%%%%%%%%%%%%%%%%%%%%%%%%%%%%%%%%%%%%%%%%%%%%%%%%%%%%%%%
\section{Missing Algorithms}\label{app:algorithms}
%%%%%%%%%%%%%%%%%%%%%%%%%%%%%%%%%%%%%%%%%%%%%%%%%%%%%%%%%%%%%%%%%%%%%%%%%%%%%%%

In this appendix, we provide detailed versions of the algorithms that were
presented in Section~\ref{sec:FromDefToTA} of the paper.

%==============================================================================
\subsection{Elimination of Equalities}
%==============================================================================

Let $\alloc(\Sigma)$ denote the set of allocated variables in
$\Sigma$. For each equivalence class $[x]_{\Pi}$, where $x \in \vec{x}
\cup \vec{z}$, we define its {\em representative} to be either one of
the following:
\begin{itemize}

\item the unique formal parameter $x \in [x]_{\Pi}$, if $[x]_{\Pi}
  \cap \vec{x} \neq \emptyset$ and $[x]_{\Pi} \cap \vec{x} \cap
  \alloc(\Sigma) = \emptyset$,

\item the unique allocated parameter $x \in [x]_{\Pi} \cap \vec{x}
  \cap \alloc(\Sigma)$, if $[x]_{\Pi} \cap \vec{x} \cap \alloc(\Sigma)
  \neq \emptyset$,

\item the lexicographically minimal element $minlex([x]_{\Pi})$ of
  $[x]_{\Pi}$, if $[x]_{\Pi} \cap \vec{x} = \emptyset$.

\end{itemize} 
For a tuple of variables $\vec{y} = \langle y_1, \ldots, y_\ell
\rangle$, let $[\vec{y}]_\Pi$ denote the tuple of representatives of
the equivalence classes $[y_1]_\Pi, \ldots, [y_\ell]_\Pi$,
respectively. Then we define the rule: $$R^=_{i,j}(\vec{x}) \equiv
\Sigma^= * P_{i_1}([\vec{y}_1]_\Pi) * \ldots *
P_{i_m}([\vec{y}_m]_\Pi) \wedge \Pi^=$$ where $\Sigma^=$ is obtained
from $\Sigma$ by replacing each free variable $x \in FV(\Sigma)$ by
the representative of $[x]_\Pi$, and $\Pi^= \equiv \bigwedge \{ x = y
~|~ x,y \in \vec{x},~ x \in [y]_{\Pi} \cap \alloc(\Sigma)\}$ keeps
only the equalities between formal parameters, one of which is
allocated. It is not hard to check that any rooted system $\langle
\mathcal{P}[R^=_{i,j}/R_{i,j}], P_k \rangle$, obtained by replacing
$R_{i,j}$ with $R^=_{i,j}$, is equivalent to the original rooted
system $\langle \mathcal{P}, P_k \rangle$, for all $k = 1, \ldots, n$.
Moreover, the two systems have the same size.

%==============================================================================
\subsection{Reduction to One Points-to Proposition per Rule}
%==============================================================================

The algorithm for the reduction to one points-to proposition per rule is shown
as Algorithm~\ref{alg:split}. Below, we give some more intuition behind the
algorithm.

For each rule $R_{i,j}$, with $\head(R_{i,j}) \equiv \Sigma$, the procedure
\textsc{DepthFirstTraverse}$(\Sigma,x)$ (line \ref{line:dfs}) performs a
depth-first traversal of the spatial formula $\Sigma \equiv \bigstar_{i=1}^sx_i
\mapsto (y_{i,1}, \ldots,$ $y_{i,m_i})$ starting with a randomly
chosen free variable $x \in FV(\Sigma)$, and builds an injective depth-first
spanning tree $t : \nat^* \rightharpoonup_{fin} AP(\Sigma)$ of the
formula\footnote{It must be possible to build the spanning tree from any free
variable $x \in FV(\Sigma)$ unless the system is disconnected (in which case an
error is announced and the computation aborted).}.
Each position $p \in dom(t)$ is labeled by one
points-to atomic proposition from $\Sigma$, and each atomic proposition from
$\Sigma$ is found in the tree, i.e.\ $AP(\Sigma) = \bigcup_{p \in dom(t)} t(p)$.
Formally, for all positions $p \in dom(t)$, such that $t(p) \equiv x_i \mapsto
(y_{i,1}, \ldots, y_{i,m_i})$, we have:\begin{itemize}

  \item For all $0 \leq d < \#_t(p)$, $\alloc(t(p.d)) = \{y_{i,j}\}$, for some
  $1 \leq j \leq m_i$, i.e.\ the children of each position correspond to
  points-to formulae that allocate variables pointed to by the proposition of
  that position.

  \item For all $0 \leq d < e < \#_t(p)$, if $\alloc(t(p.d)) = \{y_{i,j}\}$ and
  $\alloc(t(p.e)) = \{y_{i,k}\}$, then $j < k$, i.e. the children of each node
  are ordered with respect to the selectors via which they are pointed to.

\end{itemize} The spanning tree $t$ is used to create a set of fresh predicates
$P^p_{i,j} \equiv R_{i,j}^p$, one for each position $p \in dom(t) \setminus
\{\epsilon\}$, and top rules $\overline{P}_i \equiv |_{j=1}^{m_i}
R_{i,j}^\epsilon$, which are the only rules in which existential
quantification is allowed. For each predicate occurrence $P_{i_1}(\vec{y}_1),
\ldots, P_{i_k}(\vec{y}_k)$, the sets $X_1, \ldots, X_k \subseteq dom(t)$
correspond to the positions $p$ where the actual parameters are referred to by
$t(p) \equiv y \mapsto (z_1, \ldots, z_\ell)$. We chose the lexicographically
minimal position from each set $X_1,\ldots,X_k$ (line \ref{line:minlex}) to
place the occurrences of $P_{i_1}, \ldots, P_{i_k}$, respectively.

Finally, the new inductive system $\mathcal{Q}$ is cleaned (line
\ref{line:cleanup}) by removing (i) all unused variables from the rules and from
the calls to the predicates in which they are declared, and (ii) moving
existential quantifiers inside the rules where they are used. For instance, in
the example below, the existential quantifier $\exists z$ has been moved from
$R_1$ (left) to $R_2$ (right), because $z$ is used in the points-to formula of
$R_2$:
\[\begin{array}{ccc}
\left[\begin{array}{rcl}
R_1(x) & \equiv & \exists y,z ~.~ x \mapsto y * R_2(y,z) \\
R_2(y,z) & \equiv & y \mapsto z
\end{array}\right]
& ~\arrow{\Call{Cleanup}{}}{}~ &
\left[\begin{array}{rcl}
R_1(x) & \equiv & \exists y ~.~ x \mapsto y * R_2(y) \\
R_2(y) & \equiv & \exists z ~.~ y \mapsto z
\end{array}\right]
\end{array}\]
The elimination of useless variables is done in reversed topological order,
always processing a predicate $P_i$ before $P_j$ only if $P_i$ occurs in a rule
of $P_j$, whereas the elimination of existential quantifiers is performed in
topological order, i.e.\ we process $P_i$ before $P_j$ only if $P_j$ occurs in a
rule of $P_i$. 

It can be easily checked that $\Call{DepthFirstTraverse}{\Sigma,root}$
takes time $\mathcal{O}(\len{\Sigma})$, and
$\Call{Cleanup}{\mathcal{Q}}$ takes time
$\mathcal{O}(\len{Q})$. Moreover, since the inductive system
$\mathcal{Q}$ is obtained (line \ref{line:initial-q}) in time
$\mathcal{O}(\len{\mathcal{P}})$, it must be the case that
$\len{\mathcal{Q}}=\mathcal{O}(\len{\mathcal{P}})$. Thus, the entire
Algorithm \ref{alg:split} takes time $\mathcal{O}(\len{\mathcal{P}})$.
It is not hard to check that the result of Algorithm \ref{alg:split}
is an inductive system which is equivalent to the input,
i.e.\ $\langle \mathcal{P}, P_i \rangle$ and $\langle
\Call{splitSystem}{\mathcal{P}}, \overline{P}_i \rangle$ are
equivalent, for all $i = 1, \ldots, n$.

\begin{algorithm}[htb]
\begin{algorithmic}[0]
  \State {\bf input} An inductive system $\mathcal{P} = \{P_i \equiv
  |_{j=1}^{m_i} R_{i,j}\}_{i=1}^n$ \State {\bf output} An inductive
  system $\mathcal{Q}$ with one points-to proposition per rule
\end{algorithmic}
\begin{algorithmic}[1]
  \Function{splitSystem}{$\mathcal{P}$}
  \State{$\mathcal{Q} \leftarrow \emptyset$}
  \ForAll{$i = 1, \ldots, n$} \Comment{iterate over all predicates}
  \State{$\overline{P}_i \leftarrow \mathbf{empty\_predicate}$}
  \ForAll{$j = 1, \ldots, m_i$} \Comment{iterate over all rules of $P_i$}
  \State {\bf assume} $R_{i,j}(\vec{x}) \equiv \exists \vec{z} ~.~ \Sigma 
  * P_{i_1}(\vec{y}_1)   * \ldots * P_{i_k}(\vec{y}_k)$
  \State {\bf choose} $root \in \vec{x} \cap FV(\Sigma)$
  \State{$t \leftarrow \Call{DepthFirstTraverse}{\Sigma,root}$}\label{line:dfs}
  \If{$dom(t) = \emptyset$}
  \State {\bf error}(``empty rule $R_{i,j}$'')
  \EndIf
  \If{$\bigcup_{p \in dom(t)}t(p) \neq AP(\Sigma)$}
  \State {\bf error}(``disconnected rule $R_{i,j}$'')
  \EndIf
  \ForAll{$s = 1, \ldots, k$}
  \State{$X_s \leftarrow \{p \in dom(t) ~|~ 
    t(p) \equiv y \mapsto (z_1, \ldots, z_\ell),~ 
    \vec{y}_s \cap \{z_1, \ldots, z_\ell\} \neq \emptyset\}$}
  \If{$X_s = \emptyset$}
  \State {\bf error}(``disconnected rule $R_{i,j}$'')
  \EndIf
  \EndFor
  \ForAll{$p \in dom(t)$} \Comment{create fresh predicates}
  \State\label{line:minlex}{$P^p_{i,j} = \{R^p_{i,j} \equiv ~[p=\epsilon ~?~ \exists 
    \vec{z}]~ t(p) ~*~ \bigstar_{\scriptstyle{0 \leq d < \#t(p)}} P^{p.d}_{i,j}(\vec{x},
    \vec{z}) ~*~ 
    \bigstar_{\hspace*{-2mm}
      \begin{array}{c}
        \vspace*{-1mm}
        \scriptstyle{s = 1,\ldots,k} \\
        \scriptstyle{minlex(X_s) = p} 
        \vspace*{-2mm}
    \end{array}
    \hspace*{-2mm}} P_{i_s}(\vec{y}_s)\}$}
  \EndFor
  \State{$\overline{P}_i \leftarrow \overline{P}_i ~|~ R^\epsilon_{i,j}$} \Comment{create 
    root-level predicates}
  \State{$\mathcal{Q} \leftarrow \mathcal{Q} \cup \{P^p_{i,j} ~|~ p \in dom(t) \setminus 
    \{\epsilon\}\}$}
  \EndFor
  \State{$\mathcal{Q} \leftarrow \mathcal{Q} \cup \{\overline{P}_i\}$} 
    \label{line:initial-q}
  \EndFor
  \State {\bf return} $\Call{Cleanup}{\mathcal{Q}}$ \label{line:cleanup}
  \EndFunction
\end{algorithmic}
\caption{Reduction to one points-to proposition per
  rule} \label{alg:split}
\end{algorithm}

%==============================================================================
\subsection{Parameter Elimination}
%==============================================================================

The algorithm for parameter elimination is shown as
Algorithm~\ref{alg:eliminate}. In the algorithm, tracking and elimination of a
given formal parameter is implemented by a recursive function
$\Call{trackEliminate}{P_r,x_r,\mathtt{del}}$, where $P_r$ is the current
predicate, $x_{r,s}$ is the currently tracked formal parameter of $P_r$ (i.e.\
$\mathtt{Tracked} \leadsto^* x_{r,s}$ is an invariant for every call to
$\Call{trackEliminate}{}$), and $\mathtt{del}=\true$ if and only if $x_{r,s}$ is
to be removed from the definition of $P_r$. If the currently tracked parameter
$x_{r,s}$ is either allocated in a rule of $P_r$ or it is not propagated
further, then every occurrence of $x_{r,s}$ is replaced with $\mathtt{Tracked}$
(line \ref{line:tracked}). Otherwise, if the parameter is propagated to $P_j$ as
$x_{j,\ell}$ (line \ref{line:prop}) and it is referenced by the current rule
(line \ref{line:ref}), then it will not be removed any longer from the system
(line \ref{line:del:false}). In this case, we keep tracking it only to place the
global variable $\mathtt{Tracked}$ in the right place (line \ref{line:tracked}).
Otherwise, if the parameter is not referenced, it will be removed completely
from the rule (line \ref{line:noref}). Finally, the old predicates in
$\mathcal{Q} \cap \mathcal{P}$, which have become unreachable from the $P_i$,
due to the insertion of the new ones ($Q_r$), are removed from the system by a
call to the $\Call{Trim}{}$ procedure.

Notice that all eliminated variables are collected in a global set
$\mathtt{Parameters}$, which will be used later, by Algorithm
\ref{alg:sl2ta} to produce a tree automaton.

\begin{algorithm}[phtb]
\begin{algorithmic}[0]
  \State {\bf input} A rooted system $\langle \mathcal{P}, P_i
  \rangle$, where $\mathcal{P} = \{P_i(x_{i,1}, \ldots, x_{i,n_i})
  \equiv |_{j=1}^{m_i} R_{i,j}\}_{i=1}^n$, and $P_i(\vec{x})$

  \State {\bf output} A rooted system $\langle \mathcal{Q}, Q_i
  \rangle$, where $Q_i$ has empty formal parameter tuple

  \State {\bf global} $\mathcal{Q} \leftarrow \mathcal{P}$,
  $\mathtt{Visited} \leftarrow \emptyset$, $\mathtt{Parameters}
  \leftarrow \emptyset$, $\mathtt{Tracked}$
\end{algorithmic}
\begin{algorithmic}[1]
  \Function{eliminateParameters}{$P_i,\overline{\alpha}$} \Comment{$\overline{\alpha}$ is the tuple of 
    actual parameters}

  \ForAll{$k = 1, \ldots, n_i$} \Comment{iterate through all parameters of $P_i$}
  \State{$\mathtt{Tracked} \leftarrow \overline{\alpha}_k$}
  \State{$\Call{trackEliminate}{P_i, x_{i,k}, \true}$}
  \State{$\mathtt{Parameters} \leftarrow \mathtt{Parameters} \cup \{\mathtt{Tracked}\}$}
  \EndFor
  \State{\bf return} $\Call{Trim}{\mathcal{Q},P_i}$
  \EndFunction
\end{algorithmic}
\begin{algorithmic}[1]
  \Function{trackEliminate}{$P_r, x_{r,s}, \mathtt{del}$} 
  \Comment{parameter $x_{r,s}$ of $P_r$, boolean $\mathtt{del}$}

  \State{$\vec{x}_{new} \leftarrow \mathbf{if}~ \mathtt{del}
    ~\mathbf{then}~ \vec{x}_{\neg x_{r,s}} ~\mathbf{else}~ \vec{x}$}
  \Comment{initialize a new formal parameter tuple}

  \State{$Q_r(\vec{x}_{new}) \leftarrow \mathbf{empty\_predicate}$}
  \State{$\mathcal{Q} \leftarrow \mathcal{Q} \cup \{Q_r\}$}
  \Comment{create a new predicate name}

  \ForAll{$q = 1, \ldots, m_r$} \Comment{iterate through the rules of $P_r$}
  \State{{\bf assume} $R_{r,q}(\vec{x}) \equiv \exists \vec{z} ~.~ 
    \alpha \mapsto (\vec{y}) * P_{i_1}(\vec{u}_1) * \ldots * P_{i_m}(\vec{u}_m)$}

  \If{$x_{r,s} \equiv \alpha$ {\bf or} $x_{r,s} \not\in \bigcup_{j=1}^{m}\vec{u}_j$} 
  \Comment{$x_{r,s}$ is allocated in $R_{r,q}$ or not propagated}

  \State{$R_{new}(\vec{x}_{new}) \leftarrow R_{r,q}[\mathtt{Tracked}/x_{r,s}]$} 
  \label{line:tracked}
  \Comment{replace $x_{r,s}$ in $R_{r,q}$ by the global $\mathtt{Tracked}$}

  \Else \Comment{check if the current tracked parameter $x_{r,s}$ is passed to $P_j$ on position $\ell$}

  \If{$\exists j,\ell ~.~ 1 \leq j \leq m ~\wedge~ 0 \leq \ell <
    \len{\vec{y}_j} ~\wedge~ x_{r,s} \equiv (\vec{u}_j)_\ell$}
  \label{line:prop}
  
  \If{the choice of $j$ and $\ell$ is not unique} 
  \State {\bf error} (``branching propagation for parameter $x_{r,s}$'') 
  \EndIf

  \If{$x_{r,s} \in \vec{y}$} \Comment{the tracked parameter is referenced before being passed}\label{line:ref}

  \State $R_{new}(\vec{x}_{new}) \leftarrow ~\mathbf{if}~ \mathtt{del}
  ~\mathbf{then}~ \exists x_{r,s} ~.~ R_{r,q} ~\mathbf{else}~ R_{r,q}$

  \State{$\mathtt{del} \leftarrow \false$}\label{line:del:false}
  \Else \Comment{the tracked parameter is passed without being referenced}
  \If{$\mathtt{del}$}
  \State{$R_{new}(\vec{x}_{new}) \leftarrow ~\exists \vec{z} ~.~ 
    \alpha \mapsto (\vec{y}) * P_{i_1}(\vec{u}_1) * \ldots * Q_{i_j}(\vec{u}_{\neg x_{r,s}}) * 
    \ldots * P_{i_m}(\vec{u}_m)$}
  \label{line:noref}
  \Else
  \State $R_{new}(\vec{x}_{new}) \leftarrow R_{r,q}$
  \EndIf
  \EndIf  

  \If{$x_{j,\ell} \not\in \mathtt{Visited}$} \Comment{continue tracking parameter $x_{j,\ell}$ of $P_j$}
  \State $\mathtt{Visited} \leftarrow \mathtt{Visited} \cup \{x_{j,\ell}\}$
  \State \Call{trackEliminate}{$P_j,x_{j,\ell},\mathtt{del}$} 
  \EndIf

  \EndIf
  \EndIf
  \State{$Q_r \equiv Q_r ~|~ R_{new}$}
  \EndFor
  \EndFunction
\end{algorithmic}
\caption{Elimination of propagated formal parameters} \label{alg:eliminate}
\end{algorithm}

%==============================================================================
\subsection{Converting to Tree Automata}
%==============================================================================

The algorithm for converting inductive definitions to tree automata is presented
as Algorithm~\ref{alg:sl2ta}. Algorithm~\ref{alg:sl2ta-signatures} used by it
implements the computation of the signatures of the system.

The $\Call{sl2ta}{}$ function (Algorithm \ref{alg:sl2ta}) builds a
tree automaton $A$, for a rooted system $\langle \mathcal{P}, P_i
\rangle$. For each rule in the system, the algorithm creates a
quasi-canonical tile, where the input and output ports $\vec{x}_i$ are
factorized as $\vec{x}_i^{fw} \cdot \vec{x}_i^{bw} \cdot
\vec{x}_i^{eq}$, according to the precomputed signatures. The backward
part of the input port $\vec{x}_{-1}^{bw}$ and the forward parts of
the output ports $\vec{x}_i^{fw}$, for $i \geq 0$, are sorted
according to the order of incoming selector edges from the single
points-to formula $\alpha \mapsto (\overline{\beta})$, by the
$\Call{selectorSort}{}$ function (lines \ref{line:selsort1} and
\ref{line:selsort2}). The output ports $x_i$, $i \geq 0$, are sorted
within the tile, according to the order of the selector edges pointing
to $(\vec{x}_i^{fw})_0$, for each $i=0,1,\ldots$ (function
$\Call{sortTile}{}$, line \ref{line:tilesort}). Finally, each
predicate name $P_j$ is associated a state $q_j$ (line
\ref{line:states}), and for each inductive rule, the algorithm creates
a transition rule in the tree automaton (line \ref{line:transition}).
The final state corresponds to the root of the system (line
\ref{line:final}). 

\begin{algorithm}[phtb]
\begin{algorithmic}[0]
\State {\bf input} A rooted system $\langle \mathcal{P}, P_i \rangle$,
where $\mathcal{P} = \{P_i \equiv |_{j=1}^{m_i} R_{i,j}\}_{i=1}^n$
\State {\bf global} $\sig_1^{fw}, \sig_1^{bw}, \sig_1^{eq}, 
\ldots, \sig_n^{fw}, \sig_n^{bw}, \sig_n^{eq}$
\end{algorithmic}
\begin{algorithmic}[1]
\Function{fwEdge}{$\mathtt{index},\mathtt{pred_{no}},\mathcal{P}$}
\ForAll{$i = 1, \ldots, n$} \Comment{iterate through all predicates}
\ForAll{$j = 1, \ldots, m_i$} \Comment{iterate through all rules}
\State {\bf assume} $R_{i,j}(\vec{x}) \equiv \exists \vec{z} ~.~ \alpha 
\mapsto \vec{y}	* P_{i_0}(\vec{x}_0)   * \ldots * P_{i_k}(\vec{x}_k) \wedge \Pi$
\State $\mathtt{res} \leftarrow \true$
\ForAll{$\ell = 0, \ldots, k$} \Comment{iterate through all predicate occurrences}
\If{$i_\ell=\mathtt{pred_{no}}$}
\If{$(\vec{x}_\ell)_{\mathtt{index}} \not\in \vec{y}$} 
\Comment{parameter not referenced in occurrence of $P_{\mathtt{pred}_{no}}$}
\State $\mathtt{res} \leftarrow \false$
\EndIf
\EndIf
\EndFor
\EndFor
\EndFor
\State {\bf return} $\mathtt{res}$
\EndFunction
\end{algorithmic}
\begin{algorithmic}[1]
\Function{bwEdge}{$\mathtt{index},\mathtt{pred_{no}},\mathcal{P}$}
\ForAll{$i = 1, \ldots, n$} \Comment{iterate through all predicates}
\ForAll{$j = 1, \ldots, m_i$} \Comment{iterate through all rules}
\State {\bf assume} $R_{i,j}(\vec{x}) \equiv \exists \vec{z} ~.~ \alpha \mapsto
\vec{y}	* P_{i_0}(\vec{x}_0)   * \ldots * P_{i_k}(\vec{x}_k) \wedge \Pi$
\State $\mathtt{res} \leftarrow \true$
\ForAll{$\ell = 0, \ldots, k$} \Comment{iterate through all predicate occurrences}
\If{$i_\ell=\mathtt{pred_{no}}$}
\If{$(\vec{x}_\ell)_{\mathtt{index}} \neq_{\Pi} \alpha$} 
\Comment{parameter not allocated in occurrence of $P_{\mathtt{pred}_{no}}$}
\State $\mathtt{res} \leftarrow \false$
\EndIf
\EndIf
\EndFor
\EndFor
\EndFor
\State {\bf return} $\mathtt{res}$
\EndFunction
\end{algorithmic}
\begin{algorithmic}[1]
\Function{allocated}{$\mathtt{index}$,$\mathtt{pred_{no}}$,$\mathcal{P}$}
\State $\mathtt{res} \leftarrow \true$
\ForAll{$j = 1, \ldots, m_{\mathtt{pred_{no}}}$} \Comment{iterate through all rules of $P_{\mathtt{pred_{no}}}$}
\State {\bf assume} $R_{\mathtt{pred_{no}},j}(\vec{x}) \equiv \exists \vec{z} ~.~ \alpha \mapsto
\vec{y}	* P_{i_0}(\vec{x}_0)   * \ldots * P_{i_k}(\vec{x}_k) \wedge \Pi$
\If{$(\vec{x})_{\mathtt{index}} \neq_{\Pi} \alpha$}
\Comment{formal parameter $\mathtt{index}$ not allocated in $P_{\mathtt{pred_{no}}}$}
\State $\mathtt{res} \leftarrow \false$
\EndIf
\EndFor
\State {\bf return} $\mathtt{res}$
\EndFunction
\end{algorithmic}
\begin{algorithmic}[1]
\Function{referenced}{$\mathtt{index}$,$\mathtt{pred_{no}}$,$\mathcal{P}$}
\State $\mathtt{res} \leftarrow \true$
\ForAll{$j = 1, \ldots, m_{\mathtt{pred_{no}}}$} \Comment{iterate through all rules of $P_{\mathtt{pred_{no}}}$}
\State {\bf assume} $R_{\mathtt{pred_{no}},j}(\vec{x}) \equiv \exists \vec{z} ~.~ \alpha \mapsto
\vec{y}	* P_{i_0}(\vec{x}_0)   * \ldots * P_{i_k}(\vec{x}_k) \wedge \Pi$
\If{$(\vec{x})_{\mathtt{index}} \not\in \vec{y}$}
\Comment{formal parameter $\mathtt{index}$ not referenced in $P_{\mathtt{pred_{no}}}$}
\State $\mathtt{res} \leftarrow \false$
\EndIf
\EndFor
\State {\bf return} $\mathtt{res}$
\EndFunction
\end{algorithmic}
\begin{algorithmic}[1]
\Function{computeSignatures}{$\mathcal{P}$}
\ForAll{$i = 1, \ldots, n$} \Comment{iterate through all predicates of $\mathcal{P}$}
\State {\bf assume} $P_i(\vec{x})$ \Comment{$P_i$ has formal parameters $\vec{x}$}
\ForAll{$p=0, \ldots, \len{\vec{x}} - 1$} \Comment{iterate through all formal parameters of $P_i$}
\State $\sig_{i}^{fw}\leftarrow \emptyset, \sig_{i}^{bw}\leftarrow \emptyset,
\sig_{i}^{eq}\leftarrow \emptyset$
\If {$\Call{fwEdge}{i,p,\mathcal{P}}$ {\bf and} $\Call{allocated}{i,p,\mathcal{P}}$}
\State $\sig_{i}^{fw} \leftarrow \sig_{i}^{fw} \cup \{p\}$
\Else
\If{$\Call{bwEdge}{i,p,\mathcal{P}}$ {\bf and} $\Call{referenced}{i,p,\mathcal{P}}$}
\State $\sig_{i}^{bw} \leftarrow \sig_{i}^{bw} \cup \{p\}$
\Else
\State $\sig_{i}^{eq} \leftarrow \sig_{i}^{eq} \cup \{p\}$
\EndIf
\EndIf
\EndFor
\EndFor
\EndFunction
\end{algorithmic}
\caption{Signature computation} \label{alg:sl2ta-signatures}
\end{algorithm}

\begin{algorithm}[phtb]
\begin{algorithmic}[0]
  \State {\bf input} A rooted system $\langle \mathcal{P},
  P_{\mathtt{index}} \rangle$, $\mathcal{P} = \{P_i \equiv
  |_{j=1}^{m_i} R_{i,j}\}_{i=1}^n$, and actual parameters $\vec{u}$ of a call of
  $P_{\mathtt{index}}$

  \State {\bf output} A tree automaton $A = \langle \Sigma, Q, \Delta,
  F \rangle$
\end{algorithmic}
\begin{algorithmic}[1]
  \Function{sl2ta}{$\mathcal{P}, \mathtt{index}, \vec{u}$} 
  \State $\mathcal{P} \leftarrow \Call{splitSystem}{\mathcal{P}}$
  \State $\mathcal{P} \leftarrow \Call{eliminateParameters}{P_{\mathtt{index}},\vec{u}}$
  \State $\Call{computeSignatures}{\mathcal{P}}$
  \State $\Delta \leftarrow \emptyset, \Sigma \leftarrow \emptyset$
  \ForAll{$i = 1, \ldots, n$} \Comment{iterate through all predicates of $\mathcal{P}$}
  \ForAll{$j = 1, \ldots, m_i$} \Comment{iterate through all rules of $P_i$}
  \State {\bf assume} $R_{i,j}(\vec{x}) \equiv \exists \vec{z} ~.~ \alpha \mapsto
  \overline{\beta} * P_{i_0}(\vec{z}_0)   * \ldots * P_{i_k}(\vec{z}_k) \wedge \Pi$

  \State{$\phi \leftarrow ~\mbox{\bf if}~ \alpha\in\mathtt{Parameters}
    ~\mbox{\bf then}~ \alpha \mapsto (\overline{\beta}) \wedge \Pi
    ~\mbox{\bf else}~ \exists \alpha' ~.~ \alpha' \mapsto
    (\overline{\beta}) \wedge \Pi \wedge \alpha=\alpha'$}
  
  \State{$\vec{x}_{-1}^{fw} \leftarrow \langle\rangle, 
    \vec{x}_{-1}^{bw} \leftarrow \langle\rangle,
    \vec{x}_{-1}^{eq}\leftarrow \langle\rangle$}
  \Comment{initialize input ports}

  \ForAll{$\ell = 0,\ldots,\len{\vec{x}}-1$} 
  \Comment{iterate through the formal parameters of $R_{i,j}$}
  
  \If{$\ell \in \sig_i^{fw}$}
  \State{$\vec{x}_{-1}^{fw} \leftarrow \vec{x}_{-1}^{fw} \cdot \langle (\vec{x})_\ell \rangle$}
  \EndIf
  
  \If{$\ell \in \sig_i^{bw}$}
  \State $\vec{x}_{-1}^{bw} \leftarrow \vec{x}_{-1}^{bw} \cdot \langle (\vec{x})_\ell \rangle$
  \EndIf
  
  \If{$\ell \in \sig_{i}^{eq}$}
  \State $\vec{x}_{-1}^{eq} \leftarrow \vec{x}_{-1}^{eq} \cdot \langle (\vec{x})_\ell \rangle$
  \EndIf
  
  \EndFor
  
  \State $\Call{selectorSort}{\vec{x}_{-1}^{bw},\overline{\beta}}$\label{line:selsort1}
  \ForAll{$\ell=0, \ldots, k$} \Comment{iterate through predicate occurrences in $R_{i,j}$}
  \State{$\vec{x}_\ell^{fw} \leftarrow \langle\rangle, 
    \vec{x}_\ell^{bw} \leftarrow \langle\rangle,
    \vec{x}_\ell^{eq} \leftarrow \langle\rangle$}
  \Comment{initialize output ports}
  \ForAll{$r=0,\dots,\len{\vec{z}_\ell}-1$} 
  \Comment{iterate through variables of occurrence $P_{i_\ell}(\vec{z}_\ell)$}
    \If{$r \in \sig_{i_\ell}^{fw}$}
    \State $\vec{x}_{\ell}^{fw} \leftarrow \vec{x}_{\ell}^{fw} \cdot \langle (\vec{z}_\ell)_r$
    \EndIf

    \If{$r \in \sig_{i_\ell}^{bw}$}
    \State $\vec{x}_{\ell}^{bw} \leftarrow \vec{x}_{\ell}^{bw} \cdot \langle (\vec{z}_\ell)_r \rangle$
    \EndIf
     	
    \If{$r \in \sig_{i_\ell}^{eq}$}
    \State $\vec{x}_{\ell}^{eq} \leftarrow \vec{x}_{\ell}^{eq} \cdot \langle (\vec{z}_\ell)_r \rangle$
    \EndIf
    
    \EndFor
    \State $\Call{selectorSort}{\vec{x}_{\ell}^{fw},\overline{\beta}}$\label{line:selsort2}
    \EndFor
	
    \State $\mathtt{T} \leftarrow \langle \phi,\vec{x}_{-1},\vec{x}_0,
    \ldots, \vec{x}_k\rangle$ \Comment{create a new tile}

    \State $\mathtt{lhs} \leftarrow \langle q_{i_0}, \ldots, q_{i_k}
    \rangle$ \Comment{create the left hand side of the transition rule
      for $R_{i,j}$}

    \State $(\mathtt{T_{new}},\mathtt{lhs_{new}}) \leftarrow \Call{sortTile}{\mathtt{T},\mathtt{lhs}}$
    \label{line:tilesort}
    \State $\Sigma \leftarrow \Sigma\cup \{\overline{\mathtt{T_{new}}}\}$
    \Comment{insert normalized tile in the alphabet}
    \State $Q \leftarrow Q \cup \{q_i, q_{i_0}, \ldots, q_{i_k}\}$
    \label{line:states}
    \State $\Delta \leftarrow \Delta \cup \{ \mathtt{\overline{T_{new}}}(\mathtt{lhs_{new}}) \rightarrow q_i \}$
    \Comment{build transition using normalized tile}
    \label{line:transition}
    \EndFor
    \EndFor
    \State {$F=\{ q_{\mathtt{index}} \}$}
    \label{line:final}
    \State {{\bf return} $A=\langle Q,\Sigma,\Delta,F \rangle$}
    \EndFunction   
\end{algorithmic}
\caption{Converting rooted inductive systems to tree automata} \label{alg:sl2ta}
\end{algorithm}

%==============================================================================
\subsection{Rotation of Tree Automata}
%==============================================================================

The result of Algorithm \ref{alg:rot-closure} is a language-theoretic union of
$A$ and automata $A_\rho$, one for each rule $\rho$ of $A$. The idea behind the
construction of $A_\rho = \langle Q_\rho, \Sigma, \Delta_\rho, \{q^f_\rho\}
\rangle$ can be understood by considering a tree $t \in \lang{A}$, a run $\pi :
dom(t) \rightarrow Q$, and a position $p \in dom(t)$, which is labeled with the
right hand side of the rule $\rho = T(q_1, \ldots, q_k) \arrow{}{} q$ of $A$.
Then $\lang{A_\rho}$ will contain the rotated tree $u$, i.e.\ $t \sim^{qc}_r u$,
where the significant position $p$ is mapped into the root of $u$ by the
rotation function $r$, i.e.\ $r(p) = \epsilon$. To this end, we introduce a new
rule $T_{new}(q_0, \ldots, q_j, q^{rev}, q_{j+1}, \ldots, q_n) \arrow{}{}
q^f_\rho$, where the tile $T_{new}$ mirrors the change in the structure of $T$
at position $p$, and $q^{rev} \in Q_\rho$ is a fresh state corresponding to $q$.
The construction of $A_\rho$ continues recursively, by considering every rule of
$A$ that has $q$ on the left hand side: $U(q'_1, \ldots, q, \ldots, q'_\ell)
\arrow{}{} s$. This rule is changed by swapping the roles of $q$ and $q'$ and
producing a rule $U_{new}(q'_1, \ldots, s^{rev}, \ldots q'_\ell) \arrow{}{}
q^{rev}$, where $U_{new}$ mirrors the change in the structure of $U$.
Intuitively, the states $\{q^{rev} | q \in Q\}$ mark the unique path from the
root of $u$ to $r(\epsilon) \in dom(u)$. The recursion stops when either (i) $s$
is a final state of $A$, (ii) the tile $U$ does not specify a forward edge in
the direction marked by $q$, or (iii) all states of $A$ have been visited. 

\begin{algorithm}[phtb]
\begin{algorithmic}[0]
  \State {\bf input} A quasi-canonical tree automaton $A= \langle Q, \Sigma,
  \Delta, F \rangle$

  \State {\bf output} A tree automaton $A^r$, where $\lang{A^r} = \{u
  : \nat^* \rightharpoonup_{fin} \mathcal{T}^{qc} ~|~ \exists t \in
  \lang{A} ~.~ u \sim^{qc} t\}$
\end{algorithmic}
\begin{algorithmic}[1]
  \State $A^r = \langle Q_r, \Sigma, \Delta_r, F_r \rangle \leftarrow A$ 
  \label{line:init}
  \Comment{make a copy of $A$ into $A^r$}

  \ForAll{$\rho \in \Delta$} \Comment{iterate through the rules of $A$}

  \State{{\bf assume} $\rho \equiv T(q_0, \ldots, q_k) \rightarrow q$ {\bf and}
    $T \equiv \langle \varphi, \vec{x}_{-1},\vec{x}_0,\dots, \vec{x}_k \rangle$}
  \Comment{the chosen rule $\rho$ will recognize the root of the rotated tree}
  
  %% \If{$\vec{x}_{-1}=\emptyset$ and $q\in Q_f$} 
  %%    [$i$ is the root rule of $A$---no need to rotate]
  %%    \State $A^r \leftarrow A^r \cup A$ [an automata union]

  \If{$\vec{x}_{-1} \neq \emptyset$ {\bf or} $q \not\in F$}
  \State{{\bf assume} $\vec{x}_{-1}\equiv \vec{x}_{-1}^{fw}\cdot \vec{x}_{-1}^{bw}\cdot \vec{x}_{-1}^{eq}$}
  \Comment{factorize the input port}

  %% \If {$\vec{x}_{-1}^{bw} = \emptyset$}
  %%       	 	\State skip this case [There will be no forward local edge after rotation]

  \If {$\vec{x}_{-1}^{bw} \neq \emptyset$} 

  \State\label{line:Q-rev} $Q^{rev} \leftarrow \{q^{rev} \mid q\in Q
  \}$ \Comment{ states $q^{rev}$ label the unique reversed path in
    $A^r$}
  
  \State\label{line:copy} $(Q_\rho,\Delta_\rho) \leftarrow (Q \cup
  Q^{rev} \cup \{q^f_\rho\},\Delta)$ \Comment{assuming $Q \cap Q^{rev}
    = \emptyset$, $q^f_\rho \not\in Q \cup Q^{rev}$}

  \State $p \leftarrow \Call{positionOf}{\vec{x}_{-1}^{bw},\varphi}$
  \Comment{find new output port for $\vec{x}_{-1}$ based on selectors of $\varphi$}

  \State\label{line:root-tile} $T_{new} \leftarrow \langle \varphi,
  \emptyset,\vec{x}_0,\ldots,\vec{x}_p,
  \vec{x}_{-1}^{bw}\cdot\vec{x}_{-1}^{fw}\cdot\vec{x}_{-1}^{eq},
  \vec{x}_{p+1},\ldots,\vec{x}_k \rangle$ \Comment{swap
    $\vec{x}_{-1}^{bw}$ with $\vec{x}_{-1}^{fw}$}

  \State\label{line:final-rot-rule} $\Delta_\rho \leftarrow \Delta_\rho \cup \{T_{new}(q_0,
  \ldots, q_p, q^{rev}, q_{p+1}, \ldots, q_k) \arrow{}{} q^f_\rho \}$

  \State $(\Delta_\rho,\_) \leftarrow \Call{rotateRule}{q,\Delta,\Delta_\rho,\emptyset,F}$
  \Comment{continue building $\Delta_\rho$ recursively}

  \State\label{line:A-rho} $A_\rho \leftarrow \langle Q_\rho, \Sigma, \Delta_\rho, \{q^f_\rho\} \rangle$
  \Comment{$A_\rho$ recognizes trees whose roots are labeled by $\rho$}

  \State $A^r \leftarrow A^r \cup A_\delta$ 
  \label{line:union}
  \Comment{incorporate $A_\delta$ into $A_r$}
  \EndIf
  
  \EndIf 
  \EndFor
  \State {\bf return} $A^r$
\end{algorithmic}
\begin{algorithmic}[1]
  \Function{rotateRule}{$q,\Delta,\Delta_{new},\mathtt{visited},F$} 
  \State $\mathtt{visited} \leftarrow \mathtt{visited} \cup \{q\}$ 
  \ForAll{$(U(s_0,\dots,s_\ell) \rightarrow s) \in \Delta$} 
  \Comment{iterate through the rules of $A$}

  \ForAll{$0 \leq j \leq \ell$ {\bf such that} $s_j=q$}
  \Comment{for all occurrences of $q$ on the rhs}

  \State{{\bf assume} $U = \langle \varphi, \vec{x}_{-1}, \vec{x}_0,
  \ldots, \vec{x}_j, \dots, \vec{x}_\ell\rangle$}

  \State{{\bf assume} $\vec{x}_{j} \equiv \vec{x}_{j}^{fw}\cdot \vec{x}_{j}^{bw}
    \cdot\vec{x}_{j}^{eq}$} \Comment{factorize the $\vec{x}_j$ output port}

  \If{$\vec{x}_{-1}=\emptyset$ {\bf and} $s\in F$}\label{line:final-state}
  \Comment{remove $\vec{x}_j$ from output and place it as input port}
  \State $U' \leftarrow \langle \varphi,\vec{x}_{j}^{bw}\cdot
  \vec{x}_{j}^{fw}\cdot\vec{x}_{j}^{eq},\vec{x}_0,\dots,
  \vec{x}_{j-1},\vec{x}_{j+1},\dots,\vec{x}_\ell\rangle$

  \State\label{line:last-rule} $\Delta_{new} \leftarrow \Delta_{new}
  \cup \{ U'(s_0,\dots,s_{j-1},s_{j+1},\dots,s_\ell) \arrow{}{}
  q^{rev} \}$

  \Else \Comment{swap $\vec{x}_{-1}$ with $\vec{x}_j$}
  \State $\vec{x}_{-1}\equiv \vec{x}_{-1}^{fw}\cdot \vec{x}_{-1}^{bw} \cdot\vec{x}_{-1}^{eq}$

  %% \If{$\vec{x}_{-1}^{bw}=\emptyset$} [no forward local edge
  %%       				after the rotation]

  \If{$\vec{x}_{-1}^{bw} \neq \emptyset$}

  \State $\mathtt{ports} \leftarrow \langle
  \vec{x}_0,\dots,\vec{x}_{j-1},\vec{x}_{j+1}, \ldots,\vec{x}_\ell \rangle$

  \State $\mathtt{states} \leftarrow (s_0,\dots,s_{j-1},s_{j+1},\dots,s_\ell)$

  \State $p \leftarrow \Call{insertOutPort}{\vec{x}_{-1}^{bw} \cdot
    \vec{x}_{-1}^{fw}\cdot \vec{x}_{-1}^{eq},\mathtt{ports}, \varphi$}
%  \Comment{place output port to position given by $\varphi$}

  \State $\Call{insertLhsState}{s^{rev}, \mathtt{states}, p}$
%  \Comment{place $s^{rev}$ on the same position in the new rule}

  \State $U_{new} \leftarrow \langle \varphi,\vec{x}_{j}^{bw}\cdot
  \vec{x}_{j}^{fw}\cdot\vec{x}_{j}^{eq},\mathtt{ports}\rangle$
  \Comment{create rotated tile}

  \State\label{line:rot-rule} $\Delta_{new} \leftarrow \Delta_{new}
  \cup \{U_{new}(\mathtt{states}) \rightarrow q^{rev}\}$
  \Comment{create rotated rule}

  \If{$s \not\in \mathtt{visited}$}

  \State $(\Delta_{new},\mathtt{visited}) \leftarrow
  \Call{rotateRule}{s,\Delta,\Delta_{new},\mathtt{visited},F}$
  \EndIf 
  \EndIf 
  \EndIf
  \EndFor 
  \EndFor 
  \State {\bf return} $(\Delta_{new},\mathtt{visited})$
  \EndFunction
\end{algorithmic}
\caption{Rotation Closure of Quasi-canonical TA} \label{alg:rot-closure}
\end{algorithm}

\vfill\eject %!!!!!!!!!!!!!!!!!!!!!!!!!!!!!!!!!!!!!!!!!!!!!!!!!!!!!!!!!!!!!!

%%%%%%%%%%%%%%%%%%%%%%%%%%%%%%%%%%%%%%%%%%%%%%%%%%%%%%%%%%%%%%%%%%%%%%%%%%%%%%%
\section{Inductive Definitions Used in Experiments}\label{app:indDefForExp}
%%%%%%%%%%%%%%%%%%%%%%%%%%%%%%%%%%%%%%%%%%%%%%%%%%%%%%%%%%%%%%%%%%%%%%%%%%%%%%%

Inductive definitions of predicates used in the experiments of
Section~\ref{sec:experiments} and not present in the main body of the paper are
shown in Figure~\ref{fig:indDefForExp}.

\begin{figure}[h]
\[\begin{array}{rcl}

  \DLL_{rev}(hd,p,tl,n) & \equiv & hd\mapsto (n,p) ~\wedge~ hd=tl
  ~|~ \exists x.~tl\mapsto (n,x) * \DLL_{rev}(hd,p,x,tl) \\

  \DLL_{mid}(hd,p,tl,n) & \equiv & hd\mapsto (n,p) ~\wedge~ hd=tl 
  ~|~ hd\mapsto(tl,p) * tl\mapsto(n,hd) \\
  &|& \exists x,y,z.~x\mapsto (y,z) * \DLL(y,x,tl,n) * \DLL_{rev}(hd,p,z,x) \\

  \TREE_{pp}(x,b) & \equiv & x \mapsto (nil,nil,b)
  ~|~ \exists l,r.~x\mapsto (l,r,b) * \TREE_{pp}(l,x) * \TREE_{pp}(r,x) \\

  \TREE_{pp}^{rev}(t,b) & \equiv &  t\mapsto (nil,nil,b)
  ~|~ \exists x,up. x \mapsto (nil,nil,up) * \TREE^{aux}_{pp}(t,b,up,x)\\

  \TREE^{aux}_{pp}(t,b,x,d) & \equiv & \exists r. x\mapsto (d,r,b) 
    * \TREE_{pp}(r,x) ~\wedge~ x=t \\
  & | & \exists l. x\mapsto (l,d,b) * \TREE_{pp}(l,x) ~\wedge~ x=t \\
  & | & \exists r,up. x\mapsto (d,r,up) * \TREE^{aux}_{pp}(t,b,up,x) 
    * \TREE_{pp}(r,x) \\
  & | & \exists l,up. x\mapsto (l,d,up) * \TREE^{aux}_{pp}(t,b,up,x) 
    * \TREE_{pp}(l,x) \\

  \TLL_{pp}(r,p,ll,lr) & \equiv & r \mapsto (nil,nil,p,lr) ~\wedge~ r = ll \\ 
  & | & \exists x,y,z.~ r \mapsto (x,y,p,nil) * \TLL_{pp}(x,r,ll,z) *
  \TLL_{pp}(y,r,z,lr)\\

  \TLL^{rev}_{pp}(t,p,ll,lr)& \equiv & ll\mapsto(nil,nil,p,lr) ~\wedge~ ll=t \\
  & | & \exists up,z. ll\mapsto(nil,nil,up,z) * \TLL^{aux}_{pp}(t,p,up,ll,z,lr)
  \\

  \TLL^{aux}_{pp}(t,p,x,d,z,lr)& \equiv & \exists r. x\mapsto (d,r,p,nil) *
  \TLL_{pp}(r,x,z,lr) ~\wedge~ x=t\\
  & | & \exists r,up,q. x\mapsto (d,r,up,nil) * \TLL^{aux}_{pp}(t,p,up,x,q,lr) *
  \TLL_{pp}(r,x,z,q) \\

\end{array}\]

  \vspace*{-4mm}
 
  \caption{Inductive definitions of predicates used in experiments.}

  \vspace*{-2mm}
 
  \label{fig:indDefForExp}

\end{figure}
\fi

%%%%%%%%%%%%%%%%%%%%%%%%%%%%%%%%%%%%%%%%%%%%%%%%%%%%%%%%%%%%%%%%%%%%%%%%%%%%%%%
\end{document}
%%%%%%%%%%%%%%%%%%%%%%%%%%%%%%%%%%%%%%%%%%%%%%%%%%%%%%%%%%%%%%%%%%%%%%%%%%%%%%%